\documentclass[11pt,a4paper]{article}

\usepackage{hyperref}	
\usepackage{array}
\usepackage{algorithm}
\usepackage{algpseudocode}
\usepackage{authblk}
\usepackage{bbm}
\usepackage[margin=1in]{geometry}
\usepackage[nodayofweek]{datetime}
\usepackage{amsfonts}
\usepackage{amsmath,amssymb}
\usepackage[amsmath,thmmarks,amsthm]{ntheorem}
\usepackage{mathtools}
\usepackage{enumitem}
\usepackage{cleveref}
\usepackage{microtype}
\usepackage{thm-restate}
\usepackage[normalem]{ulem}
\usepackage{pdfpages}
\usepackage{tikz}
\usepackage{xcolor}

\usetikzlibrary{backgrounds,calc,decorations.pathreplacing,shapes}

\setenumerate[1]{label={(\alph{enumi})}} 

\algnewcommand\algorithmicforeach{\textbf{for each}}
\algdef{S}[FOR]{ForEach}[1]{\algorithmicforeach\ #1\ \algorithmicdo}

\newcommand{\ADV}{\textit{ADV}}
\newcommand{\cost}{\textit{cost}}

\newcommand{\sint}{s\mathrlap{'}_A}
\newcommand{\OPT}{\textit{OPT}}

\newtheorem{theorem}{Theorem}

\newtheorem{observation}[theorem]{Observation}

\newtheorem{claim}[theorem]{Claim}
\newtheorem{corollary}[theorem]{Corollary}
\newtheorem{lemma}[theorem]{Lemma}

\newcommand{\vShort}[1]{\relax}
\newcommand{\vFull}[1]{#1}

\DeclarePairedDelimiter{\inte}{\lfloor}{\rceil}

\title{The Online $k$-Taxi Problem\thanks{Partially supported by the ERC Advanced Grant 321171 (ALGAME) and by EPSRC.}
}

\author{Christian Coester}
\author{Elias Koutsoupias}
\affil{Department of Computer Science, University of Oxford\\ \texttt{\{christian.coester,elias.koutsoupias\}@cs.ox.ac.uk}}

\begin{document}
	
\setcounter{page}{0}

\date{}
\maketitle

\begin{abstract}
We consider the online $k$-taxi problem, a generalization of the $k$-server problem, in which $k$ taxis serve a sequence of requests in a metric space. A request consists of two points $s$ and $t$, representing a passenger that wants to be carried by a taxi from $s$ to $t$. The goal is to serve all requests while minimizing the total distance traveled by all taxis. The problem comes in two flavors, called the easy and the hard $k$-taxi problem: In the easy $k$-taxi problem, the cost is defined as the total distance traveled by the taxis; in the hard $k$-taxi problem, the cost is only the distance of empty runs.

The hard $k$-taxi problem is substantially more difficult than the easy version with at least an exponential deterministic competitive ratio, $\Omega(2^k)$, admitting a reduction from the layered graph traversal problem. In contrast, the easy $k$-taxi problem has exactly the same competitive ratio as the $k$-server problem. We focus mainly on the hard version. For hierarchically separated trees (HSTs), we present a memoryless randomized algorithm with competitive ratio $2^k-1$ against adaptive online adversaries and provide two matching lower bounds: for arbitrary algorithms against adaptive adversaries and for memoryless algorithms against oblivious adversaries. Due to well-known HST embedding techniques, the algorithm implies a randomized $O(2^k\log n)$-competitive algorithm for arbitrary $n$-point metrics. This is the first competitive algorithm for the hard $k$-taxi problem for general finite metric spaces and general $k$. For the special case of $k=2$, we obtain a precise answer of $9$ for the competitive ratio in general metrics. With an algorithm based on growing, shrinking and shifting regions, we show that one can achieve a constant competitive ratio also for the hard $3$-taxi problem on the line (abstracting the scheduling of three elevators).
\end{abstract}
\clearpage

\section{Introduction}

The $k$-taxi problem, originally proposed by Karloff and introduced by Fiat et al. \cite{FiatRR90}, is a natural generalization of the fundamental $k$-server problem. In this problem, $k$ taxis are located in a metric space and need to serve a sequence $\sigma$ of requests. A request is a pair $(s,t)$ of two points in the metric space, representing a passenger that wants to travel from $s$ to $t$. A taxi serves the request by first moving to $s$ and then to $t$. The goal is to serve all requests in order while minimizing the total distance traveled by all taxis. We consider the online version of this problem where requests appear one by one, i.e., a new request is revealed only after the previous request has been served.

Since the distance from $s$ to $t$ needs to be traveled anyway --- independently of the algorithm's decisions --- it makes sense to exclude it from the cost and minimize only the overhead travel that actually depends on the algorithm choices. This is precisely what the hard $k$-taxi problem does: In the hard version, the cost is defined as the distance traveled while not carrying a passenger, i.e., the overhead distance traveled on top of the distances between the start-destination pairs. In contrast, for the easy $k$-taxi problem, the cost is the total distance traveled by the taxis. Thus, the cost of any taxi schedule differs by exactly the sum of the $s$-$t$-distances between the two versions, and in particular, the optimal offline solutions are the same for both versions. However, the different cost functions make it more difficult to approximate the optimal solution value of the hard version. Fiat et al.~\cite{FiatRR90} pointed out the two versions of this problem, and they were called easy taxicab problem and hard taxicab problem by Kosoresow~\cite{Kosoresow96}.

The problem was recently reintroduced as the Uber problem in \cite{DehghaniEHLS17}, which studied the easy version of the problem with the input being produced in a stochastic manner. Here we consider the adversarial case. This worst-case analysis is arguably useful for developing algorithms and improving our understanding of a problem, even when there is sufficiently large collected data to allow us to treat the problem as a stochastic one, which arguably is the case with Uber, Lyft etc.

Besides scheduling taxis, the $k$-taxi problem also models other tasks such as scheduling elevators (in which case the metric space is the line; see Section~\ref{sec:Line}) or transport vehicles in a factory, and other applications where people or objects need to be transported between locations.

\subsection{Previous results and related work}

The first competitive algorithm for the easy $k$-taxi problem was given by Fiat et al.~\cite{FiatRR90} when they introduced the problem, with a competitive ratio exponential in $k$. Following the finding of a $(2k-1)$-competitive algorithm for the $k$-server problem \cite{KoutsoupiasP95}, the competitive ratio of the easy $k$-taxi problem was improved to $2k+1$: Kosoresow~\cite{Kosoresow96} showed that if there is a $c$-competitive algorithm for the $k$-server problem, then there is a $(c+2)$-competitive algorithm for the easy $k$-taxi problem. This result was also established in \cite{DehghaniEHLS17} with a similar reduction.

For the hard $k$-taxi problem, Fiat et al.~\cite{FiatRR90} mentioned a competitive algorithm by Karloff for $k=2$. Based on Karloff's algorithm, Kosoresow~\cite{Kosoresow96} gave a $15$-competitive algorithm for $k=2$. No competitive algorithm is known for $k>2$.

The $k$-server problem \cite{ManasseMS88}, one of the most studied online problems, is the special case of the (easy and hard) $k$-taxi problem where for each request, start and destination are identical. Thus, the lower bound of $k$ on the competitive ratio of the $k$-server problem \cite{ManasseMS88} immediately implies the same lower bound for the $k$-taxi problem. According to the famous \emph{$k$-server conjecture}, this bound is tight for the $k$-server problem, yet the best known upper bound remains $2k-1$ \cite{KoutsoupiasP95}. The \emph{randomized $k$-server conjecture} states that a competitive ratio of $O(\log k)$ can be achieved by randomized algorithms on any metric space, and there has been tremendous progress on this question recently \cite{BansalBMN11,BubeckCLLM18,Lee18}. More information on the $k$-server problem is presented in \cite{Koutsoupias09}. Besides the $k$-taxi/Uber problem, recent work on other variants of the $k$-server problem include the $(h,k)$-server problem~\cite{BansalEJKP15,BansalEJK17}, the infinite server problem~\cite{CoesterKL17} and the weighted $k$-server problem~\cite{BansalEK17}.

The layered graph traversal problem, first introduced in~\cite{PapadimitriouY91}, is another deep problem in online computation and known to be equivalent to the metrical service systems problem \cite{FiatFKRRV98}. As we will see, the $k$-taxi problem generalizes not only the $k$-server problem but also the (deterministic) layered graph traversal problem. The best known bounds on the competitive ratio of layered graph traversal (for graphs of width $k$) are $\Omega(2^k)$ \cite{FiatFKRRV98} and $O(k2^k)$ \cite{Burley96} in the deterministic case, and $O(k^{13})$ and $\Omega\left(\frac{k^2}{\log^{1+\epsilon}(k)}\right)$ in the randomized case~\cite{Ramesh95}.

\subsection{Our results}
For the hard $k$-taxi problem, we show tight bounds on HSTs (defined in Section~\ref{subsec:prelim}) in two settings: for randomized algorithms against adaptive online adversaries, and for memoryless randomized algorithms against oblivious adveraries:
\begin{theorem}\label{thm:HSTs}
	There is a $(2^k-1)$-competitive memoryless randomized algorithm for the hard $k$-taxi problem on HSTs against adaptive online adversaries. This bound is tight in two senses: Any randomized algorithm $A$ for the hard $k$-taxi problem on HSTs has competitive ratio at least $2^k-1$ against adaptive online adversaries. If $A$ is memoryless, then its competitive ratio is at least $2^k-1$ even against oblivious adversaries.
\end{theorem}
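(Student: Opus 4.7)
The theorem splits into an upper bound for a specific memoryless algorithm and two matching lower bounds. I would prove the three parts separately, exploiting the recursive HST structure throughout. For the upper bound, the natural design of a memoryless algorithm on an HST is recursive: upon a request $(s,t)$, the algorithm selects a taxi by descending from the root, at each internal node taking the child toward $s$ if that child's subtree contains a taxi and otherwise picking (say uniformly) one of the sibling subtrees that does, then continuing the descent until a leaf-taxi is found; that taxi moves to $s$ (paying the HST distance) and delivers to $t$ (at no cost, in the hard version). Memorylessness holds because the choice depends only on the current configuration. The geometric series $1+2+4+\cdots+2^{k-1}=2^k-1$ should emerge from a level-by-level accounting in which, at each HST level above $s$, the expected \ALG contribution is at most twice the contribution of the next-deeper level.

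For the analysis I would use a potential-function argument. A natural candidate is a sum $\Phi=\sum_v w_v\cdot g(x_v,y_v)$ over internal nodes $v$ of the HST, where $w_v$ scales with the edge length above $v$ and $x_v$, $y_v$ count the numbers of \ALG- and \ADV-taxis in the subtree rooted at $v$. The function $g$ must be chosen so that $\Delta\Phi$ absorbs \ALG's expected movement cost for each request while growing by at most $(2^k-1)\cdot\Delta\ADV$ when \ADV moves a taxi. Because the adversary is adaptive online, this inequality must hold pointwise at each \ADV move and in expectation at each \ALG move. The main technical obstacle is pinning down $g$ so that the constants resolve to exactly $2^k-1$; a doubling-per-level structure suggests an expression of the form $g(x,y)=2^{\max(y-x,0)}-1$ or a close variant, which compounds multiplicatively through the recursion.

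For the lower bounds I would use inductive adversarial constructions on a balanced HST whose level-$i$ edge length dwarfs the total of all lengths at deeper levels. Against an adaptive online adversary, after observing which subtree \ALG has just drained of a taxi, the adversary directs the next block of requests into that subtree and continues recursively one level deeper; the expected \ALG cost at level $i$ should then be at least twice its cost at level $i+1$, and summing over the $k$ levels yields the ratio $2^k-1$ against a clever offline rearrangement. For memoryless algorithms against oblivious adversaries, the same construction works without real-time adaptation, because the algorithm's response distribution is a known function of the configuration: the oblivious adversary can precompute the distribution over post-request configurations and plant the next request where \ALG is most likely to be weak. The main difficulty in both lower bounds is matching the constant $2^k-1$ tightly, which requires careful accounting of \ADV's offline cost so that no additive slack is lost in the recursion.
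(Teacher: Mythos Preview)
Your lower-bound sketches are in the same spirit as the paper's: both use recursive constructions on binary $\alpha$-HSTs with well-separated levels, the adaptive adversary traps a passive taxi on the wrong side and keeps requesting in the other subtree until \ALG\ brings it across, and the memoryless lower bound replays this obliviously since the response distribution is a known function of the configuration. The paper's executions are more delicate than your outline (phases, three competing offline strategies whose minimum is the adversary cost, and for the memoryless case a case split on whether the crossing probability $p_{\ell r}+p_{r\ell}$ exceeds a threshold), but your plan points the right way.

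The upper bound is where your proposal diverges from the paper and where I see a real gap. The paper's algorithm \textsc{Flow} is \emph{not} ``descend toward $s$ and branch uniformly if that child is empty.'' It treats the Steiner tree of $s$ and the taxi leaves as an electrical network, pushes a unit of current from $s$ to the taxis, and serves with the taxi at $t_i$ with probability equal to the current reaching $t_i$; at every branching node the split is by inverse resistance of the two subnetworks, so \textsc{Flow} can choose a more distant taxi with positive probability even when a nearer one exists. Your rule is essentially a nearest-taxi policy with random tie-breaking, which is deterministic in the generic case---and against an adaptive adversary that determinism is exactly what the $2^k-1$ construction exploits. I do not see how your algorithm attains $2^k-1$.

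The potentials also differ. The paper uses simply $(2^k-1)$ times the minimum matching between the online and adversary configurations; the entire weight of the argument is carried by the identity $m(A)=c(A)-2R_A$ (expected matching change equals expected cost minus twice the network resistance), which converts the potential inequality into the resistance inequality $2^{\kappa(N)-1}c(N)\le (2^{\kappa(N)}-1)R_N$ and is then proved by induction on $\kappa$. Your level-sum potential $\sum_v w_v\,g(x_v,y_v)$ with $g(x,y)=2^{\max(y-x,0)}-1$ is a plausible shape, but you yourself flag ``pinning down $g$'' as the obstacle, and without the resistance framework there is no obvious mechanism that makes the constants collapse to exactly $2^k-1$. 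The paper's route---electrical flow plus the matching potential plus the $c-2R$ identity---is the missing idea.
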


Since the randomized competitive ratio against adaptive online adversaries is at most a square-root better than the deterministic competitive ratio for any online problem~\cite{Ben-DavidBKTW94}, this also implies a $4^k$-competitive deterministic algorithm for HSTs. More importantly, thanks to known probabilistic approximation of general $n$-point metrics by HSTs with distortion $O(\log n)$ \cite{FakcharoenpholRT03}, we obtain the first competitive algorithm for the hard $k$-taxi problem on general finite metrics:
\begin{corollary}\label{cor:general}
	There is an $O(2^k\log n)$-competitive randomized algorithm for the hard $k$-taxi problem on metric spaces of $n$ points.
\end{corollary}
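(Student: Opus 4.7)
The plan is to reduce the general case to the HST case via the probabilistic tree embedding of Fakcharoenphol, Rao and Talwar \cite{FakcharoenpholRT03}. That result guarantees that any $n$-point metric $(M,d)$ admits a distribution over HSTs $T$ (with leaf set $M$) satisfying $d(x,y) \leq d_T(x,y)$ for all $x,y \in M$ and $\mathbb{E}[d_T(x,y)] \leq O(\log n)\cdot d(x,y)$, where $d_T$ denotes the induced tree metric and the expectation is over the random choice of $T$.

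First I would sample one such tree $T$ before the first request is revealed, and then simulate the memoryless $(2^k-1)$-competitive HST algorithm of Theorem \ref{thm:HSTs} internally on $T$. Each movement of a taxi from $x$ to $y$ prescribed by that algorithm on $T$ is executed as a direct move from $x$ to $y$ in $M$. By the dominance property $d(x,y) \leq d_T(x,y)$, the hard-taxi cost incurred in $M$ is at most the hard-taxi cost incurred by the HST algorithm on $T$, which by Theorem \ref{thm:HSTs} is at most $(2^k-1)\cdot \OPT_T$ in expectation over the internal randomness of the algorithm, where $\OPT_T$ denotes the optimum hard-taxi cost of the request sequence on $T$.

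Next I would relate $\OPT_T$ to the optimum $\OPT$ on $M$. Any offline schedule for $M$ is also a feasible schedule on $T$, and its hard-taxi cost on $T$ is the sum of $d_T$-lengths of its empty runs. Applying linearity of expectation to the FRT distortion bound yields $\mathbb{E}[\OPT_T] \leq O(\log n) \cdot \OPT$. Combining the two estimates gives an expected algorithm cost of at most $(2^k-1)\cdot \mathbb{E}[\OPT_T] = O(2^k \log n)\cdot \OPT$, as claimed.

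There is no substantial technical obstacle; the argument is a routine composition of Theorem \ref{thm:HSTs} with the FRT reduction. The one subtlety worth noting is the adversary model: because Theorem \ref{thm:HSTs} provides competitiveness against adaptive online adversaries on HSTs, the reduction yields competitiveness against adaptive online (and in particular oblivious) adversaries on $M$, even though the HST $T$ is drawn once at the outset and then reused for the entire request sequence.
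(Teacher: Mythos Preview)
Your main argument is correct and is exactly the paper's approach: the paper simply invokes the FRT embedding together with Theorem~\ref{thm:HSTs} to obtain the corollary, just as you do.

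Your last paragraph, however, overreaches. The reduction does \emph{not} yield competitiveness against adaptive online adversaries on $M$. An adaptive adversary observes the sampled tree $T$ and can thereafter choose both the request sequence and its own moves so that the distances involved have large distortion in this particular $T$. The FRT guarantee $\mathbb{E}_T[d_T(x,y)]\le O(\log n)\,d(x,y)$ applies only to pairs fixed before $T$ is drawn; once the adversary's schedule depends on $T$, linearity of expectation no longer gives $\mathbb{E}[\cost_{\ADV,T}]\le O(\log n)\,\mathbb{E}[\cost_{\ADV,M}]$, and the chain of inequalities breaks. The corollary is stated (per the paper's convention) for oblivious adversaries, and your argument establishes exactly that; in fact, for the oblivious bound you do not even need the adaptive guarantee on HSTs---the oblivious guarantee on $T$ already suffices, since the request sequence is fixed before $T$ is sampled.
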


For deterministic algorithms, we show a reduction from the layered width-$k$ graph traversal problem ($k$-LGT) to the hard $k$-taxi problem.
\begin{theorem}\label{thm:red}
	If there exists a $\rho$-competitive deterministic algorithm for the hard $k$-taxi problem, then there exists a $\rho$-competitive deterministic algorithm for $k$-LGT. In particular, the deterministic competitive ratio of the hard $k$-taxi problem is $\Omega(2^k)$.
\end{theorem}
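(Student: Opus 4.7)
Our plan is to reduce $k$-LGT to the hard $k$-taxi problem, so that any $\rho$-competitive deterministic algorithm $A$ for the latter yields a $\rho$-competitive deterministic algorithm $B$ for the former; the $\Omega(2^k)$ bound will then follow immediately from the known $\Omega(2^k)$ lower bound for deterministic $k$-LGT \cite{FiatFKRRV98}.

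Given a $k$-LGT instance $I$ on a layered graph $G$ of width $k$, we would construct an online hard $k$-taxi instance $J$ whose metric is the shortest-path metric of $G$, initialize all $k$ taxis at the source, and, as the adversary in $I$ reveals a new layer $L_{i+1}$ together with its incoming edges, translate those edges into a carefully chosen sequence of hard $k$-taxi requests of the form $(u,v)$ with $u \in L_i$ and $v \in L_{i+1}$, designed so that after processing them the $k$ taxis of $A$ collectively cover $L_{i+1}$. The $k$-LGT algorithm $B$ is then defined by simulating $A$ on $J$ and, at each layer step, moving $B$'s token to follow a designated taxi of $A$ --- for instance, the taxi that most recently served a request ending in the newest layer.

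The reduction will then hinge on two inequalities: $\cost(B, I) \le \cost(A, J)$, which we plan to establish from the tracking rule by pairing every movement of $B$'s token with an empty run of some taxi of $A$; and $\OPT(J) \le \OPT(I)$, which we plan to establish by noting that any optimal source-to-sink path in $I$ can be simulated in $J$ by a single taxi that uses the free passenger transports along the path, with the remaining $k-1$ taxis used to serve the auxiliary within-layer requests at cost absorbable into the layered structure. Composing these with the $\rho$-competitiveness of $A$ gives $\cost(B, I) \le \cost(A, J) \le \rho \cdot \OPT(J) \le \rho \cdot \OPT(I)$, so $B$ is $\rho$-competitive for $k$-LGT.

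The hardest part will be reconciling the two cost measures: the hard $k$-taxi problem charges only for empty runs, whereas $k$-LGT charges for every movement of its token. The construction of the request sequence and of the tracking rule for $B$ must therefore be set up so that every movement of $B$'s token corresponds to an empty run of $A$ rather than a free passenger transport, and so that $\OPT(J)$ does in fact remain bounded by $\OPT(I)$; this requires care in the mapping from each layer's edges to requests and in how $B$ chooses which taxi to follow when multiple candidates lie in the new layer.
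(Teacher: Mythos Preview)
Your plan has the right overall shape (build a $k$-taxi instance online from the layered graph, simulate $A$, and drive the searcher by a taxi of $A$), but it is missing the key idea that makes the two cost measures line up, and as written the first inequality $\cost(B,I)\le\cost(A,J)$ will fail.

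Concretely: you propose to issue requests $(u,v)$ with $u\in L_i$, $v\in L_{i+1}$, so that after the batch the taxis cover $L_{i+1}$. But if the taxis already cover $L_i$ (your own invariant), then $A$ can serve every such request with the taxi that is already at $u$, incurring \emph{zero} empty-run cost. Meanwhile $B$'s searcher still has to pay the edge weight to advance from $L_i$ to $L_{i+1}$. So $\cost(B,I)$ can be arbitrarily larger than $\cost(A,J)$, and the chain $\cost(B,I)\le\cost(A,J)\le\rho\,\OPT(J)\le\rho\,\OPT(I)$ breaks at the first link. Your closing remark that ``every movement of $B$'s token must correspond to an empty run of $A$ rather than a free passenger transport'' identifies exactly the obstacle, but the forward-request scheme you describe does the opposite.

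The paper's reduction fixes this by reversing the direction of the costly taxi move. After relocating the taxis (for free) so that they cover the just-revealed layer $L_\ell$, it issues a \emph{simple} request at the searcher's current position $s_{\ell-1}\in L_{\ell-1}$. Now $A$ must send some taxi from a node $x\in L_\ell$ \emph{back} to $s_{\ell-1}$, and that empty run costs $d(x,s_{\ell-1})$. The searcher then moves \emph{forward} from $s_{\ell-1}$ to $x$, paying exactly the same distance. This backward-request trick is what gives $\cost(B,I)=\cost(A,J)$ exactly. It also makes the offline bound $\OPT(J)\le\OPT(I)$ clean: the offline taxi algorithm can always keep one taxi at the searcher's current position, so every simple request is served for free, and one shows inductively that reaching the configuration ``$C_\ell$ with one taxi swapped to $s_{\ell-1}$'' costs at most $d(s_0,y)$ for each $y\in L_\ell$. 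Your proposed offline argument (``a single taxi uses free transports along the optimal path; the other $k-1$ absorb the rest'') does not go through, because the start points $u$ of your requests are fixed by the reduction and need not lie on the optimal path.

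Two further points you would need to address even after adopting the backward-request idea: the metric of $G$ is not known in advance (the graph is revealed online), so you cannot take ``the shortest-path metric of $G$'' as the taxi metric; the paper first invokes the reduction of $k$-LGT to $0$--$1$ weighted trees and then embeds online into a fixed infinite tree. And you must specify how the relocation requests between layers are chosen so that the invariant (taxis cover the new layer) is restorable; this is where the $0$--$1$ tree assumption (non-searcher nodes have at most one child, via a weight-$0$ edge) is used.
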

This also shows that the hard $k$-taxi problem is substantially more difficult than the $k$-server problem.

Although the lower bound of $\Omega(2^k)$ also follows already from the first lower bound in Theorem~\ref{thm:HSTs}, the relation between the hard $k$-taxi problem and $k$-LGT is of independent interest, as it reveals an interesting connection between these two problems; a connection which can be potentially exploited to extend to lower bounds for randomized algorithms.

For the special case $k=2$, we improve the previous upper bound of $15$ \cite{Kosoresow96} and give tight bounds of $9$ on general metrics:
\begin{theorem}\label{thm:k=2}
	The deterministic competitive ratio of the hard $2$-taxi problem is exactly 9.
\end{theorem}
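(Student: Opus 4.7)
The theorem claims two things, and the plan is to prove them separately: a deterministic $9$-competitive algorithm on arbitrary metrics (upper bound) and a matching lower bound against any deterministic algorithm.

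For the upper bound, I plan to design a deterministic rule and analyze it via a potential function in the style of classical $k$-server arguments. On a request $(s,t)$, the algorithm must send some taxi to $s$ and may optionally reposition the other taxi; the main design decision is whether and when to move both taxis. I would try a rule of the form ``serve the request with the taxi minimizing $\cost_{\ALG}+\lambda\,\Delta\Phi$'' for a potential $\Phi$ combining (i) the minimum-cost matching between the algorithm's taxi configuration and the adversary's configuration, and (ii) an appropriate multiple of the distance between the algorithm's own two taxis. Tuning the constants and the rule, I expect an amortized inequality $\cost_{\ALG}+\Delta\Phi\le 9\cdot\cost_{\ADV}$ to hold at every request, yielding $9$-competitiveness by summing over the sequence. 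The improvement over Kosoresow's $15$ should come from sometimes also shifting the non-serving taxi, which reduces the worst ratio in the critical cases of the analysis.

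For the lower bound, the plan is to exhibit an adversary that, on some small fixed metric, forces any deterministic algorithm to pay at least $9$ times the offline optimum (up to an additive constant). I would look for a construction on a metric with a constant number of points, arranged so that in each ``phase'' the adversary first issues a request whose loaded segment is free but whose destination effectively commits the algorithm to a particular taxi placement, and then punishes that commitment with a short follow-up request. Choosing distances so that the optimum offline cost per phase is some small fixed quantity $\epsilon$ while every online choice produces empty-move cost at least $9\epsilon$, and iterating for many phases, drives the competitive ratio to $9$. To certify tightness, I would pair the construction with an offline schedule matching the $\epsilon$ bound per phase, and a game-tree enumeration of the algorithm's branching decisions within a phase.

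The main obstacle is getting both bounds to meet exactly at $9$. On the upper-bound side, the asymmetry between empty (costly) and loaded (free) moves, which is absent in the $2$-server problem, makes the potential more delicate than the classical matching potential: one has to track where a taxi ends up after serving $(s,t)$, not merely the cost of getting it to $s$. The case analysis on the relative positions of $s$, $t$, and the two taxi locations must close at ratio exactly $9$ in every case, which pins down the constants $\lambda$ and the weights in $\Phi$. On the lower-bound side, the challenge is to ensure that \emph{every} deterministic strategy, not just the natural ``send the closer taxi'' or ``send the farther taxi'' policies, incurs the full factor $9$; matching the upper bound's tight cases by an adversary requires that all of them be simultaneously forced inside a single phase.
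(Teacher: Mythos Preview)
Your upper-bound plan contains a concrete flaw: including the distance between your own two taxis in the potential $\Phi$ does not work for the \emph{hard} $k$-taxi problem, because relocation requests (the free $s\to t$ moves) can change that distance by an arbitrary amount at zero cost to either side, so $\Phi$ would jump uncontrollably on relocations. The paper's algorithm, \textsc{BiasedDC}, is concrete and different from your ``pick the taxi minimizing $\cost+\lambda\,\Delta\Phi$'' rule: on a simple request it moves \emph{both} taxis toward $s$ simultaneously, the passive taxi $P$ at twice the speed of the active taxi $A$ (the one co-located with an offline taxi after the previous request), stopping when either reaches $s$. The potential is simply $\Phi=3M$ with $M$ the minimum matching between online and offline configurations---no pairwise-distance term. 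Because after each request one online taxi coincides with one offline taxi, $M$ collapses to a single distance $d(p,P)$, and a two-case argument (offline serves with $a$ vs.\ with $p$) gives $\cost+\Delta\Phi\le 9\,\OPT$ directly. The $2{:}1$ speed ratio and the bare $3M$ potential are exactly what pins the constant at $9$; your proposal does not arrive at either.

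Your lower-bound plan is also off track. The paper does not build an ad hoc adversary or enumerate a game tree; instead it proves a general reduction (Theorem~\ref{thm:red}) showing that any $\rho$-competitive deterministic algorithm for the hard $k$-taxi problem yields a $\rho$-competitive deterministic algorithm for layered width-$k$ graph traversal. Since the deterministic competitive ratio of $2$-LGT is known to be exactly $9$ (Papadimitriou--Yannakakis), the lower bound of $9$ follows immediately. Your direct-adversary idea might in principle be pushed through, but making it close at exactly $9$ against \emph{every} deterministic strategy is essentially the content of the $2$-LGT lower bound; re-deriving that from scratch is substantial work, and your proposal gives no construction, only a hope that one exists.
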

For the upper bound, we present a simple modification of the well-known Double Coverage algorithm with biased taxi speeds. The lower bound follows from the reduction from $k$-LGT and the fact that the deterministic competitive ratio of $2$-LGT is exactly $9$ \cite{PapadimitriouY91}.

With a significant extension of the algorithm, we obtain the following result for three taxis when the metric space is the real line:
\begin{theorem}\label{thm:3line}
There is an $O(1)$-competitive algorithm for the hard $3$-taxi problem on the line.
\end{theorem}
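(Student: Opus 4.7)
The plan is to generalize the biased Double Coverage algorithm for the hard $2$-taxi problem to three taxis by replacing each taxi's point position by a \emph{region} (an interval on the line) that encodes which plausible future requests it is committed to handle. The actual physical positions of the three taxis will remain inside these regions, and the algorithm's decisions (which taxi to dispatch, and how far the other taxis move along with it) are driven by how requests interact with the regions. We orient the line so that at every step the taxis are indexed left-to-right, and track three regions $R_1, R_2, R_3$ ordered in the same way; a small number of scale parameters (analogous to the speed bias in the $k=2$ case) will be chosen at the end of the analysis to optimise the constants.

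The algorithm is defined by three families of updates. \emph{Growing} regions: between requests, or when a request falls outside of all three regions, regions enlarge on the side facing the request so that any future cheap adversarial strategy is anticipated. \emph{Shrinking} regions: when a new request $(s,t)$ pins down which taxi must serve it (for example, because $s$ lies strictly inside one region and far from the others), the region of the serving taxi shrinks to reflect that the taxi is now at $t$, and neighboring regions lose the part that is ruled out. \emph{Shifting} regions: when serving forces a taxi to traverse a range, its region translates along the line, and the companion taxis perform a Double-Coverage-style co-movement whose amplitude depends on whether the request lies between two regions or outside them. The key invariant maintained throughout is that the total length of the regions, plus a weighted distance term between the algorithm's physical configuration and the leftmost points of the regions, bounds the algorithm's future commitment in a way that can be charged against the optimum.

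The competitive analysis proceeds by exhibiting a potential function $\Phi$ of the form $\Phi = \alpha\,|R_1|+\beta\,|R_2|+\gamma\,|R_3| + \delta\,d(\text{taxi config},\text{anchors})$ for constants to be fixed, and showing that for every request $(s,t)$ the inequality $\ALG\text{-cost}+\Delta\Phi \le c\cdot \OPT\text{-cost}$ holds for some absolute constant $c$. For this, I would perform a case analysis according to (i) the positions of $s$ and $t$ relative to $R_1,R_2,R_3$, and (ii) whether $s$ and $t$ lie on the same side of the region containing the dispatched taxi. Each case yields a linear inequality in the parameters $\alpha,\beta,\gamma,\delta$ and the region sizes, and the parameters are chosen so that every inequality is satisfied.

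The main obstacle will be the case analysis for requests that \emph{cross} region boundaries or lie between two adjacent regions: here the identity of the dispatched taxi cannot be deduced from $s$ alone, and the regions of two taxis must be simultaneously adjusted so that neither covers a position the other is committed to. Controlling the growth of $\Phi$ in these mixed cases is what drives the design of the region updates, and matching the three-fold additive structure (three regions, each of which can grow/shrink/shift independently) to a single potential function that absorbs every case is where the bulk of the technical work lies. Once the invariants are in place, the amortised inequality implies $O(1)$-competitiveness; optimising the constants is a finite linear program and is left as a calculation.
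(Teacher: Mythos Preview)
Your high-level picture --- intervals around the taxis that grow, shrink and shift, with biased co-movement towards requests --- matches the paper's algorithm \textsc{RegionTracker}. But what you have written is a plan, not a proof, and the plan has a structural gap.

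The potential you propose, $\Phi = \alpha|R_1|+\beta|R_2|+\gamma|R_3| + \delta\,d(\text{taxi config},\text{anchors})$, depends only on the \emph{online} state (region lengths and the algorithm's own positions). A potential that never sees the offline configuration cannot establish $\ALG+\Delta\Phi\le c\cdot\OPT$: when the adversary already sits at the request and $\OPT=0$ for that step, something in $\Phi$ must \emph{decrease} to pay for the algorithm's movement, and that decrease must come from a term that measures discrepancy between online and offline positions. The paper's potential has exactly this: its main part $\Psi$ is a weighted matching $\sum_i\int w_i(z)\,dz$ between the online taxis $x_i$ and the offline taxis $y_i$, with weights that depend on where $z$ sits relative to the interval boundaries. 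The second part $\Sigma$ is not a sum of region lengths either; it is a truncated distance between the two passive online taxis and is designed to be invariant under relocation requests (a sum of interval lengths would not be). Without an offline-dependent term your amortised inequality fails already in the simplest case: a simple request at an offline taxi's location that the algorithm must travel to.

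Beyond this, the proposal defers all the actual content: you have not specified the movement rates (the paper needs a table of seven cases with carefully tuned speeds $b,c$), nor verified that the resulting system of inequalities is feasible. Saying ``the parameters are chosen so that every inequality is satisfied'' and ``left as a calculation'' is precisely the hard part here --- the paper's analysis shows that feasibility is delicate (e.g.\ the relation $\gamma=\frac{2b+1}{2b}\psi$ is forced), and an arbitrary region-update scheme will not admit a working potential. To turn this into a proof you must at minimum (i) write down the algorithm concretely, (ii) include the offline positions in $\Phi$, and (iii) actually carry out the case analysis.
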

Our algorithm achieving this upper bound keeps track of regions around the taxis where they move more aggressively. These regions are continuously expanded, shrunk and shifted while the algorithm serves requests.

Lastly, we show the easy $k$-taxi problem to be exactly equivalent to the $k$-server problem, tightening the old result by removing the ``$+2$'' term:
\begin{restatable}{theorem}{easy} \label{thm:easy}
	The easy $k$-taxi problem has the same deterministic/randomized competitive ratio as the $k$-server problem. In particular, the deterministic competitive ratio of the easy $k$-taxi problem is between $k$ and $2k-1$, and it is $k$ if and only if the $k$-server conjecture holds.
\end{restatable}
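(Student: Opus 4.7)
The theorem is an equality of competitive ratios, which I would prove by two inequalities in opposite directions.

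The easier direction $\rho_{k\text{-server}} \le \rho_{\text{easy } k\text{-taxi}}$ is essentially a tautology: the $k$-server problem is the restriction of easy $k$-taxi to request sequences with $s_i = t_i$ for all $i$, and in this degenerate case the carrying legs have length $d(s_i,t_i)=0$ so the taxi cost equals the server cost. Hence any (deterministic or randomized) $c$-competitive easy $k$-taxi algorithm restricts to a $c$-competitive $k$-server algorithm, transferring both the upper bounds and the known lower bounds, for both the deterministic and the randomized competitive ratios.

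For the nontrivial direction $\rho_{\text{easy } k\text{-taxi}} \le \rho_{k\text{-server}}$, the plan is to convert a $c$-competitive $k$-server algorithm $\mathcal{A}$ into a $c$-competitive easy $k$-taxi algorithm $\mathcal{B}$ via a tight reduction that improves on Kosoresow's $c+2$ bound. The algorithm $\mathcal{B}$ feeds $\mathcal{A}$ the doubled request sequence $\rho' = s_1, t_1, s_2, t_2, \ldots$ and translates each of $\mathcal{A}$'s server moves into a corresponding taxi move, maintaining the invariant that its taxi positions agree with $\mathcal{A}$'s server positions as a multiset and that for each pair $(s_i,t_i)$ the same taxi serves both endpoints. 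When $\mathcal{A}$ uses the same server for $s_i$ and $t_i$ this is immediate; when it uses different servers $\alpha$ and $\beta$, we exploit that server configurations are multisets (servers are interchangeable) and permute labels on the fly---letting the taxi at $s_i$ carry the passenger to $t_i$ while redirecting the other taxi from $\beta$'s old position $q$ to $s_i$, and then relabeling $\alpha \leftrightarrow \beta$ to match $\mathcal{A}$'s multiset.

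Granted such a construction, the analysis is short: $\cost(\mathcal{B}) \le \cost(\mathcal{A}) \le c \cdot \OPT_{k\text{-server}}(\rho') \le c \cdot \OPT_{\text{easy } k\text{-taxi}}(\rho)$, where the last inequality is the standard observation that every easy $k$-taxi schedule is a valid $k$-server schedule for $\rho'$ of the same cost. The final claim---that the ratio equals $k$ iff the $k$-server conjecture holds---follows immediately from this equality of ratios and the known bounds $k \le \rho_{k\text{-server}} \le 2k-1$.

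The main obstacle is precisely the different-servers case. Naively following $\mathcal{A}$'s two moves and then tacking on a separate carrying leg $s_i \to t_i$ leaves two taxis at $t_i$ and none at $s_i$, so the multisets disagree; patching the multiset by an extra detour costs up to $2d(s_i,t_i)$ per request and yields only Kosoresow's $c+2$ bound. The improvement relies on using the permutation-invariance of the server configuration to charge both the detour movement $q \to s_i$ and the carrying leg $s_i \to t_i$ against $\mathcal{A}$'s two server moves $p \to s_i$ and $q \to t_i$, so that the per-pair cost of $\mathcal{B}$ matches that of $\mathcal{A}$ without any net overhead.
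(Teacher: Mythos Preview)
Your reduction has a genuine gap in the different-servers case, and it is precisely the triangle inequality going the wrong way. Suppose $\mathcal{A}$ moves server $\alpha$ from $p$ to $s_i$ (cost $d(p,s_i)$) and then server $\beta$ from $q$ to $t_i$ (cost $d(q,t_i)$). Your $\mathcal{B}$ moves a taxi $p\to s_i\to t_i$ and another taxi $q\to s_i$, for total cost
\[
d(p,s_i)+d(s_i,t_i)+d(q,s_i).
\]
For $\cost(\mathcal{B})\le\cost(\mathcal{A})$ you would need $d(s_i,t_i)+d(q,s_i)\le d(q,t_i)$, which is the \emph{reverse} of the triangle inequality. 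A concrete failure: on the line with taxis at $p=-1$ and $q=2$ and request $(s_i,t_i)=(0,2)$, $\mathcal{A}$ pays $1+0=1$ while your $\mathcal{B}$ pays $1+2+2=5$. The relabeling trick cannot help here: any valid taxi service of $(s_i,t_i)$ must pay the carrying leg $d(s_i,t_i)$, whereas $\mathcal{A}$ may avoid it entirely if it already has a server at $t_i$. The overhead is exactly what produces Kosoresow's ``$+2$'', and your relabeling does not remove it.

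The paper circumvents this by changing the simulated $k$-server instance rather than post-processing $\mathcal{A}$'s moves: instead of the two requests $s_i,t_i$, it issues $2kN+1$ equally spaced requests along a shortest $s_i$--$t_i$ path (adding virtual points to the metric if necessary). A lazy $k$-server algorithm can switch servers at most $k-1$ times along this path, and each switch costs at most one spacing $\frac{d(s_i,t_i)}{2kN}$ to repair by a local swap. Hence $\mathcal{B}$'s overhead per request is at most $\frac{d(s_i,t_i)}{N}$, summing to at most $\frac{1}{N}\OPT(\sigma_{\text{taxi}})$, and one obtains a $(\rho+\frac{1}{N})$-competitive taxi algorithm for every $N$. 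The key idea you are missing is that the simulation must \emph{force} $\mathcal{A}$ to pay (almost) the full $s_i$--$t_i$ distance with a single server; merely requesting the two endpoints does not.
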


\subsection{Preliminaries}\label{subsec:prelim}

Let $(M,d)$ be a metric space. A \emph{configuration} is a multiset of $k$ points in $M$, representing the positions of $k$ taxis. In the \emph{easy $k$-taxi problem} we are given an initial configuration and a sequence of requests $\sigma=(r_1,\dots,r_n)$ where $r_i=(s_i,t_i)\in M^2$. An algorithm must move the taxis so as to serve these requests in order. A taxi serves request $r_i$ by moving first to the start $s_i$ of the request and then to the destination $t_i$. An online algorithm has to make this decision without knowledge of the future requests. The cost is defined as the total distance traveled by all taxis.

The \emph{hard $k$-taxi problem} is defined in the same way except that the movement of the serving taxi from $s_i$ to $t_i$ is not counted towards the cost. Thus, the cost comprises only the overhead distance traveled while not carrying a passenger.

We write $A(C,\sigma)$ to refer to the run of algorithm $A$ on request sequence $\sigma$ starting from initial configuration $C$. The corresponding sequence of configurations is called a \emph{schedule}. By $\cost_A(C,\sigma)$ we denote its cost, although we will often omit the initial configuration from the notation and only write $\cost_A(\sigma)$. An algorithm is \emph{memoryless} if each decision depends only on the current configuration and the current request, but not the past configurations or requests.

The performance of an online algorithm is measured by comparing its cost to that of an adversary who generates a request sequence and also serves it. An algorithm $A$ is \emph{$\rho$-competitive} if $E(\cost_A(\sigma)) \le \rho E(\cost_\ADV(\sigma)) + c$ for every possible request sequence $\sigma$ generated by the adversary. Here, $\cost_A(\sigma)$ and $\cost_\ADV(\sigma)$ denote the cost of the algorithm and the adversary respectively to serve $\sigma$, and $c$ is a constant independent of $\sigma$. We consider two types of adversaries: The \emph{oblivious adversary} generates the request sequence in advance, independently of the outcome of random choices by the algorithm. Since this adversary knows in advance the sequence $\sigma$ it will generate, it can serve it \emph{offline} with optimal cost $\OPT(\sigma)$, i.e., $\cost_\ADV(\sigma)=\OPT(\sigma)$ for the oblivious adversary. The \emph{adaptive online adversary} makes the next request based on the algorithm's answers to previous requests, but serves it immediately. Thus, $\sigma$ is a random variable and $\cost_\ADV(\sigma)$ is not necessarily optimal. If no adversary type is stated explicitly, then we mean the oblivious adversary. For deterministic algorithms, the two notions are identical. The deterministic/randomized \emph{competitive ratio} of an online problem (and a given adversary type) is the infimum of all $\rho$ such that a deterministic/randomized $\rho$-competitive algorithm exists.

Clearly, it cannot be a disadvantage for the adversary to replace a request $(s,t)$ by two requests $(s,s)$ and $(s,t)$ because the adversary cost is unaffected by this and it forces the online algorithm to decide which taxi to send to $s$ before learning the destination $t$. For the request $(s,t)$, there is no decision to be made by the algorithm because it is clearly best to move the taxi already located at $s$ due to the previous request. Thus, we can assume without loss of generality that the adversary gives a request of the form $(s,t)$ with $s\ne t$ only if it is preceded by the request $(s,s)$. We call a requests of the form $(s,s)$ \emph{simple requests} and other requests \emph{relocation requests}. We may also say there is a request at $s$ to refer to a simple request $(s,s)$. The \emph{$k$-server problem} is the special case of the (easy and hard) $k$-taxi problem where all requests are simple, and in this case the taxis are called \emph{servers}. Conversely, the $k$-taxi problem is the same as the $k$-server problem except that the adversary can choose to relocate a pair of online and offline servers if they occupy the same point. In the hard $k$-taxi problem, relocation is free, and in the easy $k$-taxi problem, the algorithm and adversary both pay the distance of the relocation.

\paragraph{Hierarchically separated trees (HSTs) \cite{Bartal96}.} For $\alpha>1$, an \emph{$\alpha$-HST} is a tree where all leaves have the same combinatorial distance from the root and each node $u$ has a weight $w_u$ such that if $v$ is a child of $u$, then $w_u=\alpha w_v$. The metric space of an HST consists of its leaves only, and the distance between two leaves is defined as the weight of their least common ancestor. This is the shortest path metric when the distance from $u$ to its parent is defined as $\frac{\alpha-1}{2}w_u$ if $u$ is an internal vertex and $\frac{\alpha}{2}w_u$ if $u$ is a leaf. The significance of HSTs is that any metric space of $n$ points can be probabilistically embedded into a distribution over HSTs with distortion $O(\log n)$ \cite{FakcharoenpholRT03}.\footnote{The literature contains several slightly different definitions of HSTs; they all share the property of approximating arbitrary metrics with distortion $O(\log n)$.} Thus, a $\rho$-competitive algorithm for HSTs yields a randomized $O(\rho\log n)$-competitive algorithm for general metrics.

For nodes $x$ and $y$ of a tree, we denote by $P_{xy}$ their connecting path.

\subsection{Organization}
The hard $k$-taxi problem is studied in Section~\ref{sec:hard}. To prove Theorem~\ref{thm:HSTs}, we present the $(2^k-1)$-competitive algorithm for HSTs in Subsection~\ref{subsec:HST}, and the matching lower bounds in Subsections~\ref{sec:hstLb} and \ref{sec:LbMemoryless}. In Subsection~\ref{subsec:redLGT} we give the reduction from layered graph traversal that yields Theorem~\ref{thm:red} and the lower bound of Theorem~\ref{thm:k=2}. The upper bound of Theorem~\ref{thm:k=2} is shown in Subsection~\ref{subsec:k=2}. The algorithm for three taxis on the line (Theorem~\ref{thm:3line}) is presented in Subsection~\ref{sec:Line}. The equivalence of the easy $k$-taxi problem and the $k$-server problem (Theorem~\ref{thm:easy}) is shown in Section~\ref{sec:Easy}.

\vShort{Complete proofs are contained in the full version of this paper.}
\section{The hard \texorpdfstring{$k$}{k}-taxi problem}\label{sec:hard}

\subsection{An optimally competitive algorithm for HSTs}\label{subsec:HST}

We consider the following randomized algorithm, which we call \textsc{Flow}, for the $k$-taxi problem on HSTs. Suppose a simple request arrives at a leaf $s$ while the taxis are located at leaves $t_1,\dots,t_k$. For each taxi we need to specify its probability to serve $s$. Let $N$ be the Steiner tree of $s,t_1,\dots,t_k$, i.e., the minimum subtree of the HST that spans these leaves. We can think of $N$ as an electrical network by interpreting an edge of length $R$ as a resistor with resistance $R$. When sending a current of size $1$ through $N$ from source $s$ to sinks $t_1,\dots,t_k$, the resistances determine what fraction of the current flows into which sink. Algorithm \textsc{Flow} serves the request with a taxi from $t_i$ with probability equal to the fraction of current flowing into $t_i$.\footnote{Due to relocation requests, we may have several taxis at $t_i$. In this case, it does not matter which one we choose and what we describe is the combined probability of choosing one of them.} For a relocation request $(s,t)$ after a simple request $(s,s)$, \textsc{Flow} uses the taxi already located at $s$.\footnote{\textsc{Flow} is similar to the $k$-server algorithm RWALK by Coppersmith et al.~\cite{CoppersmithDRS93}: Given a weighted graph $(V,E)$ and interpreting edges as resistors as above, RWALK serves a request at $s$ with a server from $t$ with probability inversely proportional to the resistance of the resistor/edge $\{s,t\}$ (if the edge $\{s,t\}$ exists). RWALK is $k$-competitive for the metric of \emph{effective} resistances on $V$.}

To formalize this algorithm, we need to give a mathematical description of how much current flows into each sink. Let $\mathcal N$ be the set of subtrees $A$ of $N$ comprising at least one edge and with the property that, if $s_A$ is the (unique) node of $A$ closest to $s$ in $N$, then the leaves of $A$ are a subset of $s_A,t_1,\dots,t_k$. Formally, we view $A$ as the set of all nodes and edges of this subtree. We denote by $\kappa(A)=|A\cap\{t_1,\dots,t_k\}|$ the number of leaves of $A$ where a taxi is located. Note that $N\in\mathcal N$ and $\kappa(N)\le k$, with equality if and only if all taxis are located at different leaves. For each (sub)network $A\in\mathcal N$, we define (by induction on $\kappa(A)$) its resistance $R_A$ and describe what fraction of the current entering $A$ at $s_A$ flows to which sink in $A$. If $\kappa(A)=1$, then $A=P_{s_At_i}$ for some $i$. In this case, the resistance of $A$ is the length of this path, $R_A=d(s_A,t_i)$. Moreover, all current entering $A$ at $s_A$ flows to $t_i$.

If $\kappa(A)\ge 2$, then there is a unique node $\sint\in A$ such that $A=P_{s_A\sint}\cup B\cup C$ for some $B, C\in\mathcal N$ with $s_B=s_C=\sint$, and $P_{s_A\sint}$, $B\setminus\{\sint\}$ and $C\setminus\{\sint\}$ are disjoint. The resistance of $A$ is defined as
\begin{align}\label{eq:flowR}
R_A=d(s_A,\sint)+\frac{R_BR_C}{R_B+R_C}.
\end{align}
Current entering $A$ at $s_A$ flows entirely to $\sint$, where a $\frac{R_C}{R_B+R_C}$ fraction of it enters $B$ and the remaining $\frac{R_B}{R_B+R_C}$ fraction enters $C$.\footnote{It is easy to verify that $R_A$ and the current on each edge is well-defined, i.e.\ independent of the choice of $B$ and $C$. Note that if $s_A'$ has degree $\ge 2$ in $B$ or $C$, then the choice of $B$ and $C$ is not unique.}

Another interpretation of \textsc{Flow} is that we carry out a random walk from the request location to the taxi locations, and whichever taxi we end up at is the one that will serve the request. Whenever the random walk hits an intersection offering two possible directions to continue towards a taxi, either by entering a subtree $B$ or a subtree $C$, we choose the subtree with probability inversely proportional to its resistance.

The following theorem yields the upper bound of Theorem~\ref{thm:HSTs}. We do not actually need the HST property of geometrically decreasing weights, but only the weaker property that all requests are at the same distance from the root (in terms of the path metric extended to internal nodes).

\begin{theorem}
	\textsc{Flow} is $(2^k-1)$-competitive against adaptive online adversaries for the hard $k$-taxi problem in the leaf-space of any tree with uniform root-leaf-distances.
\end{theorem}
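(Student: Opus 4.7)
My plan is to use a potential function argument. I will introduce a nonnegative potential $\Phi$ that is a function of \textsc{Flow}'s (random) configuration $X$ and of the adversary's configuration $A$ (both are random variables in the adaptive online setting, coupled by the adversary's reactions). The goal is to establish that in every atomic event
\begin{align*}
E[\Delta\cost_{\textsc{Flow}}] + E[\Delta\Phi] \le (2^k-1)\cdot E[\Delta\cost_{\ADV}],
\end{align*}
from which, summing over the whole request sequence and using $\Phi\ge 0$, the competitive ratio $2^k-1$ follows with an additive constant absorbing $\Phi$ at the initial configuration. It then suffices to verify the inequality separately for (i) an adversary move serving a simple request at cost $d$, where $E[\Delta\Phi]\le (2^k-1)d$, and (ii) a \textsc{Flow} response to a simple request, where $E[\Delta\Phi]+E[\text{alg cost}]\le 0$. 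By defining $\Phi$ purely in terms of unlabeled taxi positions, the free relocations $(s,t)$ in the hard model should leave $\Phi$ invariant automatically (the relocation happens to a taxi of both players at the same leaf $s$).

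Because \textsc{Flow} is defined through electrical currents on the tree, I would look for $\Phi$ of the form $\Phi=\sum_e w_e\,\phi_e(x_e,a_e)$, where $e$ ranges over the edges of the tree, $w_e$ is the edge weight, $a_e$ is the number of adversary taxis in the leaf-subtree below $e$, and $x_e$ is the expected number of \textsc{Flow}'s taxis below $e$. The uniform root-leaf distance assumption is exactly what makes this edge-local decomposition well-behaved: every request sits at the same depth, so each event decomposes cleanly into motion across individual edges. For the constant $2^k-1$ to arise, I expect $\phi_e$ to grow roughly like $2^{a_e}-2^{x_e}$, so that incrementing $a_e$ by one (an adversary taxi crossing $e$) causes a jump of at most $(2^k-1)w_e$, attained in the extreme case $x_e=0$, $a_e=k-1$.

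The main obstacle will be step (ii): showing that \textsc{Flow}'s precise electrical-flow probabilities produce exactly the right potential drop. I plan to argue edge by edge. The expected amount of \textsc{Flow} taxi mass crossing a given edge $e$ during the response to a simple request equals the current flowing through $e$ in the Steiner network, which is a direct consequence of Kirchhoff's law and the recursive definition of $R_A$ in equation~\eqref{eq:flowR}. On the other side of the ledger, the decrement of $w_e\,\phi_e(x_e,a_e)$ must equal the algorithm's expected cost contribution on $e$. Finding a $\phi_e$ for which these two quantities match exactly, while simultaneously yielding the worst-case jump of at most $(2^k-1)w_e$ needed for step (i), is the delicate heart of the argument; the exponential $2^k-1$ should emerge as the sharp constant in this edge-local accounting, and I expect the argument to reduce to a recursion on subtrees that mirrors the recursion used to define $R_A$ itself.
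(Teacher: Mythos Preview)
Your high-level plan---potential function, separate analysis of adversary moves and \textsc{Flow} moves, relocation invariance---is correct, and in fact an edge-local potential $\sum_e w_e\,\phi_e(x_e,a_e)$ is exactly what the paper uses, once you recall that on a tree the minimum bipartite matching between two leaf-multisets equals $\sum_e w_e\,|x_e-a_e|$. The paper's potential is simply $(2^k-1)$ times this matching, i.e.\ $\phi_e(x,a)=(2^k-1)\,|x-a|$. So your framework and the paper's coincide once the right $\phi_e$ is identified.

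Two concrete things in your proposal would need to change. First, the guess $\phi_e\approx 2^{a_e}-2^{x_e}$ cannot be right: a relocation request shifts both $x_e$ and $a_e$ by $1$ on every edge of the relocation path, and $2^{a_e+1}-2^{x_e+1}=2(2^{a_e}-2^{x_e})$ is not invariant. Relocation invariance forces $\phi_e$ to depend only on the difference $x_e-a_e$. (Relatedly, $x_e$ must be the \emph{actual} count in the random configuration, not its expectation; against an adaptive adversary the potential has to be a function of the realized state.) Second, the hope that on each edge the potential decrement equals the expected cost on that edge is too strong. Already on a star where \textsc{Flow} and the adversary agree on $k-1$ leaves and differ on one, with probability $(k-1)/k$ \textsc{Flow} moves a perfectly matched taxi and the matching strictly \emph{increases} on that taxi's edge. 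The accounting has to be global across edges.

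What you are missing is the mechanism by which the global accounting closes. The paper bounds the matching change by $\ell(P_{t_is}\setminus P_{t_ss})-\ell(P_{t_is}\cap P_{t_ss})$, where $t_s$ is the \textsc{Flow} taxi matched to the adversary's taxi at $s$, and then proves two inductive claims on subtrees $A$ of the Steiner tree: first $m(A)=c(A)-2R_A$, which strips the adversary configuration out of the inequality entirely, and then the decisive estimate $2^{\kappa(A)-1}c(A)\le (2^{\kappa(A)}-1)R_A$ relating \textsc{Flow}'s expected cost to the effective resistance. The uniform root--leaf distance hypothesis enters only in this last claim, and in a very specific way: when the path from $s$ to a branch point $\sint$ passes through the parent of $\sint$, all taxi leaves below $\sint$ are equidistant from $\sint$, which collapses the conditional expected costs $c(B),c(C)$ on the two sub-branches. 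Your intuition that ``the argument reduces to a recursion on subtrees mirroring the recursion for $R_A$'' is exactly right, but the recursion is on the cost--resistance ratio, not on a per-edge balance.
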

\begin{proof}
	We use a potential equal to $(2^k-1)$ times the value of a minimum matching $M$ of algorithm and adversary configurations. Since $M$ does not change upon relocation requests, we only need to consider simple requests. Whenever the adversary moves, the value of $M$ increases by at most the distance moved by the adversary. Thus, we only need to show for a simple request at a leaf $s$ where the adversary already has a taxi that
	\begin{align*}
	E(\cost)+(2^k-1)E(\Delta M) \le 0,
	\end{align*}
	where the random variables $\cost$ and $\Delta M$ denote the cost of \textsc{Flow} to serve the request and the associated increase of the minimum matching value.
	
	For a path $P$ we write $\ell(P)$ for its length.
	
	Let $t_s$ be the location of the \textsc{Flow} taxi matched to the adversary taxi at $s$ in $M$. Let $i$ be a random variable for the \textsc{Flow} taxi serving $s$, so that $t_i$ is its location before serving the request. We can partition the movement of taxi $i$ along the path $P_{t_is}$ into two parts, first the movement along $P_{t_is}\setminus P_{t_ss}$ and then the movement along $P_{t_i s}\cap P_{t_ss}$. During the first part, the value of $M$ can increase by at most $\ell(P_{t_is}\setminus P_{t_ss})$. Now, the value of $M$ is at most that of the matching $M'$ which differs from $M$ in that the adversary taxi at $s$ is matched to the \textsc{Flow} taxi $i$, and the \textsc{Flow} taxi at $t_s$ is matched to the adversary taxi previously matched to $i$. As taxi $i$ finishes its movement towards $s$, the value of $M'$ decreases by precisely $\ell(P_{t_is}\cap P_{t_ss})$. The value of the new minimum matching is bounded by the value of $M'$. Thus, we can bound the increase of the matching by
	\begin{align}\label{eq:flowMatchingBound}
	\Delta M\le \ell(P_{t_is}\setminus P_{t_ss})- \ell(P_{t_is}\cap P_{t_ss}).
	\end{align}
	On the right hand side, we simply count edges of $P_{t_is}$ negatively if they are also part of $P_{t_ss}$, and positively otherwise.
	
	For $A\in\mathcal N$ let
	\begin{align*}
	m(A) = E(\ell(A\cap P_{t_is}\setminus P_{t_ss})- \ell(A\cap P_{t_is}\cap P_{t_ss})\mid t_i\in A)
	\end{align*}
	be the expected contribution of edges from $A$ to the matching bound \eqref{eq:flowMatchingBound}, conditioned on \textsc{Flow} using a taxi that starts in $A$. Moreover, let
	\begin{align*}
	c(A) = E(\ell(A\cap P_{t_is})\mid t_i\in A)
	\end{align*}
	be the expected movement cost of \textsc{Flow} incurred on edges of $A$, conditioned on \textsc{Flow} using a taxi from $A$. For $A=P_{s_A\sint}\cup B\cup C\in\mathcal N$ as above, it follows from the definition of the algorithm that
	\begin{align}
	c(A)=d(s_A,\sint)+\frac{R_Cc(B)}{R_B+R_C}+\frac{R_Bc(C)}{R_B+R_C}.\label{eq:flowcA}
	\end{align}
	
	Since $E(\cost)=c(N)$ and $E(\Delta M)\le m(N)$, it suffices to show
	\begin{align}
	c(N)+(2^k-1)m(N) \le 0.\label{eq:flowPot2nd}
	\end{align}
	
	A key insight is provided by the following claim, relating $m(A)$ to $c(A)$ and $R_A$, which will allow us to reformulate \eqref{eq:flowPot2nd} purely in terms of the expected cost $c(N)$ and resistance $R_N$.
	
	\begin{claim*}
		For each $A\in\mathcal N$ with $t_s\in A$, $m(A)=c(A)-2R_A$.
	\end{claim*}
	\begin{proof}
		The proof is by induction on $\kappa(A)$. If $\kappa(A)=1$, then $A=P_{s_At_s}$; the condition $t_i\in A$ in the expectations defining $m(A)$ and $c(A)$ is equivalent to $t_i=t_s$. Thus, we have $m(A)=-\ell(P_{s_At_s})$ and $c(A)=\ell(P_{s_At_s})$. Since $R_A=d(s_A,t_s)=\ell(P_{s_At_s})$, the claim follows.
		
		If $\kappa(A)\ge 2$, then $A$ can be split into the path $P_{s_A\sint}$ and subtrees $B,C\in\mathcal N$ as above. Conditional on $t_i\in A$, the path $P_{s_A\sint}$ is contained in both $P_{t_is}$ and $P_{t_ss}$, so it contributes negatively to $m(A)$. Moreover, conditional on $t_i\in A$, the remaining part of $A\cap P_{t_is}$ is in $B$ with probability $\frac{R_C}{R_B+R_C}$ and in $C$ with probability $\frac{R_B}{R_B+R_C}$. Thus,
		\begin{align*}
		m(A)=-d(s_A,\sint)+\frac{R_Cm(B)}{R_B+R_C}+\frac{R_Bm(C)}{R_B+R_C}.
		\end{align*}
		Without loss of generality let $B$ be the subtree containing $t_s$. Then we can replace $m(B)$ in this formula by applying the induction hypothesis. Moreover, the edges of $P_{t_ss}$ do not intersect with $C$, and therefore $m(C)=c(C)$. This gives us
		\begin{align}
		m(A)=-d(s_A,\sint)+\frac{R_C(c(B)-2R_B)}{R_B+R_C}+\frac{R_Bc(C)}{R_B+R_C}.\label{eq:flowmA}
		\end{align}
		Combining \eqref{eq:flowcA} and \eqref{eq:flowmA}, we get
		\begin{align*}
		c(A)-m(A)=2d(s_A,\sint)+2\frac{R_BR_C}{R_B+R_C}=2R_A
		\end{align*}
		and the claim follows.
	\end{proof}
	Thanks to this claim, we can rewrite \eqref{eq:flowPot2nd} as
	\begin{align}
	2^{k-1} c(N)\le (2^k-1)R_N.\label{eq:flowcNRN}
	\end{align}
	Observe that unlike \eqref{eq:flowPot2nd}, the reformulation \eqref{eq:flowcNRN} no longer depends on the adversary configuration and the matching.
	
	To show \eqref{eq:flowcNRN}, we will need the following property: For $A\in\mathcal N$,
	\begin{align}\label{eq:flowRbound}
	\kappa(A)R_A\ge h(A),
	\end{align}
	where $h(A)$ denotes the minimal distance between $s_A$ and a taxi in $A$. This follows by an easy induction on $\kappa(A)$.
	
	We complete the proof by showing the following slightly stronger generalization of \eqref{eq:flowcNRN}:
	\begin{claim*}
		For $A\in\mathcal N$ with $s\in A$, we have $2^{\kappa(A)-1} c(A)\le (2^{\kappa(A)}-1)R_A$.
	\end{claim*}
	\begin{proof}
		We proceed again by induction on $\kappa(A)$. For $\kappa(A)=1$ the claim is easily seen to hold with equality.
		
		If $\kappa(A)\ge2$, let $B$ and $C$ be as before. Since $s\in A$, we have $s_A=s$.
		
		From \eqref{eq:flowR} and \eqref{eq:flowcA}, we get
		\begin{align}
		2^{\kappa(A)-1}&c(A) - (2^{\kappa(A)}-1)R_A\nonumber\\
		&= (1-2^{\kappa(A)-1})d(s,\sint) + 2^{\kappa(A)-1}\left(\frac{R_Cc(B)}{R_B+R_C}+\frac{R_Bc(C)}{R_B+R_C}\right) - (2^{\kappa(A)}-1)\frac{R_BR_C}{R_B+R_C}.\label{eq:flowLong}
		\end{align}
		We will show that this term is negative. Thus, for $\kappa(A)\ge 2$ the claim even holds with strict inequality.
		
		The simpler case is that $P_{s\sint}$ contains the parent node of $\sint$ in the HST. Then both $B$ and $C$ are contained in the subtree of the HST rooted at $\sint$; hence, all paths from $\sint$ to any of the taxis in $B$ or $C$ have length exactly\footnote{We are using here that an internal node of an HST is at the same distance from all its leaf descendants.} $h(B\cup C)$, and therefore $c(B)=c(C)=h(B\cup C)<d(s,\sint)$. Using this, as well as $\kappa(A)<2^{\kappa(A)}-1$ and applying \eqref{eq:flowRbound} to $B\cup C\in\mathcal N$, we get that term \eqref{eq:flowLong} is less than
		\begin{align*}
		(1-2^{\kappa(A)-1}&)d(s,\sint) + 2^{\kappa(A)-1}h(B\cup C) - \kappa(A)R_{B\cup C}\\
		&\le (1-2^{\kappa(A)-1})d(s,\sint) + (2^{\kappa(A)-1}-1)h(B\cup C)\\
		&< 0.
		\end{align*}
		
		In the other case, when $P_{s\sint}$ does not contain the parent of $\sint$ in the HST, we can still assume without loss of generality that also $B$ does not contain the parent of $\sint$. Then
		\begin{align}
		c(B)=h(B)=d(s,\sint),\label{eq:flowcB}
		\end{align}
		where the equality with $d(s,\sint)$ follows from the fact that the requested node $s$ is also a leaf in the subtree of the HST rooted at $\sint$.
		
		Since $P_{s\sint}\cup C\in\mathcal N$ and $\kappa(P_{s\sint}\cup C)=\kappa(C)<\kappa(A)$, we can apply the induction hypothesis to $P_{s\sint}\cup C$, yielding
		\begin{align*}
		2^{\kappa(C)-1} (d(s,\sint) +c(C))\le (2^{\kappa(C)}-1)(d(s,\sint)+R_C),
		\end{align*}
		and reordering,
		\begin{align}
		c(C)\le (1-2^{1-\kappa(C)})d(s,\sint)+(2-2^{1-\kappa(C)})R_C.\label{eq:flowInd}
		\end{align}
		Due to \eqref{eq:flowcB}, \eqref{eq:flowInd} and \eqref{eq:flowRbound}, we can bound term \eqref{eq:flowLong} by
		\begin{align*}
		(1-2^{\kappa(A)-1}&)h(B) + 2^{\kappa(A)-1}\frac{R_Ch(B)}{R_B+R_C} + (2^{\kappa(A)-1}-2^{\kappa(A)-\kappa(C)})\frac{R_Bh(B)}{R_B+R_C}
		\\&\qquad \qquad\qquad\qquad\qquad\qquad\qquad\qquad\qquad+ (1-2^{\kappa(A)-\kappa(C)})\frac{R_BR_C}{R_B+R_C}  \\
		&= h(B) - 2^{\kappa(B)}\frac{R_Bh(B)}{R_B+R_C} - (2^{\kappa(B)}-1)\frac{R_BR_C}{R_B+R_C}\\
		&< h(B) - \frac{R_Bh(B)}{R_B+R_C} - \kappa(B)\frac{R_BR_C}{R_B+R_C}\\
		&\le h(B) - \frac{R_Bh(B)}{R_B+R_C} - \frac{h(B)R_C}{R_B+R_C}\\
		&= 0
		\end{align*}
		and the claim follows.
	\end{proof}
	Invoking the claim for $A=N$, and using $\kappa(N)\le k$, we obtain \eqref{eq:flowcNRN}, concluding the proof of the theorem.
\end{proof}

\subsection{Lower bound against adaptive adversaries}\label{sec:hstLb}

We now show the first lower bound of Theorem~\ref{thm:HSTs}, matching the upper bound from the previous section.

Let $B_k^\alpha$ be the binary $\alpha$-HST of depth $k$ with vertex weights $\alpha^{k}, \alpha^{k-1},\dots,\alpha,1$ along root-to-leaf paths. For an infinite request sequence $\sigma$, we denote by $\sigma_{t}$ its prefix consisting of the first $t$ requests. The lower bound for adaptive adversaries in Theorem~\ref{thm:HSTs} follows from the following theorem by letting $\alpha\to\infty$ and $T\to\infty$:
\begin{theorem}\label{thm:LbBinHST}
	Let $\alpha\ge 3^k$. For each randomized algorithm $A$ for the hard $k$-taxi problem on $B_k^\alpha$, any fixed initial configuration, and any leaf $\ell$ of $B_k^\alpha$, one can construct online an infinite request sequence $\sigma$ such that
	\begin{enumerate}
		\item for each bounded stopping time $T$, there exists a deterministic online algorithm $\ADV$ (the adversary) such that\label{it:hstLbIneq}
		\begin{align*}E(\cost_A(\sigma_{T}))\ge \left(2^k-1-\frac{3^k}{\alpha}\right)E(\cost_\ADV(\sigma_{T}))-(2\alpha)^k,
		\end{align*}
		\item $\cost_A(\sigma_{t})\to\infty$ as $t\to\infty$ for all random choices of $A$, and\label{it:hstLbUnbounded}
		\item if the initial configuration is extended by adding a $(k+1)$st taxi at leaf $\ell$, then $\sigma$ can be served for free.\label{it:hstLbFree}
	\end{enumerate}
\end{theorem}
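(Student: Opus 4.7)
The plan is to proceed by induction on $k$. The base case $k=1$ is direct: on the two-leaf tree $B_1^\alpha$, the request generator keeps requesting whichever leaf is currently unoccupied by the single algorithmic taxi, and $\ADV$ is the obvious deterministic offline mimic that sits at the requested leaf; property~(c) is immediate by placing the extra taxi at the second leaf.

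For the inductive step, let $L$ and $R$ denote the two depth-$(k{-}1)$ subtrees of $B_k^\alpha$, with $\ell\in L$, and fix some auxiliary leaf $\ell_R\in R$. The request-generating strategy I propose alternates two kinds of rounds. In \emph{Round A}, the generator invokes the inductive strategy on $R$ (identified with $B_{k-1}^\alpha$) with free leaf $\ell_R$, producing requests entirely inside $R$; by the induction hypothesis applied to the recursive $\ADV$ on $R$ with $k-1$ taxis, the algorithm's cost during round A is at least $(2^{k-1}-1-3^{k-1}/\alpha)$ times the inductive $\ADV$ cost. In \emph{Round B}, we exploit the hard-taxi mechanic: issue the simple request $(\ell,\ell)$ followed by a relocation request $(\ell,x)$ for an $x\in R$ chosen adaptively to maximally disrupt the algorithm. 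An $\ADV$ that keeps a taxi at $\ell$ pays nothing for round B (the empty leg to $\ell$ has length $0$, and the relocation leg is uncharged in the hard version), while the algorithm is forced either to ferry a taxi across the root at cost $\approx\alpha^k$, or, if it already has a taxi near $\ell$, to have that taxi teleported back into $R$, thereby destroying any useful configuration it had built up for the next round A.

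For each bounded stopping time $T$, the algorithm $\ADV_T$ I would exhibit runs the inductive $\ADV$ with $k{-}1$ taxis inside $R$ while holding the remaining taxi at $\ell$ to serve round-B requests for free; an initial setup to bring the taxis into these positions contributes the additive term $(2\alpha)^k$. Properties (b) and (c) should fall out directly: property (b) holds because every round contributes a positive deterministic lower bound on $\cost_A$, and the construction can be extended indefinitely; property (c) follows inductively, since with $k{+}1$ taxis we keep one at $\ell$ (handling round B for free) and apply the inductive property~(c) on $R$ with $k$ taxis and free leaf $\ell_R$ to serve round A for free.

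The main obstacle, where I expect the argument to require the most care, is showing that the contributions of rounds A and B compound \emph{multiplicatively} to yield the ratio $2^k-1$ rather than merely additively (which would only give $2^{k-1}-1+o(1)$ as the recursion ratio swamps a finite number of crossings). The idea I would try is to schedule round-B's densely enough within the recursive round-A to force, for each unit of recursive $\ADV$ cost, an additional $2^{k-1}$ factor of ``wasted'' algorithmic crossings, either by repeatedly resetting the algorithm's useful sub-configuration in $R$ via relocations, or by maintaining a coupled inductive invariant about the algorithm's taxi distribution between $L$ and $R$. Secondary care is needed to periodically return the home taxi to $\ell$ after each teleport (which I plan to do via symmetric relocation so that $\ADV$ pays nothing, absorbing residual movement into the additive constant), and to track how the approximation error accumulates: each level of the recursion contributes one factor of $3/\alpha$, producing the final $3^k/\alpha$ slack and motivating the hypothesis $\alpha\ge 3^k$.
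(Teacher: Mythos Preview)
Your inductive skeleton and the base case are fine, and property~(c) does follow by the nesting you describe. The gap is exactly where you flag it: the inductive step for~(a). The Round~A/Round~B scheme as written does not deliver the doubling, for two concrete reasons.

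First, the inductive lower bound you invoke in Round~A is for a $(k{-}1)$-taxi algorithm on $R$, but $A$ has $k$ taxis and nothing stops it from keeping all of them in $R$ --- indeed your own Round~B relocation $(\ell,x)$ pushes $A$'s $\ell$-taxi into $R$. Against $k$ taxis in $R$ the $(k{-}1)$-taxi generator gives no lower bound; by the very property~(c) you cite, $A$ with a taxi at $\ell_R$ serves Round~A for free. The paper handles this by running the recursive sequence only \emph{until $A$ moves its outside taxi into the current subtree}, defining the recursive sub-algorithm as ``$A$ up to that moment'' (and arbitrarily thereafter). Then the inductive bound applies by construction, and $A$ pays the crossing cost to end the phase.

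Second, your adversary --- one taxi pinned at $\ell$, the other $k{-}1$ following the inductive $\ADV$ in $R$ --- pays the \emph{full} inductive $\ADV$ cost every Round~A, so even if Round~A did cost $A$ the inductive ratio you would only recover $2^{k-1}-1$, not $2^k-1$. (Also, after the relocation $(\ell,x)$ your adversary's pinned taxi is at $x\in R$; any ``symmetric relocation'' back to $\ell$ is equally free for $A$, so it buys nothing.) The paper's doubling comes from a different structure: the subtree receiving the recursive sequence \emph{alternates} between phases, and the phase-ending relocations start from the inductive $\ADV$'s final positions and land in the other subtree. An adversary that never crosses the root is then augmented (in the sense of~(c)) in every \emph{other} phase and pays the inductive $\ADV$ cost only in the remaining half; a second such strategy shifted by one phase covers the other half. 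Combined with a third strategy that crosses once per phase at cost $\alpha^k$, taking the minimum of the three is what yields the factor-two amplification and the $3^k/\alpha$ slack.
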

\begin{proof}
	We call an algorithm with an extra taxi as in \ref{it:hstLbFree} \emph{augmented algorithm}.
	
	We prove the theorem by induction on $k$. For $k=1$, $\sigma$ begins with a relocation request from the initial taxi position to the leaf of $B_1^\alpha$ other than $\ell$ and then places simple requests alternately at the two leaves of $B_1^\alpha$. The adversary follows the unique strategy to serve the requests.
	
	For the induction step, suppose the theorem holds for $k$ and we want to show it for $k+1$. The infinite request sequence $\sigma$ consists of several phases. Note that $B_{k+1}^\alpha$ contains two copies of $B_k^\alpha$ as subtrees. We call one of these two subtrees \emph{active} and the other one \emph{passive}, and these roles change after each phase. We also call a taxi \emph{active} or \emph{passive} if it is in the according subtree. All requests of a phase are in the active subtree, except for some relocation requests at the end of a phase. In phase $1$, the subtree containing $\ell$ is active, and we let $\ell_1=\ell$. We maintain the invariant that at the beginning of phase $i$, $k$ taxis are active and one is passive, and we denote the leaf in the passive subtree where the latter is located by $\ell_{i+1}$. As the active and passive subtree change their roles after each phase, $\ell_i$ is always in the active subtree of phase $i$. Clearly the invariant can be ensured for phase $1$ by relocation requests in the beginning.
	
	We define now the requests of phase $i$. As long as $A$ does not activate its passive taxi (i.e.\ move it from $\ell_{i+1}$ to the active subtree), we can interpret the behaviour of $A$ as that of an algorithm $A_i$ for the $k$-taxi problem in the active subtree. To make $A_i$ a full-fledged $k$-taxi algorithm for $B_k^\alpha$, we define it arbitrarily from the point when $A$ activates the passive taxi onwards. By the induction hypothesis, we can construct online a request sequence $\sigma^i$ in the active subtree such that for any bounded stopping time $T_i$ there is an adversary $\ADV_i$ with
	\begin{align}\label{eq:hstLbIndHyp}
	E(\cost_{A_i}(\sigma^i_{T_i})\mid \mathcal F_i)\ge \left(2^k-1-\frac{3^k}{\alpha}\right)E(\cost_{\ADV_i}(\sigma^i_{T_i})\mid \mathcal F_i)-(2\alpha)^k,
	\end{align}
	where $\mathcal F_i$ contains the information about $A$'s decisions before phase $i$.
	Moreover, $\cost_{A_i}(\sigma^i_{t})\to\infty$ as $t\to\infty$, and an augmented algorithm with an extra taxi at $\ell_i$ serves $\sigma^i$ for free.
	
	Phase $i$ commences with the requests of $\sigma^i$ until $A$ activates its passive taxi. If $A$ never activates the passive taxi, then phase $i$ never ends. Otherwise, once $A$ activates the passive taxi, we use relocation requests to move $k$ taxis from the active to the passive subtree (i.e.\ the active subtree of the next phase), which ends phase $i$ and satisfies the aforementioned invariant for phase $i+1$. The $k$ starting points of these relocation requests are the final taxi positions of $\ADV_i$.
	
	We need to show that this sequence satisfies the claimed properties.
	
	We first show that the corresponding statement of \ref{it:hstLbIneq} for $k+1$ instead of $k$ holds. Let $n(T)$ be the number of phases of the request sequence $\sigma_{T}$. Note that all but possibly the last phase include an activation. For $i\le n(T)$, let $T_i$ be the number of requests in phase $i$ until $A$ activates the passive taxi or (for $i=n(T)$) until time $T$ is reached. For $i>n(T)$ let $T_i=0$. Note that \eqref{eq:hstLbIndHyp} holds even for $i>n(T)$ (with $A_i$ and $\sigma^i$ defined arbitrarily if $\sigma$ has less than $i$ phases) and we may multiply the term $(2\alpha)^k$ by $\mathbbm{1}_{\{T_i>0\}}=\mathbbm{1}_{\{i\le n(T)\}}$. Since activating the passive taxi is at least $\alpha^k(\alpha-1)$ more expensive than using an already active taxi, the cost of $A$ in phase $i$ is at least $\cost_{A_i}(\sigma^i_{T_i})+\mathbbm{1}_{\{i<n(T)\}}\alpha^k(\alpha-1)$. Thus,
	\begin{align}
	E(&\cost_A(\sigma_T))\nonumber\\
	&\ge E\left(\sum_{i=1}^{\infty} \cost_{A_i}(\sigma^i_{T_i}) + \mathbbm{1}_{\{i<n(T)\}}\alpha^k(\alpha-1)\right)\nonumber\\
	&= \left(\sum_{i=1}^{\infty} E(\cost_{A_i}(\sigma^i_{T_i}))\right) + (E(n(T))-1)\alpha^k(\alpha-1)\nonumber\\
	&\ge \left(\sum_{i=1}^{\infty} \left(2^k-1-\frac{3^k}{\alpha}\right)E(\cost_{\ADV_i}(\sigma^i_{T_i}))-P(i\le n(T))(2\alpha)^k\right) + E(n(T))\alpha^k(\alpha-1) - \alpha^{k+1}\nonumber\\
	&\ge \left(2^{k}-1-\frac{3^k}{\alpha}\right)E\left(\sum_{i=1}^{n(T)} \cost_{\ADV_i}(\sigma^i_{T_i})\right) +E(n(T))\alpha^k(\alpha-3^{k}) - \alpha^{k+1}.\label{eq:hstLbCost}
	\end{align}
	The assumption that $T$ is bounded guarantees that all sums have finitely many non-zero summands.
	
	We will consider three different strategies for the adversary: The first strategy is to activate the passive offline taxi by moving it to $\ell_i$ at the start of each phase $i$. Since the other $k$ adversary taxis cover the active online taxis at the start of the phase, this allows the adversary to serve the requests of the phase for free (thanks to \ref{it:hstLbFree}). Activating the passive taxi costs $\alpha^{k+1}$, so this strategy incurs a cost of
	\begin{align}\label{eq:hstLbSwitch}
	n(T)\alpha^{k+1}.
	\end{align}
	In the second strategy we consider, the adversary never moves a taxi from one subtree $B_{k}^\alpha$ to the other, except when this happens due to a relocation request. In this case, phase $i$ is free only for even $i$. If $i$ is odd, it copies the behaviour of $\ADV_i$ to serve the requests of phase $i$. The cost of this strategy is
	\begin{align}\label{eq:hstLbOdd}
	\sum_{\stackrel{i=1}{i\text{ odd}}}^{n(T)}\cost_{\ADV_i}(\sigma^i_{T_i}).
	\end{align}
	The third strategy is similar, but the offline algorithm moves the passive taxi from $\ell_2$ to $\ell_1$ before phase $1$, for a cost of $\alpha^{k+1}$. Analogously to the second strategy, this strategy incurs cost
	\begin{align}\label{eq:hstLbEven}
	\alpha^{k+1}+\sum_{\stackrel{i=2}{i\text{ even}}}^{n(T)}\cost_{\ADV_i}(\sigma^i_{T_i}).
	\end{align}
	The actual strategy $\ADV$ is the one of these three which has the smallest cost in expectation.
	
	Since $E(\cost_{\ADV}(\sigma_T))$ is bounded by the expectations of \eqref{eq:hstLbOdd} and \eqref{eq:hstLbEven}, we have
	\begin{align*}
	2 E(\cost_{\ADV}(\sigma_T))\le \alpha^{k+1}+E\left(\sum_{i=1}^{n(T)}\cost_{\ADV_i}(\sigma^i_{T_i})\right).
	\end{align*}
	Therefore, with \eqref{eq:hstLbCost} we get
	\begin{align}
	E(\cost_A&(\sigma_T))\ge \left(2^{k+1}-2-\frac{2\cdot3^k}{\alpha}\right)E(\cost_{\ADV}(\sigma_T)) +E(n(T))\alpha^k(\alpha-3^k) - (2\alpha)^{k+1}.\label{eq:hstLbIntermediate}
	\end{align}
	Since $E(\cost_{\ADV}(\sigma_T))$ is bounded by the expectation of \eqref{eq:hstLbSwitch}, for $\alpha\ge 3^k$ we can bound the middle term of \eqref{eq:hstLbIntermediate} by
	\begin{align*}
	E(n(T))\alpha^k(\alpha-3^k) &\ge \frac{E(n(T))\alpha^k(\alpha-3^k)}{E(n(T))\alpha^{k+1}}E(\cost_{\ADV}(\sigma_T))\\
	&= \left(1 - \frac{3^k}{\alpha}\right)E(\cost_{\ADV}(\sigma_T)).
	\end{align*}
	Therefore,
	\begin{align*}
	E(\cost_A(\sigma_T)) &\ge \left(2^{k+1}-1-\frac{3^{k+1}}{\alpha}\right)E(\cost_{\ADV}(\sigma_T)) - (2\alpha)^{k+1}.
	\end{align*}
	
	The induction step of \ref{it:hstLbUnbounded} is fairly straight-forward: If the number of phases $n(t)\to\infty$ as $t\to\infty$, then it follows from the fact that the cost increases by at least $\alpha^{k+1}$ per phase as this is the cost of activating the passive taxi. Otherwise, the length of the last phase goes to infinity, and so does $A$'s cost during this phase by definition of the phases.
	
	For \ref{it:hstLbFree}, we show by induction on $i$ that an augmented algorithm with extra taxi starting at $\ell$ can serve all requests before phase $i$ for free, and at the start of phase $i$ it ends up in the configuration of $A$ with an extra taxi at $\ell_i$. For $i=1$ this is obvious by choice $\ell_1=\ell$. Suppose now this holds for some $i$. Since the augmented algorithm has an extra taxi at $\ell_i$ at the beginning of phase $i$, this phase is free as well by choice of the request sequence. Thus, all requests before phase $i+1$ are free. Moreover, since the requests of phase $i$ are free, this means that all requests of phase $i$ are relocation requests or simple requests at points where the augmented algorithm has a taxi at that time. But then it follows that the configuration of the augmented algorithm is always that of $A$ with an extra taxi. Since $A$ served the last simple request of phase $i$ by moving a taxi away from $\ell_{i+1}$, the position of the extra taxi is $\ell_{i+1}$ when phase $i+1$ starts.
\end{proof}

\subsection{Lower bound for memoryless algorithms}\label{sec:LbMemoryless}
We now show the other lower bound of Theorem~\ref{thm:HSTs}. For a configuration $C$ and a request sequence $\sigma$, we denote by $w(C,\sigma)$ the optimal cost of a schedule that, starting from configuration $C$, serves $\sigma$ and then returns to configuration $C$. We will need the following lemma.

\begin{lemma}\label{lem:LbMemorylessFree}
Let $C_0',\dots,C_{n}'$ be a $(k+1)$-taxi schedule of cost $0$ for a request sequence $\sigma$.
\begin{enumerate}
    \item If $C_0,\dots,C_{n}$ is a $k$-taxi schedule for $\sigma$ with $C_0\subset C_0'$, then $C_i\subset C_i'$ for all $i$.\label{it:LemLbMemorylessFreek}
    \item If $C_0,\dots,C_n$ is a $(k+1)$-taxi schedule for $\sigma$ with $C_0\setminus C_0'=\{\ell\}$ for some $\ell\notin C_n$, then $C_n= C_n'$.\label{it:LemLbMemorylessFreek+1}
\end{enumerate}
\end{lemma}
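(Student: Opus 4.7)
The plan is to prove (a) by induction on $i$, using the cost-$0$ condition on $C'$ to pin down exactly how each $C_i'$ arises from $C_{i-1}'$, and then to derive (b) from (a) by augmenting $C'$ with an extra stationary taxi at $\ell$ and re-invoking (a) with $k$ replaced by $k+1$.

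For (a), I would proceed by induction on $i$, with the base case $i=0$ being the hypothesis $C_0\subset C_0'$. For the inductive step, the cost-$0$ property of $C'$ forces the taxi that serves $r_i=(s_i,t_i)$ in the primed schedule to already sit at $s_i$, so $s_i\in C_{i-1}'$ and $C_i'=(C_{i-1}'\setminus\{s_i\})\cup\{t_i\}$ as multisets. On the unprimed side, I may invoke the natural lazy convention that when $s_i$ is already a taxi location in $C_{i-1}$, the taxi already at $s_i$ is the one that serves. If $s_i\in C_{i-1}$, then $C_i=(C_{i-1}\setminus\{s_i\})\cup\{t_i\}$ and the inclusion follows from the induction hypothesis by applying identical multiset operations to both sides. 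If $s_i\notin C_{i-1}$, then some other taxi at position $p_i\in C_{i-1}$ with $p_i\ne s_i$ moves through $s_i$ on to $t_i$; here the unique extra element of $C_{i-1}'\setminus C_{i-1}$ (unique because $|C_{i-1}'|-|C_{i-1}|=1$) must be $s_i$ itself, and a short multiset check verifies $C_i\subset C_i'$ with the new extra element shifted to $p_i$.

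For (b), I plan to reduce to (a) via an augmentation. Define $\tilde C_i':=C_i'\cup\{\ell\}$; since this just adds a stationary taxi fixed at $\ell$, the sequence $\tilde C_0',\dots,\tilde C_n'$ is a cost-$0$ $(k+2)$-taxi schedule for $\sigma$. From $C_0\setminus C_0'=\{\ell\}$ together with $|C_0|=|C_0'|=k+1$, there is a unique $\ell''\in C_0'\setminus C_0$ and $\tilde C_0'=C_0\cup\{\ell''\}$, so $C_0\subset\tilde C_0'$. Applying (a) with $k$ replaced by $k+1$ to the $(k+1)$-taxi schedule $C_0,\dots,C_n$ against the $(k+2)$-taxi cost-$0$ schedule $\tilde C_0',\dots,\tilde C_n'$ yields $C_i\subset C_i'\cup\{\ell\}$ for all $i$. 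At $i=n$, the hypothesis $\ell\notin C_n$ then forces $C_n\subset C_n'$, and since $|C_n|=|C_n'|=k+1$, we conclude $C_n=C_n'$.

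The main obstacle I expect is in (a): careful multiset bookkeeping when $s_i$, $t_i$, $p_i$ and the extra element of $C_{i-1}'\setminus C_{i-1}$ coincide or have multiplicities greater than one, together with justifying that the lazy convention on the unprimed schedule may be assumed without loss of generality. Part (b), once (a) is in hand, follows cleanly from the augmentation trick.
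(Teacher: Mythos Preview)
Your proposal is correct and takes essentially the same approach as the paper. Your case split in (a) by whether $s_i\in C_{i-1}$ is equivalent, under the lazy convention you rightly flag, to the paper's split by whether the unprimed schedule incurs cost at step $i$; the paper's ``without loss of generality $r_i$ is a simple request at some $t_i\notin C_{i-1}$'' encodes the same assumption. Your derivation of (b) via the augmentation $\tilde C_i'=C_i'\cup\{\ell\}$ and re-application of (a) with $k$ replaced by $k+1$ is exactly the paper's one-line argument for (b), just spelled out in more detail.
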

\begin{proof}
To prove \ref{it:LemLbMemorylessFreek}, we proceed by induction on $i$. If the $k$-taxi algorithm incurs no cost when passing from $C_{i-1}$ to $C_{i}$ to serve the $i$th request $r_i$, then either $C_{i-1}=C_i$ and $r_i$ is a simple request at a point of $C_i$, or $r_i$ is a relocation request $(s_i,t_i)$ with $s_i\in C_{i-1}$ and $C_i=C_{i-1}\setminus\{s_i\}\cup\{t_i\}$. In both cases, $C_i\subset C_i'$ follows from $C_{i-1}\subset C_{i-1}'$.

Otherwise, without loss of generality $r_i$ is a simple request at some $t_i\notin C_{i-1}$, and $C_i=C_{i-1}\setminus\{x_i\}\cup\{t_i\}$ for some $x_i\in C_{i-1}$. Since the schedule $C_0',\dots,C_n'$ has cost $0$, we must have $C_{i-1}'=C_i'=C_{i-1}\cup\{t_i\}=C_i\cup\{x_i\}$.

Part \ref{it:LemLbMemorylessFreek+1} follows from part \ref{it:LemLbMemorylessFreek} if we replace $C_i'$ by $C_i'\cup\{\ell\}$ and $k$ by $k+1$.
\end{proof}

The lower bound of Theorem~\ref{thm:HSTs} for memoryless algorithms against oblivious adversaries follows from the following Theorem by letting $N\to\infty$. We use again the binary $\alpha$-HSTs $B_k^\alpha$ from the previous section.

\begin{theorem}\label{thm:LbMemoryless}
For $N\in\mathbb N$ sufficiently large, $\alpha=N^2$, each memoryless algorithm $A_k$ for the hard $k$-taxi problem on $B_k^\alpha$ and any leaf $\ell\in B_k^\alpha$, there exists a configuration $C$ of $k$ distinct points and a request sequence $\sigma$ such that
\begin{enumerate}
	\item $E(\cost_{A_k}(C,\sigma)) \ge (2^k-1)(2\alpha)^k(N-2^k)$,\label{it:LbMemorylessAlg}
	\item $0<w(C,\sigma)\le (2\alpha)^k (N + k)$,\label{it:LbMemorylessOpt}
	\item $w(C\cup\{\ell\},\sigma)=0$.\label{it:LbMemorylessFree}
\end{enumerate}
\end{theorem}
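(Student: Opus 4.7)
I would prove this by induction on $k$, exploiting the recursive structure of $B_{k+1}^\alpha$ as a root of weight $\alpha^{k+1}$ attached to two copies $T_L, T_R$ of $B_k^\alpha$. For the base case $k=1$, take $C=\{v\}$ where $v$ is the leaf other than $\ell$, and let $\sigma$ consist of $\sim 2N$ simple requests alternating between $\ell$ and $v$. A single-taxi algorithm must shuttle its taxi back and forth, paying $\alpha$ per request; the same holds for the optimal round-trip schedule from $C$; and with an extra taxi at $\ell$ both leaves are permanently covered, giving $w(C\cup\{\ell\},\sigma)=0$.

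For the inductive step, assume WLOG $\ell\in T_L$ and fix a leaf $\ell'\in T_R$. I would project $A_{k+1}$ to two memoryless $k$-taxi subalgorithms $A^L$ on $T_L$ and $A^R$ on $T_R$, each assuming the ``extra'' taxi is parked at $\ell'$ (resp.\ at $\ell$); if $A_{k+1}$ ever moves a parked taxi across the root, the cost $\alpha^{k+1}$ already dominates the claimed bound and can be handled as a degenerate case. The IH yields pairs $(C^L,\sigma^L)$ and $(C^R,\sigma^R)$ satisfying properties (a)--(c) for the subtrees with designated leaves $\ell$ and $\ell'$. Set $C = C^L\cup\{\ell'\}$, noting $\ell\notin C^L$ since otherwise $w(C^L,\sigma^L)=0$ would contradict property (b) of the IH, so $C$ has $k+1$ distinct points with $\ell\notin C$. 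Let $\sigma$ consist of $M$ alternating rounds, each of which (i) serves $\sigma^L$ in $T_L$, (ii) issues $k$ relocation requests transporting $C^L$ to $C^R$ through the root, (iii) serves $\sigma^R$ in $T_R$, and (iv) issues $k$ symmetric reverse relocations. The adversary, starting from $C$, serves each $\sigma^L$ via the IH $k$-taxi schedule in $T_L$ at cost at most $(2\alpha)^k(N+k)$, and handles the transitions and $\sigma^R$ entirely for free: relocations are free in the hard version and leave the adversary at $C^R\cup\{\ell'\}$ in $T_R$, whereupon property (c) of the IH for $T_R$ furnishes a free schedule for $\sigma^R$. Choosing $M\sim \alpha(N+k+1)/(N+k)$ then yields property (b). On the algorithm side, $A_{k+1}$ pays at least $(2^k-1)(2\alpha)^k(N-2^k)$ in each of $\sigma^L$ and $\sigma^R$ by applying the IH to $A^L$ and $A^R$, plus $\Omega(\alpha^{k+1})$ per transition block for shuttling a taxi across the root to serve each simple request preceding a relocation; this matches the target ratio $2^{k+1}-1 = 2(2^k-1)+1$ and yields property (a).

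Property (c) follows by stitching the two IH free schedules across the transition blocks: with the extra taxi at $\ell$, the $k+1$ taxis at $C^L\cup\{\ell\}$ serve $\sigma^L$ freely by IH, the $k$ forward relocations carry $C^L$ to $C^R$ for free, the $k+1$ taxis at $C^R\cup\{\ell'\}$ then serve $\sigma^R$ freely, and the reverse relocations restore $C\cup\{\ell\}$; Lemma~\ref{lem:LbMemorylessFree}(b) underwrites the consistency of this augmented schedule. The main obstacle is ensuring the memoryless $A_{k+1}$ pays the full $\Omega(\alpha^{k+1})$ per transition and the full IH cost per subphase regardless of its random state; unlike the adaptive-adversary proof of Theorem~\ref{thm:LbBinHST}, we cannot probe the algorithm between rounds to time the switch of subtrees. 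One likely fix, if needed, is to precede each subphase by an initialization block of relocation requests that deterministically places $A_{k+1}$'s taxis at the configuration required by the IH, absorbing the forcing cost into the transition term.
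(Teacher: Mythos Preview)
Your proposal has a genuine gap in the inductive step that undermines the doubling of cost needed to go from $2^k-1$ to $2^{k+1}-1$.

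Consider what happens if $A_{k+1}$ simply never moves its parked taxi from $\ell'$. You call this a non-degenerate case, and indeed your projected algorithm $A^L$ is defined precisely as $A_{k+1}$ conditioned on this event. During $\sigma^L$ the algorithm then behaves as $A^L$ and pays at least $(2^k-1)(2\alpha)^k(N-2^k)$ by the IH, fine. But now your relocation block $(C^L\to C^R)$ moves the $k$ taxis (or whatever remains of them, after suitable simple requests) into $T_R$, while the parked taxi is still at $\ell'\in T_R$. So the algorithm arrives at $\sigma^R$ in configuration $C^R\cup\{\ell'\}$, which by your own IH property (c) for $T_R$ is the \emph{free} configuration: $w(C^R\cup\{\ell'\},\sigma^R)=0$. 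Since a cost-zero schedule consists only of relocations and simple requests at already-occupied points, the memoryless $A_{k+1}$ is forced to follow it and pays nothing on $\sigma^R$. Your claimed ``$A_{k+1}$ pays at least $(2^k-1)(2\alpha)^k(N-2^k)$ in each of $\sigma^L$ and $\sigma^R$'' therefore fails; you only get the $\sigma^L$ contribution, yielding a ratio of $2^k-1$ rather than $2^{k+1}-1$. The IH bound (a) for $\sigma^R$ applies to $A^R$, which is $A_{k+1}$ with a taxi parked at $\ell\in T_L$, but your construction never puts the algorithm's parked taxi at $\ell$. Similarly, your ``$\Omega(\alpha^{k+1})$ per transition'' never materializes: no request in your transition blocks forces a crossing.

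The paper's proof resolves exactly this tension, and the mechanism is the main idea you are missing. It does not treat the crossing event as degenerate; instead it defines $p_{r\ell}$ as the limiting probability that $A_{k+1}$, starting from $C_{r\ell}\cup\{r\}$, moves its parked taxi across the root during a ``luring'' sequence $\sigma_{r\ell}C_{r\ell}^m$ (the trailing $C_{r\ell}^m$ block of repeated simple requests is what forces the algorithm back to a known configuration with high probability, exploiting memorylessness). The construction of $\sigma$ then splits into two cases. If $p_{r\ell}+p_{\ell r}$ is large, the luring sequence is followed by many repetitions of $\sigma_{\ell r}C_{\ell r}^m$: with probability $p_{r\ell}$ the algorithm has crossed (paying $\alpha^{k+1}$) and now sits at $C_{\ell r}\cup\{\ell\}$, precisely the configuration where the IH for $A^R$ bites. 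If $p_{r\ell}+p_{\ell r}$ is small, the luring sequence is simply repeated many times, and the algorithm pays the IH cost roughly $1/p_{r\ell}$ times before escaping. This probabilistic case split on the algorithm's crossing behavior is what your ``initialization block'' idea cannot replace: the issue is not that the algorithm is in an unknown state, but that the natural state your construction drives it to is the free one.
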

\begin{proof}
Let us first fix some notation. For request sequences $\sigma_1$ and $\sigma_2$, we denote by $\sigma_1\sigma_2$ their concatenation. For $m\ge 1$, let $\sigma_1^m$  be the $m$-fold repetition of $\sigma_1$, i.e., $\sigma_1^0$ is the empty sequence and $\sigma_1^{m+1}=\sigma_1^m\sigma_1$. For sets of points $X=\{x_1,\dots,x_n\}$ and $Y=\{y_1,\dots,y_n\}$ with $x_1,\dots,x_n$ and $y_1,\dots,y_n$ sorted lexicographically, we denote by $(X\to Y)$ the sequence of relocation requests $(x_1,y_1)\dots(x_n,y_n)$. Thus, if an algorithm is in configuration $X$, then the request sequence $(X\to Y)$ changes the configuration to $Y$. Abusing notation, we also write $X$ for the sequence of simple requests $x_1\dots x_n$. So if $X$ is a configuration of distinct points, then the request sequence $X^m$ forces an algorithm to either move to the configuration $X$ or suffer large cost (if $m$ is large). For an algorithm $A$, an initial configuration $C$ and a request sequence $\sigma$, we denote by $A(C,\sigma)$ the corresponding schedule.

We prove the theorem by induction on $k$. For $k=1$, we can write $B_1^\alpha=\{\ell,r\}$ for some $r$. The theorem holds for $C=\{r\}$ and $\sigma=(\ell r)^N$.

For the induction step, suppose the theorem holds for some fixed $k$ and we want to prove it for $k+1$. Say we are given an algorithm $A_{k+1}$ for the $k$-taxi problem and a leaf $\ell\in B_{k+1}^\alpha$. We will refer to the subtree $B_k^\alpha$ containing $\ell$ as the \emph{left subtree} and the other subtree $B_k^\alpha$ as the \emph{right subtree}. For a leaf $r$ in the right subtree, write $A_{k+1}|_r$ for the $k$-taxi algorithm in the left subtree that behaves like $A_{k+1}$ conditioned on having one taxi at $r$ that does not move.\footnote{If for some configuration $C\cup\{r\}$, $A_{k+1}$ moves the taxi from $r$ with probability $1$ for a given request, the move of $A_{k+1}|_r$ from $C$ is defined arbitrarily.}
Let $C_{r\ell}$ and $\sigma_{r\ell}$ be the $k$-taxi configuration and request sequence in the left subtree induced by the induction hypothesis applied to $A_{k+1}|_r$ and $\ell$. For a request sequence $\sigma'$, let $F_{r}(\sigma')$ denote the final configuration of $A_{k+1}(C_{r\ell}\cup\{r\},\sigma')$.

In the following, $m$ will be some large integer. Let
\begin{align*}
p_{r\ell,m} &= P(r\notin F_r(\sigma_{r\ell}C_{r\ell}^m))\\
\epsilon_{r\ell,m} &= P(C_{r\ell}\not\subset F_r(\sigma_{r\ell}C_{r\ell}^m)).
\end{align*}
Note that $p_{r\ell,m}$ is a non-decreasing and $\epsilon_{r\ell,m}$ a non-increasing function of $m$. Thus, we can define
\begin{align*}
p_{r\ell}&=\lim_{m\to\infty} p_{r\ell,m}\\
\epsilon_{r\ell}&=\lim_{m\to\infty} \epsilon_{r\ell,m}.
\end{align*}
Define $C_{\ell r}$, $\sigma_{\ell r}$, $p_{\ell r,m}$, $\epsilon_{\ell r,m}$, $p_{\ell r}$, $\epsilon_{\ell r}$ and $F_\ell$ similarly with the roles of $\ell$ and $r$ reversed.

Roughly speaking, the values $p_{r\ell}$ and $p_{\ell r}$ indicate how aggressively the algorithm moves between the two subtrees. We use two different definitions for $\sigma$ depending on whether these values are large or small. In both cases, we set $C=C_{r\ell}\cup\{r\}$.

\ \\
\noindent \textbf{Case 1:} $p_{\ell r}+p_{r\ell}> \frac{2^k}{N}$ and $p_{r\ell}>0$ and $p_{\ell r}>0$.

\noindent The central building blocks of $\sigma$ are the subsequences
\begin{align*}
\sigma_r = \sigma_{r\ell}C_{r\ell}^m(C_{r\ell}\to C_{\ell r})(\sigma_{\ell r}C_{\ell r}^m)^m(C_{\ell r}\to C_{r\ell})
\end{align*}
and $\sigma_\ell$ defined in the same way with $\ell$ and $r$ reversed. The idea of $\sigma_r$ is that when starting from configuration $C_{r\ell}\cup\{r\}$, the prefix $\sigma_{r\ell}C_{r\ell}^m$ shall lure the algorithm to move the taxi from $r$ to the left subtree. If it does so, it will be punished during the part $(\sigma_{\ell r}C_{\ell r}^m)^m$, which forces all taxis back to the right subtree. Since $p_{\ell r}+p_{r\ell}> \frac{2^k}{N}$, at least one of $\sigma_r$ and $\sigma_\ell$ will successfully exploit the algorithm's aggressiveness. We define the entire request sequence as
\begin{align*}
\sigma = \sigma_r^\alpha(C_{r\ell}\to C_{\ell r})(C_{\ell r}\cup\{\ell\})^m\sigma_\ell^\alpha(C_{\ell r}\to C_{r\ell}).
\end{align*}

We first prove the induction step of \ref{it:LbMemorylessOpt}. Clearly, $0<w(C,\sigma)$. Moreover, by the induction hypothesis, we have $w(C_{r\ell}\cup\{r\},\sigma_{r\ell})\le w(C_{r\ell},\sigma_{r\ell})\le (2\alpha)^k (N + k)$ and $w(C_{\ell r}\cup\{r\},\sigma_{\ell r})=0$. Therefore, $w(C_{r\ell}\cup\{r\},\sigma_r)\le (2\alpha)^k (N + k)$. By symmetry, we also have $w(C_{\ell r}\cup\{\ell\},\sigma_\ell)\le (2\alpha)^k(N + k)$. Thus,
\begin{align*}
w(C,\sigma) &\le \alpha w(C_{r\ell}\cup\{r\},\sigma_r) + \alpha^{k+1} + \alpha w(C_{\ell r}\cup\{\ell\},\sigma_\ell) + \alpha^{k+1}\\
&\le (2\alpha)^{k+1}(N+k+1).
\end{align*}

To see \ref{it:LbMemorylessFree}, it follows easily from the induction hypothesis that, starting from $C\cup\{\ell\}= C_{r\ell}\cup\{\ell,r\}$, serving $\sigma$ incurs no cost and makes the algorithm return to $C\cup\{\ell\}$ in the end.

The most technical part of this proof is the induction step of \ref{it:LbMemorylessAlg}. We can assume that
\begin{align}
    \limsup_{m\to\infty}E(\cost_{A_{k+1}}(C,\sigma))<\infty,\label{eq:LbMemorylessAss}
\end{align}
since otherwise \ref{it:LbMemorylessAlg} follows immediately for some large choice of $m$.

Let us first examine the behaviour of $A_{k+1}(C_{r\ell}\cup\{r\}, \sigma_{r\ell}C_{r\ell}^m)$. With probability $p_{r\ell,m}$, the algorithm moves the taxi from $r$ to the left subtree for cost $\alpha^{k+1}$. Otherwise (with probability $1-p_{r\ell,m}$), the taxi at $r$ stays put; conditioned on this, the expected cost suffered during the prefix $\sigma_{r\ell}$ is at least $(2^k-1)(2\alpha)^k(N-2^k)$ by the induction hypothesis. Moreover, if $F_r(\sigma_{r\ell}C_{r\ell}^m)\not\supset C_{r\ell}$, then $F_r(\sigma_{r\ell}\sigma')\not\supset C_{r\ell}$ for any prefix $\sigma$ of $C_{r\ell}^m$ and the algorithm incurs cost at least $\alpha$ during each of the $m$ subsequences $C_{r\ell}$.
Overall, we have
\begin{align}\label{eq:LbMemoryless1}
E(\cost_{A_{k+1}}(C_{r\ell}\cup\{r\}, \sigma_{r\ell}C_{r\ell}^m)) \ge p_{r\ell,m}\alpha^{k+1} + (1-p_{r\ell,m})(2^k-1)(2\alpha)^k(N-2^k) + \epsilon_{r\ell,m}\alpha m.
\end{align}

\begin{claim}\label{cl:LbMemorylessIff}
$r\notin F_r(\sigma_{r\ell}C_{r\ell}^m)$ if and only if $F_r(\sigma_{r\ell}C_{r\ell}^m)=C_{r\ell}\cup\{\ell\}$.
\end{claim}
\begin{proof}
The direction ``if'' holds since $C_{r\ell}\cup\{\ell\}$ is contained in the left subtree while $r$ is in the right subtree. ``Only if'' follows from part \ref{it:LbMemorylessFree} of the induction hypothesis and Lemma~\ref{lem:LbMemorylessFree}\ref{it:LemLbMemorylessFreek+1}.
\end{proof}

From the claim it also follows that the two events defining $\epsilon_{r\ell,m}$ and $p_{r\ell,m}$ are disjoint. So with probability $1-p_{r\ell,m}-\epsilon_{r\ell,m}$, none of the two events happens. This implies
\begin{align}
&P(F_{r}(\sigma_{r\ell}C_{r\ell}^m)=C_{r\ell}\cup\{\ell\})=p_{r\ell,m}\label{eq:LbMemorylessp}\\
&P(F_{r}(\sigma_{r\ell}C_{r\ell}^m)=C_{r\ell}\cup\{r\}) = 1-p_{r\ell,m} - \epsilon_{r\ell,m}\label{eq:LbMemorylesspeps}.
\end{align}
For $i\ge 0$, we have
\begin{align}
    &P(F_r(\sigma_{r\ell}C_{r\ell}^m(C_{r\ell}\to C_{\ell r})(\sigma_{\ell r}C_{\ell r}^m)^i)=C_{\ell r}\cup\{\ell\}) \nonumber\\
    &\ge P(F_r(\sigma_{r\ell}C_{r\ell}^m)=C_{r\ell}\cup\{\ell\} \land \forall j=1,\dots, i\colon F_r(\sigma_{r\ell}C_{r\ell}^m(C_{r\ell}\to C_{\ell r})(\sigma_{\ell r}C_{\ell r}^m)^j)=C_{\ell r}\cup\{\ell\}) \nonumber\\
    &= P(F_r(\sigma_{r\ell}C_{r\ell}^m)=C_{r\ell}\cup\{\ell\}) P(F_\ell(\sigma_{\ell r}C_{\ell r}^m)=C_{\ell r}\cup\{\ell\})^i \nonumber\\
    &= p_{r\ell,m}(1-p_{\ell r,m}-\epsilon_{\ell r,m})^i,\label{eq:LbMemorylessppeps}
\end{align}
where the first equation uses memorylessness of $A_{k+1}$ and the last equation uses \eqref{eq:LbMemorylessp} and the symmetric version of \eqref{eq:LbMemorylesspeps}.

Using \eqref{eq:LbMemorylessppeps}, \eqref{eq:LbMemoryless1} and the symmetric version of \eqref{eq:LbMemoryless1}, we can bound the cost during $\sigma_r$ by
\begin{align*}
&E(\cost_{A_{k+1}}(C_{r\ell}\cup\{r\},\sigma_r))\\
&\ge E(\cost_{A_{k+1}}(C_{r\ell}\cup\{r\},\sigma_{r\ell}C_{r\ell}^m) + \sum_{i=0}^{m-1}p_{r\ell,m}(1-p_{\ell r,m}-\epsilon_{\ell r,m})^i E(\cost_{A_{k+1}}(C_{\ell r}\cup\{\ell\},\sigma_{\ell r}C_{\ell r}^m)\\
&\ge p_{r\ell,m}\alpha^{k+1} + (1-p_{r\ell,m})(2^k-1)(2\alpha)^k(N-2^k) + \epsilon_{r\ell,m}\alpha m \\
&\qquad + p_{r\ell,m}\frac{1-(1-p_{\ell r,m}-\epsilon_{\ell r,m})^m}{p_{\ell r,m}+\epsilon_{\ell r,m}} \left(p_{\ell r,m}\alpha^{k+1} + (1-p_{\ell r,m})(2^k-1)(2\alpha)^k(N-2^k) + \epsilon_{\ell r,m}\alpha m\right).
\end{align*}
Due to \eqref{eq:LbMemorylessAss}, it follows that $\epsilon_{\ell r}=\epsilon_{r\ell}=0$. Thus, letting $m\to\infty$ and using that $p_{\ell r}>0$, we get
\begin{align}
    &\limsup_{m\to\infty}E(\cost_{A_{k+1}}(C_{r\ell}\cup\{r\},\sigma_r)\nonumber\\
    &\ge p_{r\ell}\alpha^{k+1} + (1-p_{r\ell})(2^k-1)(2\alpha)^k(N-2^k) + \frac{p_{r\ell}}{p_{\ell r}} \left(p_{\ell r}\alpha^{k+1} + (1-p_{\ell r})(2^k-1)(2\alpha)^k(N-2^k)\right)\nonumber\\
    &= 2p_{r\ell}\alpha^{k+1} + \left(1-2p_{r\ell} + \frac{p_{r\ell}}{p_{\ell r}}\right)(2^k-1)(2\alpha)^k(N-2^k).\label{eq:LbMemorylessSigmar}
\end{align}

The next claim says that with arbitrarily high probability, the algorithm returns to its initial configuration after each subsequence $\sigma_r$.
\begin{claim}
For all $i\le \alpha$, $P(F_r(\sigma_r^i)=C_{r\ell}\cup\{r\})\to 1$ as $m\to\infty$.\label{cl:LbMemorylessReturn}
\end{claim}
\begin{proof}
    Since
    \begin{align*}
        P(F_r(\sigma_r^i)=C_{r\ell}\cup\{r\})\ge P(\forall j=1,\dots,i\colon F_r(\sigma_r^j)=C_{r\ell}\cup\{r\}) = P(F_r(\sigma_r)=C_{r\ell}\cup\{r\})^i,
    \end{align*}
    we only need to show the claim for $i=1$. Thanks to \eqref{eq:LbMemorylessp}, \eqref{eq:LbMemorylesspeps} and $\epsilon_{r\ell}=0$, it suffices to show
    \begin{align}
        &P(F_r(\sigma_{r})=C_{r\ell}\cup\{r\}\mid F_r(\sigma_{r\ell}C_{r\ell}^m)=C_{r\ell}\cup\{\ell\}) \to 1\text{ as $m\to\infty$}\label{eq:LbMemorylessEndr1}\\
        &P(F_r(\sigma_{r})=C_{r\ell}\cup\{r\}\mid F_r(\sigma_{r\ell}C_{r\ell}^m)=C_{r\ell}\cup\{r\}) = 1. \label{eq:LbMemorylessEndr2}
    \end{align}
    
    We first show \eqref{eq:LbMemorylessEndr2}. When arriving in $C_{r\ell}\cup\{r\}$ after $\sigma_{r\ell} C_{r\ell}^m$, then the relocation sequence $(C_{r\ell}\to C_{\ell r})$ changes the configuration to $C_{\ell r}\cup\{r\}$. Thanks to part \ref{it:LbMemorylessFree} of the induction hypothesis, the following $\sigma_{\ell r}$ is then served for free and $A_{k+1}$ returns to configuration $C_{\ell r}\cup\{r\}$ at the end of it. The subsequent subsequence $C_{\ell r}^m$ does not cause any movement. Thus, at the end of the whole subsequence $(\sigma_{\ell r}C_{\ell r}^m)^m$, the configuration is $C_{\ell r}\cup\{r\}$. The last set of relocation requests changes the configuration to $C_{r\ell}\cup\{r\}$.

    For \eqref{eq:LbMemorylessEndr1}, if the configuration is $C_{r\ell}\cup\{\ell\}$ after $\sigma_{r\ell} C_{r\ell}^m$, then the relocation sequence $(C_{r\ell}\to C_{\ell r})$ changes it to $C_{\ell r}\cup\{\ell\}$. Thus,
    \begin{align*}
        &P(F_r(\sigma_{r})=C_{r\ell}\cup\{r\}\mid F_r(\sigma_{r\ell}C_{r\ell}^m)=C_{r\ell}\cup\{\ell\})\\
        &= P(F_\ell((\sigma_{\ell r}C_{\ell r}^m)^m(C_{\ell r}\to C_{r\ell}))=C_{r\ell}\cup\{r\})\\
        &\ge P(F_\ell((\sigma_{\ell r}C_{\ell r}^m)^m)=C_{\ell r}\cup\{r\})\\
        &= 1 - P(\forall i\le m\colon F_\ell((\sigma_{\ell r}C_{\ell r}^m)^i)\ne C_{\ell r}\cup\{r\})\\
        &= 1 - P(\forall i\le m\colon F_\ell((\sigma_{\ell r}C_{\ell r}^m)^i) = C_{\ell r}\cup\{\ell\}) - P(\exists i\le m\colon F_\ell((\sigma_{\ell r}C_{\ell r}^m)^i) \notin \{C_{\ell r}\cup\{\ell\},C_{\ell r}\cup\{r\}\})\\
        &= 1 - (1-p_{\ell r,m} - \epsilon_{\ell r,m})^m - P(\exists i\le m\colon C_{\ell r}\not\subset F_\ell((\sigma_{\ell r}C_{\ell r}^m)^i)),
    \end{align*}
    where the second equation uses part \ref{it:LbMemorylessFree} of the induction hypothesis in the same way as it was used in the proof of \eqref{eq:LbMemorylessEndr2}. Due to our assumptions $p_{\ell r}>0$ and \eqref{eq:LbMemorylessAss}, both subtrahends in the last term tend to $0$ as $m\to\infty$.
\end{proof}

Let $q_m=P\left(F_r(\sigma_r^\alpha(C_{r\ell}\to C_{\ell r})(C_{\ell r}\cup\{\ell\})^m)=C_{\ell r}\cup\{\ell\}\right)$. This is the probability of successfully forcing configuration $C_{\ell r}\cup\{\ell\}$, bringing $A_{k+1}$ in the symmetric situation of the initial configuration, before the ``second half'' of $\sigma$. Again by \eqref{eq:LbMemorylessAss}, we have $q_m\to 1$ as $m\to\infty$. We are now ready to put the parts together:
\begin{align*}
    &E(\cost_{A_{k+1}}(C,\sigma)) \\
    &\ge \sum_{i=0}^{\alpha-1}P(F_r(\sigma_r^i)=C_{r\ell}\cup\{r\})E(\cost_{A_{k+1}}(C_{r\ell}\cup\{r\},\sigma_r))\\
    &\quad+ q_m\sum_{i=0}^{\alpha-1}P(F_\ell(\sigma_\ell^i)=C_{\ell r}\cup\{\ell\})E(\cost_{A_{k+1}}(C_{\ell r}\cup\{\ell\},\sigma_\ell))
\end{align*}
and therefore
\begin{align*}
    &\limsup_{m\to\infty} E(\cost_{A_{k+1}}(C,\sigma))\\
    &\ge \alpha \limsup_{m\to\infty}\left[ E(\cost_{A_{k+1}}(C_{r\ell}\cup\{r\},\sigma_r))+ E(\cost_{A_{k+1}}(C_{\ell r}\cup\{\ell\},\sigma_\ell))\right]\\
    &\ge \alpha\left[
    2p_{r\ell}\alpha^{k+1} + \left(1-2p_{r\ell} + \frac{p_{r\ell}}{p_{\ell r}}\right)(2^k-1)(2\alpha)^k(N-2^k)\right.\\
    &\qquad + \left.2p_{\ell r}\alpha^{k+1} + \left(1-2p_{\ell r} + \frac{p_{\ell r}}{p_{r \ell}}\right)(2^k-1)(2\alpha)^k(N-2^k)\right]\\
    &\ge 2\alpha\left[(p_{r\ell}+p_{\ell r})\alpha^{k+1} + \left(2-(p_{r\ell}+p_{\ell r})\right)(2^k-1)(2\alpha)^k(N-2^k)\right]\\
    &\ge (2\alpha)^{k+1}\left[N - 2^k(2^k-1)(1-\frac{2^k}{N}) + (2^{k+1}-2)(N-2^k)\right]\\
    &> (2\alpha)^{k+1}(2^{k+1}-1)(N - 2^{k+1}),
\end{align*}
where the first inequality uses $q_m\to 1$ and Claim~\ref{cl:LbMemorylessReturn} and its symmetric case, the second inequality uses \ref{eq:LbMemorylessSigmar} and its symmetric case, the third inequality uses that $\frac{x}{y}+\frac{y}{x}\ge 2$ for $x,y>0$, and the fourth inequality holds for large enough $\alpha=N^2$ and uses $p_{r\ell}+p_{\ell r} > \frac{2^k}{N}$. Since the last inequality is strict, the induction step follows for large enough $m$.

\ \\
\noindent \textbf{Case 2:} $p_{\ell r}+p_{r\ell}\le \frac{2^k}{N}$ or $p_{r\ell}=0$ or $p_{\ell r}=0$.

\noindent In this case, we exploit the reluctance of $A_{k+1}(C_{r\ell}\cup\{r\},\sigma_{r\ell}C_{r\ell}^m)$ to move the taxi from $r$ to the left subtree (or likewise with $r$ and $\ell$ reversed) by requesting $\sigma_{r\ell}C_{r\ell}^m$ many times in a row. Eventually, the algorithm will have to bring the taxi from $r$ or it will suffer unbounded cost. Concretely, we define
\begin{align*}
\sigma_\ell &= (\sigma_{r\ell}C_{r\ell}^m)^m(C_{r\ell}\to C_{\ell r})\\
\sigma_r &= (\sigma_{\ell r}C_{\ell r}^m)^m(C_{\ell r}\to C_{r\ell})\\
\sigma &= (\sigma_\ell\sigma_r)^{2^kN}.
\end{align*}

We begin with the induction step of \ref{it:LbMemorylessOpt}. From initial configuration $C=C_{r\ell}\cup\{r\}$, the offline algorithm can move the taxi from $r$ to $\ell$ for cost $\alpha^{k+1}$ at the start. Now, from configuration $C_{r\ell}\cup\{\ell\}$, the sequence $(\sigma_{r\ell}C_{r\ell}^m)^m(C_{r\ell}\to C_{\ell r})$ is served for free thanks to part \ref{it:LbMemorylessFree} of the induction hypothesis, leading to configuration $C_{\ell r}\cup\{\ell\}$. Then, the taxi from $\ell$ moves back to $r$ for another cost $\alpha^{k+1}$, so that no more cost is incurred during $(\sigma_{\ell r}C_{\ell r}^m)^m(C_{\ell r}\to C_{r\ell})$, which makes the algorithm returns to $C$. Repeating this $2^kN$ times, we get $w(C,\sigma)\le 2^kN(\alpha^{k+1}+\alpha^{k+1}) \le (2\alpha)^{k+1}(N + k)$, as desired.

The proof of $0<w(C,\sigma)$ and \ref{it:LbMemorylessFree} is again straightforward.

For part \ref{it:LbMemorylessAlg}, if $p_{r\ell}=0$ or $p_{\ell r}=0$, then the cost is unbounded as $m\to\infty$ during the first subsequence $\sigma_\ell\sigma_r$ already. So we can assume $p_{r\ell}>0$ and $p_{\ell r}>0$. The same techniques as in the aggressive case yield
\begin{align*}
&E(\cost_{A_{k+1}}(C_{r\ell}\cup\{r\},\sigma_\ell))\\
&\quad\ge\sum_{i=0}^{m-1}(1-p_{r\ell,m}-\epsilon_{r\ell,m})^i E(\cost_{A_{k+1}}(C_{r\ell}\cup\{r\},\sigma_{r\ell}C_{r\ell}^m))\\
&\quad\ge \frac{1-(1-p_{r\ell,m}-\epsilon_{r\ell,m})^m}{p_{r\ell,m}+\epsilon_{r\ell,m}}\left(p_{r\ell,m}\alpha^{k+1} + (1-p_{r\ell,m})(2^k-1)(2\alpha)^k(N-2^k) + \epsilon_{r\ell,m}\alpha m\right)
\end{align*}
and, with assumption \eqref{eq:LbMemorylessAss},
\begin{align*}
P(F_r(\sigma_\ell)=C_{\ell r}\cup\{\ell\}) \to 1\text{ as $m\to\infty$}.
\end{align*}
Together with the symmetric equivalents of these statements, this gives
\begin{align*}
\limsup_{m\to\infty}E(\cost_{A_{k+1}}(\sigma,C))&\ge 2^k N\left[2\alpha^{k+1} + \left(\frac{1}{p_{r\ell}} + \frac{1}{p_{\ell r}} - 2\right)(2^k-1)(2\alpha)^k(N-2^k)\right]\\
&\ge (2\alpha)^{k+1} (N - (2^k-1)2^k) + 2^kN\frac{4}{p_{\ell r}+p_{r\ell}}(2^k-1)(2\alpha)^k(N-2^k)\\
&\ge (2\alpha)^{k+1} \left(N - (2^k-1)2^k + (2^{k+1}-2)(N-2^k)\right)\\
&> (2\alpha)^{k+1} (2^{k+1}-1)(N - 2^{k+1}),
\end{align*}
where the second inequality uses $\frac{1}{x}+\frac{1}{y}\ge \frac{4}{x+y}$ for $x,y>0$ and the third inequality holds for large enough $N$ and uses $p_{\ell r}+p_{\ell r}\le \frac{2^k}{N}$. This completes the proof.
\end{proof}

\subsection{Reduction from layered graph traversal}\label{subsec:redLGT}

The \emph{layered width-$k$ graph traversal problem ($k$-LGT)} is defined as follows: A searcher starts at a node $s$ of a graph with non-negative edge weights and whose nodes can be partitioned into layers $L_0=\{s\},L_1,L_2,\dots$ such that all edges run between consecutive layers. Each layer contains at most $k$ nodes. The goal is to move the searcher along the edges to some vertex $t$ while minimizing the distance traveled by the searcher. However, the nodes in $L_\ell$ and the edges between $L_{\ell-1}$ and $L_\ell$ are only revealed when the searcher reaches a node in $L_{\ell-1}$.

It is known that the deterministic competitive ratio of $k$-LGT is between $\Omega(2^k)$ \cite{FiatFKRRV98} and $O(k2^k)$ \cite{Burley96}, and it is $9$ for $k=2$ \cite{PapadimitriouY91}. By the following reduction, these lower bounds translate to the hard $k$-taxi problem, giving Theorem~\ref{thm:red} as well as the lower bound of Theorem~\ref{thm:k=2}.\footnote{Recall that the $\Omega(2^k)$ lower bound also follows already from the lower bound for adaptive adversaries in Theorem~\ref{thm:HSTs}.}

\begin{theorem}
	If there exists a $\rho$-competitive deterministic algorithm for the hard $k$-taxi problem, then there exists a $\rho$-competitive deterministic algorithm for $k$-LGT.
\end{theorem}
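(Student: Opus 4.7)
Given a deterministic $\rho$-competitive algorithm $A$ for the hard $k$-taxi problem, I will construct a deterministic $\rho$-competitive algorithm $A^*$ for $k$-LGT. The setup is to take the metric space of the $k$-taxi instance to be the shortest-path metric of the LGT graph, place all $k$ taxis at the start vertex $s$, and let $A^*$ simulate $A$ on a request sequence that is generated adaptively as the LGT layers are revealed. The LGT searcher's trajectory is read off from the movements of a designated ``active'' taxi of $A$.

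The crux is how the request sequence is built. The key tool is that relocation requests are free in the hard $k$-taxi model whenever a taxi is at the source. Exploiting this, whenever a new layer $L_\ell$ is revealed, $A^*$ issues a block of requests consisting of simple requests at the nodes of $L_\ell$ (repeated in a round-robin fashion so that $A$ must eventually have a taxi at every node of $L_\ell$, since otherwise its cost becomes unbounded) interleaved with relocation requests between nodes of $L_\ell$ and their parents in $L_{\ell-1}$. These free relocations are designed so that an offline algorithm with foresight can serve the entire block using just a single ``working'' taxi plus previously positioned ``helper'' taxis, paying at most the single edge weight along the LGT optimum path that enters $L_\ell$.

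The analysis then splits in two. First, the online searcher cost is bounded by $A$'s cost: whenever $A$ moves its active taxi from one node to another, the LGT searcher mirrors this move, and moves of non-active taxis only add to $A$'s cost without affecting the searcher. Second, the offline $k$-taxi cost on the constructed instance is at most the LGT shortest $s$-to-$t$ path cost, using an offline strategy that maintains, after each layer's block, one taxi at each node of the current layer (paid for by absorbing free relocations) plus one ``tracking'' taxi positioned at the next node on the LGT OPT path. Combining these two bounds yields $\cost(A^*) \le \cost(A) \le \rho\cdot\OPT_{\text{taxi}} \le \rho\cdot\OPT_{\text{LGT}}$, giving the claimed competitive ratio.

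The main obstacle lies in defining the block of requests for each layer precisely, so that (i) the offline can serve the block at cost no larger than the LGT OPT edge weight entering that layer, and (ii) $A$ is nevertheless forced to commit to a specific active taxi whose movements faithfully encode the searcher's decisions. Getting the free-relocation schedule right is delicate: the offline must be able to maintain a ``rolling'' configuration covering each layer's nodes without ever paying extra beyond one edge of the OPT path, while $A$, lacking foresight, is forced into the competitive overhead that the known $\Omega(2^k)$ $k$-LGT lower bound ultimately extracts.
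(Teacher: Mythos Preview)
Your outline has the right skeleton---simulate a hard $k$-taxi algorithm on a request sequence manufactured from the LGT instance---but the two load-bearing steps are not actually carried out, and the sketched versions do not work as stated.

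\medskip
\textbf{How is the searcher's move extracted from $A$?} You say you follow a ``designated active taxi'' and mirror its movement, but you never specify which taxi this is or why its move is a single edge from $L_{\ell-1}$ into $L_\ell$. A taxi in the hard $k$-taxi problem simply pays $d(u,v)$ to go from $u$ to $v$; nothing forces that displacement to be a legal one-edge step in the layered graph. The paper's mechanism is quite different and precise: after arranging (via relocations) that the online configuration $C_\ell$ covers all of $L_\ell$, it issues a \emph{single} simple request at the searcher's current position $s_{\ell-1}$. The algorithm $A$ then moves one taxi from some $x\in C_\ell\subseteq L_\ell$ to $s_{\ell-1}$, and the searcher moves in the \emph{reverse} direction, from $s_{\ell-1}$ to $x$. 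This makes the per-layer online costs of $A$ and the searcher exactly equal and guarantees the searcher advances by one layer. There is no tracking of a persistent active taxi.

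\medskip
\textbf{The offline bound and the metric space.} Your offline strategy keeps ``one taxi at each node of the current layer plus one tracking taxi,'' which is $k+1$ taxis when $|L_\ell|=k$. The paper's offline argument is more delicate: it shows inductively that for every $y\in L_\ell$ one can reach the configuration $C_\ell - y + s_{\ell-1}$ (i.e., cover $L_\ell\setminus\{y\}$ and also $s_{\ell-1}$) at cost at most $d(s_0,y)$, and takes $y$ to be the LGT target at the end. This analysis relies on first reducing $k$-LGT to $0$--$1$-weighted \emph{trees} (and a further restriction that only the just-visited node branches nontrivially), so that non-serving taxis can advance along weight-$0$ edges via free relocations. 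It also fixes the metric-space issue you glossed over: the LGT graph is revealed online, so you cannot take it as the taxi metric. The paper runs $A$ on a fixed infinite tree (every node with infinitely many children, all edges of weight $1$) into which any revealed $0$--$1$ tree embeds isometrically online.

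\medskip
In short, your last paragraph correctly identifies the crux---defining the per-layer block so that offline pays only the OPT edge while online is forced to commit---but that is precisely the content of the proof, and the round-robin-plus-relocation sketch does not supply it.
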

\begin{proof}
	Fiat et al.~\cite{FiatFKRRV98} showed that $k$-LGT has the same competitive ratio as its restricted case where the graph is a tree and all edges have weight $0$ or $1$. Let $s_\ell$ be the first node visited by the online algorithm in the $\ell$th layer; in particular, $s_0$ is the starting position of the searcher. We can assume that any node $v\in L_\ell\setminus\{s_\ell\}$ has at most one adjacent node in layer $\ell+1$, and the connecting edge would be of weight $0$. This is because any other edges leaving $v$ can be delayed to a later layer, once the searcher moves to that branch. We design a $\rho$-competitive algorithm for the traversal of this type of $0$-$1$-weighted trees. The movement of the searcher is determined by the decisions of a $\rho$-competitive $k$-taxi algorithm.
	
	Let $T$ be the layered tree of width at most $k$ for the traversal problem with the aforementioned properties. As metric space for the $k$-taxi algorithm we use an infinite tree where each node has infinitely many children and each edge has weight $1$. Note that $T$ can be isometrically embedded into this infinite tree by contracting any nodes connected by an edge of weight $0$ to a single node. Moreover, such an embedding can be constructed online while $T$ is being revealed. We will only make taxi requests at nodes corresponding to the revealed part of $T$, so we can pretend that the requests and taxis are located on $T$ itself.
	
	We will maintain the following invariant: Right after a layer $L_\ell$ gets revealed because the searcher moved to $s_{\ell-1}$, the configuration $C_\ell$ of the $k$ taxis is a multiset over $L_\ell$ where each node of $L_\ell$ has multiplicity at least $1$. Initially this situation can be achieved by relocation requests. Then, a simple request at $s_{\ell-1}$ is issued. The $k$-taxi algorithm will serve the request by moving from some $x\in C_\ell$ to $s_{\ell-1}$. We will move the same distance in the traversal problem by moving the searcher from $s_{\ell-1}$ to $x$, which reveals the $(\ell+1)$st layer and sets $s_\ell:=x$. Some of the taxis may be able to move to layer $\ell+1$ for free along edges of weight $0$. By making relocation requests it will be ensured that all taxis occupy all of the at most $k$ nodes in layer $\ell+1$, so the invariant holds again. This defines a procedure for traversing the tree.
	
	Note that the online cost for traversing the tree is the same as the online cost for the $k$-taxi problem. It only remains to show that the optimal cost for the traversal problem is at least the optimal cost for the $k$-taxi problem.
	
	Let $\sigma_{\ell}$ be the request sequence up to the point where the online taxis are in configuration $C_\ell$. For $y\in C_\ell$ we denote by $C_\ell-y+s_{\ell-1}$ the configuration obtained from $C_\ell$ by replacing one copy of $y$ by $s_{\ell-1}$. Let $w_\ell(C_\ell-y+s_{\ell-1})$ be the optimal cost to serve $\sigma_{\ell}$ and subsequently end up in configuration $C_\ell-y+s_{\ell-1}$. We claim that $w_\ell(C_\ell-y+s_{\ell-1})\le d(s_0,y)$ for all $y\in C_\ell$, where $d$ denotes the distance function on $T$. We prove this by induction on $\ell$.
	
	After the initial relocation requests, the offline configuration is $C_1$ and configuration $C_1-y+s_{0}$ can be reached for cost $d(s_0,y)$ by moving a taxi from $y$ to $s_0$. Thus, the claim holds for $\ell=1$.
	
	Suppose the claim holds for some $\ell$. The next requests after $\sigma_{\ell}$ are a simple request at $s_{\ell-1}$ (which changes the online configuration from $C_\ell$ to $C_\ell-s_\ell+s_{\ell-1}$) and some relocation requests that, together with moves along weight-$0$ edges, change the configuration to $C_{\ell+1}$. Let $y'\in C_{\ell+1}$ and let $y\in C_\ell$ be its parent in layer $\ell$.  One (offline) way to serve $\sigma_{\ell+1}$ and end up in configuration $C_{\ell+1}-y'+s_\ell$ is as follows: First serve $\sigma_{\ell}$ and reach configuration $C_\ell-y+s_{\ell-1}$ for cost $w_\ell(C_\ell-y+s_{\ell-1})$. The simple request at $s_{\ell-1}$ is then served for free without moving. Recall that the following relocation requests and moves along weight-$0$ edges change the online configuration from $C_\ell-s_\ell+s_{\ell-1}$ to $C_{\ell+1}$. If the edge $(y,y')$ has weight $0$, then the same relocations and moves along weight-$0$ edges, except for the move from $y$ to $y'$, change the offline configuration from $C_\ell-y+s_{\ell-1}$ to $C_{\ell+1}-y'+s_{\ell}$. So in this case,
	\begin{align*}
	w_{\ell+1}(C_{\ell+1}-y'+s_\ell) \le w_\ell(C_\ell-y+s_{\ell-1}) \le d(s_0,y)=d(s_0,y')
	\end{align*}
	as claimed. In the other case, $y=s_\ell$ by assumption on the layered tree, so before the relocation requests, the offline configuration is the same as the online configuration. Thus, online and offline configuration are the same also after the relocation moves, which is configuration $C_{\ell+1}$. Finally, $C_{\ell+1}-y'+s_\ell$ can be reached by moving a taxi from $y'$ to $y=s_\ell$. Thus,
	\begin{align*}
	w_{\ell+1}(C_{\ell+1}-y'+s_\ell) \le w_\ell(C_\ell-y+s_{\ell-1}) + d(y,y') \le d(s_0,y) + d(y,y')=d(s_0,y').
	\end{align*}
	
	From the claim it follows that the optimal cost for the taxi request sequence is at most the length of the path from $s_0$ to any node $y$ in the last layer. If $y$ is the target vertex of the traversal problem, the latter is precisely the offline cost for the traversal problem.
\end{proof}

\subsection{An optimally competitive algorithm for two taxis}\label{subsec:k=2}
We will define a deterministic algorithm \textsc{BiasedDC} for the hard $2$-taxi problem on general metrics.

Note that there is always a pair of an online algorithm taxi and an offline algorithm taxi occupying the same location, namely the taxis that served the last request (or, in the initial configuration, any online taxi and the corresponding offline taxi starting at the same point). We call these taxis \emph{active} and denote them by $A$ (online) and $a$ (offline). The other two taxis are \emph{passive}, denoted by $P$ (online) and $p$ (offline).

\textsc{BiasedDC} is a speed-adjusted variant of the well-known double coverage (DC) algorithm for the $k$-server problem \cite{ChrobakKPV91,ChrobakL91}. Upon a simple request at $s$, \textsc{BiasedDC} moves both taxis towards $s$, but $P$ moves at twice the speed of $A$. As soon as either taxi reaches $s$, both taxis stop moving.

This definition assumes that all points along shortest paths from the old taxi positions to $s$ belong to the metric space, which does not have to be true in general. However, we can assume that this is the case by adding virtual points to the metric space: If a taxi moves from its old position $\ell$ towards the request $s$ but stops after a fraction $q$ of the movement, we augment the metric space by adding a new point at distance $qd(\ell,s)$ from $\ell$ and $(1-q)d(\ell,s)$ from $s$, and other distances as induced by the shortest path through $s$ or $\ell$. When a taxi wants to stop at a virtual point before reaching the request, we actually leave this taxi at its old position, but when computing future moves we pretend it is located at the virtual point. By the triangle inequality, this does not increase the overall distance traveled.

The intuition is that \textsc{BiasedDC} seeks to be in a configuration similar to the offline algorithm. Before the request, $A$ was already at the position of the offline taxi $a$, whereas $P$ may have been placed suboptimally away from any offline taxi. Therefore, we prefer to move $P$ away from its old location as opposed to $A$. Accordingly, \textsc{BiasedDC} moves $P$ faster towards the request (= the new position of some offline taxi).

By the following theorem, \textsc{BiasedDC} achieves the optimal competitive ratio of $9$, matching the aforementioned lower bound and together yielding Theorem~\ref{thm:k=2}.

\begin{theorem}\label{thm:2taxiDet}
	\textsc{BiasedDC} is $9$-competitive for the hard $2$-taxi problem.
\end{theorem}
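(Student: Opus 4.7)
Following the standard potential-function approach for double-coverage-style algorithms, I would introduce a non-negative potential $\Phi$, depending on the positions of the four taxis $A, P$ (online, active and passive) and $a, p$ (offline, active and passive), and prove the amortized inequality $(\text{online step cost}) + \Delta\Phi \leq 9 \cdot (\text{offline step cost})$ for every infinitesimal event in the execution. Summing over the request sequence and using $\Phi \geq 0$ would then yield $9$-competitiveness up to an additive constant (the initial potential).

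There are three event types to analyse: (i) an adversary move of distance $\epsilon$ of a single offline taxi, for which $\Delta\Phi \leq 9\epsilon$ must hold; (ii) an infinitesimal BiasedDC step during a simple request at $s$, in which $A$ advances $\epsilon$ and $P$ advances $2\epsilon$ toward $s$, with online cost $3\epsilon$ and target $\Delta\Phi \leq -3\epsilon$; and (iii) a relocation $(s,t)$, in which $A$ and $a$ both move together from $s$ to $t$ at zero cost, with target $\Delta\Phi \leq 0$. Since $A$ and $a$ coincide in position immediately after each service and throughout each relocation, $\Phi$ should be designed so that joint motion of $A$ and $a$ either leaves it unchanged or strictly decreases it.

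A natural guess is a non-negative linear combination of pairwise distances of the form
\[
\Phi = c_1 \cdot d(P,p) + c_2 \cdot d(P,a) + c_2 \cdot d(A,p) + c_3 \cdot d(A,a),
\]
with the common coefficient on $d(P,a)$ and $d(A,p)$ reflecting a symmetry in how the adversary may exchange roles between $a$ and $p$. The coefficients would be pinned down by the three inequalities above: (i) reduces via the triangle inequality to bounds like $c_1 + c_2 \leq 9$ (with its symmetric counterpart for moves of $a$), and (iii) is most easily enforced by making the contributions of $A$ and $a$ to $\Phi$ cancel under joint motion, which, because $P$ and $p$ do not move in the relocation, amounts to a linear relation on the $c_i$.

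The main obstacle, and the heart of the proof, is inequality (ii): a complete case analysis of the geometric configurations of $\{A, P, a, p, s\}$ in a general metric. Passing to the Steiner tree of these five points, I would split into cases by which branches contain the offline taxis $a$ and $p$ relative to the geodesics $P_{As}$ and $P_{Ps}$; in each case, the simultaneous motion of $A$ by $\epsilon$ and $P$ by $2\epsilon$ toward $s$ shifts each pairwise distance by $\pm\epsilon$ or $\pm 2\epsilon$, and one must verify that the weighted sum drops by at least $3\epsilon$. Because the lower bound of $9$ from 2-LGT is tight, the potential must be chosen so that the worst-case configuration yields exactly $-3\epsilon$, leaving no slack; the 1:2 speed ratio of BiasedDC is precisely what makes the asymmetric contributions involving $a$ and $p$ balance. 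The virtual-point construction introduced when a taxi stops partway preserves all triangle inequalities, so the argument extends without modification to general metric spaces.
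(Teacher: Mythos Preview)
Your high-level plan (potential function, amortized inequality) matches the paper's, but both the specific potential and the infinitesimal decomposition break down.

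The relocation constraint (iii) cannot be met by a nontrivial linear relation on the $c_i$. During a relocation $A$ and $a$ move together while $P,p$ stay put; since $A=a$ throughout, $d(A,a)\equiv 0$ and $\Delta\Phi = c_2\bigl(\Delta d(P,A)+\Delta d(a,p)\bigr)$. Both summands can be strictly positive (take $t$ far from both $P$ and $p$), and since relocations can be arbitrarily long at zero cost, you are forced to $c_2=0$, collapsing $\Phi$ to $c_1\, d(P,p)+c_3\, d(A,a)$.

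More fundamentally, the per-step inequality in (ii) is unachievable by \emph{any} potential. On the line take $A=0$, $s=a=1$, $P=5$, $p=10$ (the adversary has just moved its previously active taxi $a$ from $0$ to $s$). In an $\epsilon$-step of \textsc{BiasedDC}, $d(A,a)$ drops by $\epsilon$ while $d(P,p)$ rises by $2\epsilon$; for $c_1 d(P,p)+c_3 d(A,a)$ this yields $\Delta\Phi=(2c_1-c_3)\epsilon$, so you would need $c_3\ge 2c_1+3$. But when instead $p$ sits at $s$ and $a$ stays at $A$'s old position, the signs flip and you need $c_3\le 2c_1-3$: contradiction. Even the minimum-matching potential fails infinitesimally here---in the first configuration $M$ \emph{increases} by $\epsilon$.

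The paper sidesteps this by abandoning the infinitesimal viewpoint. It takes $\Phi=3M$ with $M$ the minimum matching of the online and offline configurations, and bounds $\Delta\Phi$ over an \emph{entire} simple request. The key observation your decomposition discards is that when the adversary uses $a$ (which starts co-located with $A$), the offline cost is at least the distance $A$ itself travels; normalizing so that $A$ moves $1$, this gives $\OPT\ge 1$, and then $\cost+\Delta\Phi\le 3+3(\OPT+1)\le 9\,\OPT$. When $p$ serves, a genuine net drop does occur and one gets $\cost+\Delta\Phi\le 3\,\OPT$. The minimum matching is also label-free, so the relabelling of $A/P$ and $a/p$ after each service causes no discontinuous jump---another issue a labelled linear potential would have to confront.
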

\vShort{\begin{proof}[Proof idea]
The proof is based on a standard potential function argument. The only essential difficulty is to come up with the right potential function, which in this case is three times the value of a minimum matching of the online and offline configurations.
\end{proof}}
\vFull{\begin{proof}
	We use the potential $\Phi=3M$, where $M$ is the minimum matching of the two online taxis with the two offline taxis. After serving a request, when $A$ and $a$ are both located at the same point, $M$ is simply the distance $d(p,P)$ between the two passive taxis.
	
	Let $\cost$ and $\OPT$ denote the cost of \textsc{BiasedDC} and the offline algorithm respectively for a given request, and let $\Delta\Phi$ denote the change in potential due to serving this request. It suffices to show that
	\begin{align}\label{eq:2taxiPotIneq}
	\cost+\Delta\Phi\le9\OPT.
	\end{align}
	Summing this inequality over all request yields the result because $\Phi$ is initially $0$ and remains non-negative.
	
	For relocation requests, no cost is incurred and the potential remains unchanged, hence \eqref{eq:2taxiPotIneq} is satisfied.
	
	Consider now some simple request. We can assume without loss of generality that serving the request lasts exactly one time unit, so $A$ moves distance $1$ and $P$ moves distance $2$. Thus, $\cost=3$. We distinguish two cases depending on whether $a$ or $p$ serves the request.
	
	If $a$ serves the request, then $\OPT\ge1$ because $a$ starts its movement from the same location as $A$ and moves at least as far. In the old minimum matching, $a$ was matched to $A$ and $p$ to $P$. The distance between $a$ and $A$ increased by $\OPT-1$ and the distance between $p$ and $P$ increased by at most $2$. Thus, the minimum matching increased by at most $\OPT+1$. Putting it all together, we get
	\begin{align*}
	\cost+\Delta\Phi \le 3+3(\OPT+1) \le 9\OPT\,,
	\end{align*}
	so \eqref{eq:2taxiPotIneq} is shown.
	
	If $p$ serves the request, we divide the analysis into two steps, where first the offline algorithm moves and then \textsc{BiasedDC} moves. The matching may increase by at most $\OPT$ in the first step due to the movement of $p$. During the second step, $A$ moves a distance $1$ away from its matching partner $a$, but $P$ moves a distance $2$ towards its matching partner $p$. If the matching partners change afterwards, then this would only further reduce the matching, so in the second step, the matching decreases by at least $1$. Overall, the matching increases by at most $\OPT-1$ for this request. Hence,
	\begin{align*}
	\cost+\Delta\Phi \le 3+3(\OPT-1) =3\OPT
	\end{align*}
	and \eqref{eq:2taxiPotIneq} follows again.
\end{proof}}

\subsection{A competitive algorithm for three taxis on the line}\label{sec:Line}

In this section, we present an algorithm that achieves a constant competitive ratio for the hard $3$-taxi problem when the metric space is the real line. The algorithm, which we call \textsc{RegionTracker}, is somewhat similar to \textsc{BiasedDC} in that it moves the taxis at different speeds towards the request. However, besides the location of the active taxi, the algorithm also maintains an interval around each taxi. Intuitively, the intervals are supposed to indicate regions that the taxis should explore more aggressively. Algorithm~\ref{alg:3taxi} contains the pseudocode of \textsc{RegionTracker}. An example of the steps involved in serving a simple request is depicted in Figure~\ref{fig:lineExample}.

\begin{algorithm}
    \caption{\textsc{RegionTracker}}
    \label{alg:3taxi}
    \begin{algorithmic}[1]
    \Require Initial taxi locations $x_1\le x_2\le x_3$
    \State $A\gets 1$
    \State $(r_0,\ell_1,r_1,\ell_2,r_2,\ell_3,r_3,\ell_4)\gets (-\infty,-\infty,x_1,x_1,x_2,x_3,\infty,\infty)$
    \ForEach{request $(s,t)$}
        \If{$s<x_2$}\label{lin:SimpleStart}
            \While{$s\notin\{x_1,x_2\}$}\label{lin:WhileStart}
                \State Change $x_1,x_2,x_3$ at rates specified in Table~\ref{tab:3taxiRates}
                \State $r_1\gets (x_1\lor r_1)\land x_2$\label{lin:UpdateFrontierStart}
                \State $\ell_2\gets r_1\lor (\ell_2\land x_2)$\label{lin:UpdateFrontierEnd}
            \EndWhile\label{lin:WhileEnd}
            \State $A\gets \min\{i\mid x_i=s\}$
            \While{$\ell_A<x_A<r_A$}\label{lin:ShrinkStart}
                \State Increase $\ell_A$ and decrease $r_A$ at the same rate
            \EndWhile \label{lin:ShrinkEnd}
            \While{$\ell_A<x_A<\ell_{A+1}$}\label{lin:ShiftStart}
                \State Increase $\ell_A$ and decrease $\ell_{A+1}$ at the same rate
            \EndWhile
            \While{$r_{A-1}<x_A<r_{A}$}
                \State Decrease $r_A$ and increase $r_{A-1}$ at the same rate
            \EndWhile\label{lin:ShiftEnd}
        \ElsIf{$s>x_2$}
            \State Act symmetrically to case ``$s<x_2$''
        \EndIf\label{lin:SimpleEnd}
        \State $(e_1,e_2) \gets (r_1,\ell_2,r_2,\ell_3)\setminus(x_A,x_A)$\label{lin:e1e2}
        \State $x_A\gets t$\label{lin:RelocStart}
        \State $(x_1,x_2,x_3)\gets\text{sort}(x_1,x_2,x_3)$
        \State $A\gets \min\{i\mid x_i=t\}$
        \State $(r_1,\ell_2,r_2,\ell_3)\gets \text{sort}(e_1,e_2,x_A,x_A)$ \label{lin:RelocEnd}
    \EndFor
    \end{algorithmic}
\end{algorithm}

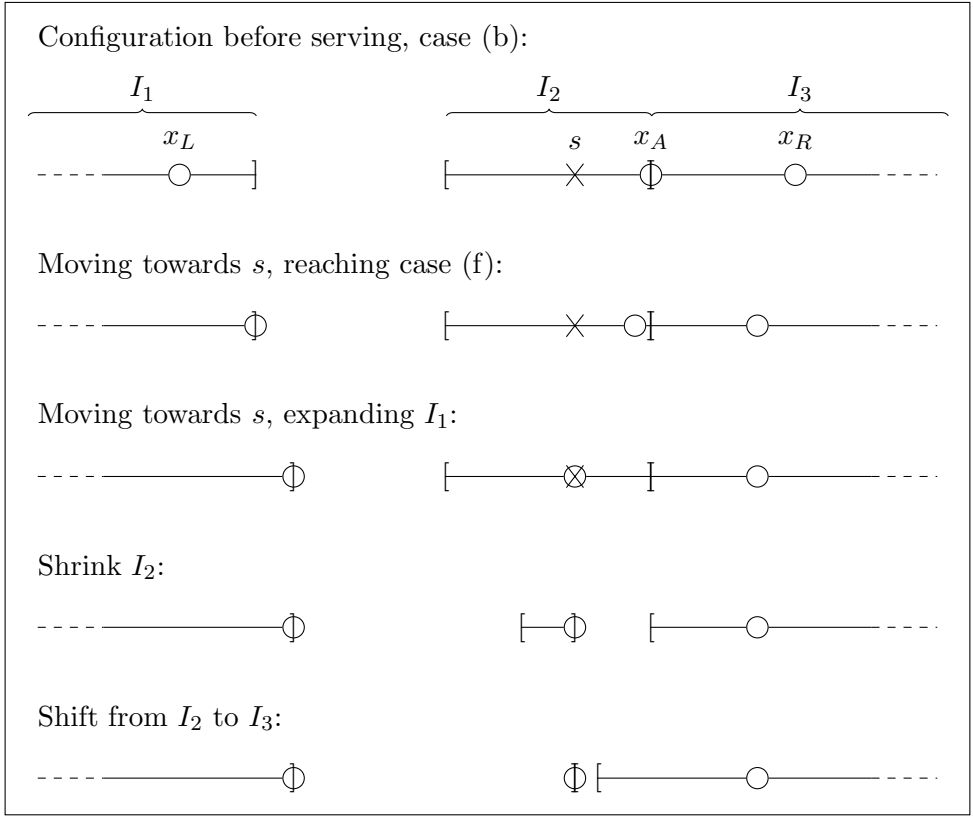
\begin{figure}\label{fig:lineExample}
\centering
	\begin{tikzpicture}[framed,scale=1,
    server/.style={draw,shape=circle,inner sep=0pt,minimum size=8pt},
    request/.style={draw,shape=cross out,inner sep=0pt,minimum width=6pt,minimum height=8pt},
    int/.style={decoration={brace,raise=8pt},decorate}]

	\newcommand{\Y}{0}
	\node[anchor=west,text width=280pt] at (-2,\Y+1.8) {Configuration before serving, case (b):};
	
	\node (-inf) at (-2,\Y) {};
	\node[server] (x1) at (0,\Y) {};
	\node (r1) at (1.0,\Y) {$]$};
	\node (l2) at (3.5,\Y) {$[$};
	\node[request] (s) at (5.2,\Y) {};
	\node[server] (x2) at (6.2,\Y) {};
	\node (r2) at (6.2,\Y) {$]$};
	\node (l3) at (6.2,\Y) {$[$};
	\node[server] (x3) at (8.1,\Y) {};
	\node (inf) at (10.1,\Y) {};
	
	\node at (x1) [above=6pt] {$x_L$};
	\node at (s) [above=6pt] {$s$};
	\node at (x2) [above=6pt] {$x_A$};
	\node at (x3) [above=6pt] {$x_R$};
	
	\draw[dashed] (-inf) -- (-1,\Y);
	\draw (-1,\Y) -- (x1) -- ($ (r1) + (0.01,0) $);
	\draw ($ (l2) + (-0.01,0) $) -- (x2) -- (x3) -- (9.1,\Y);
	\draw[dashed] (9.1,\Y) -- (inf);
		
	\draw[int] ($ (-inf) + (0.01,0.5) $) -- node (I1) [above=10pt] {$I_1$}  ($ (r1) + (0.01,0.5) $);
	\draw[int] ($ (l2) + (0.01,0.5) $) -- node (I2) [above=10pt] {$I_2$}  ($ (r2) + (0.01,0.5) $);
	\draw[int] ($ (l3) + (0.01,0.5) $) -- node (I3) [above=10pt] {$I_3$}  ($ (inf) + (0.01,0.5) $);

	\renewcommand{\Y}{-2}
	\node[anchor=west,text width=280pt] at (-2,\Y+0.8) {Moving towards $s$, reaching case (f):};
	
	\node (-inf) at (-2,\Y) {};
	\node[server] (x1) at (1.0,\Y) {};
	\node (r1) at (1.0,\Y) {$]$};
	\node (l2) at (3.5,\Y) {$[$};
	\node[request] (s) at (5.2,\Y) {};
	\node[server] (x2) at (5.99,\Y) {};
	\node (r2) at (6.2,\Y) {$]$};
	\node (l3) at (6.2,\Y) {$[$};
	\node[server] (x3) at (7.6,\Y) {};
	\node (inf) at (10.1,\Y) {};
	
	\draw[dashed] (-inf) -- (-1,\Y);
	\draw (-1,\Y) -- (x1);
	\draw ($ (l2) + (-0.01,0) $) -- (x2) -- (x3) -- (9.1,\Y);
	\draw[dashed] (9.1,\Y) -- (inf);

	\renewcommand{\Y}{-4}
	\node[anchor=west,text width=280pt] at (-2,\Y+0.8) {Moving towards $s$, expanding $I_1$:};
	
	\node (-inf) at (-2,\Y) {};
	\node[server] (x1) at (1.5,\Y) {};
	\node (r1) at (1.5,\Y) {$]$};
	\node (l2) at (3.5,\Y) {$[$};
	\node[request] (s) at (5.2,\Y) {};
	\node[server] (x2) at (5.2,\Y) {};
	\node (r2) at (6.2,\Y) {$]$};
	\node (l3) at (6.2,\Y) {$[$};
	\node[server] (x3) at (7.6,\Y) {};
	\node (inf) at (10.1,\Y) {};
	
	\draw[dashed] (-inf) -- (-1,\Y);
	\draw (-1,\Y) -- (x1);
	\draw ($ (l2) + (-0.01,0) $) -- (x2) -- (x3) -- (9.1,\Y);
	\draw[dashed] (9.1,\Y) -- (inf);

	\renewcommand{\Y}{-6}
	\node[anchor=west,text width=280pt] at (-2,\Y+0.8) {Shrink $I_2$:};
	
	\node (-inf) at (-2,\Y) {};
	\node[server] (x1) at (1.5,\Y) {};
	\node (r1) at (1.5,\Y) {$]$};
	\node (l2) at (4.5,\Y) {$[$};
	\node[server] (x2) at (5.2,\Y) {};
	\node (r2) at (5.2,\Y) {$]$};
	\node (l3) at (6.2,\Y) {$[$};
	\node[server] (x3) at (7.6,\Y) {};
	\node (inf) at (10.1,\Y) {};
	
	\draw[dashed] (-inf) -- (-1,\Y);
	\draw (-1,\Y) -- (x1);
	\draw ($ (l2) + (-0.01,0) $) -- (x2);
	\draw ($ (l3) + (-0.01,0) $) -- (x3) -- (9.1,\Y);
	\draw[dashed] (9.1,\Y) -- (inf);

	\renewcommand{\Y}{-8}
	\node[anchor=west,text width=280pt] at (-2,\Y+0.8) {Shift from $I_2$ to $I_3$:};
	
	\node (-inf) at (-2,\Y) {};
	\node[server] (x1) at (1.5,\Y) {};
	\node (r1) at (1.5,\Y) {$]$};
	\node (l2) at (5.2,\Y) {$[$};
	\node[server] (x2) at (5.2,\Y) {};
	\node (r2) at (5.2,\Y) {$]$};
	\node (l3) at (5.5,\Y) {$[$};
	\node[server] (x3) at (7.6,\Y) {};
	\node (inf) at (10.1,\Y) {};
	
	\draw[dashed] (-inf) -- (-1,\Y);
	\draw (-1,\Y) -- (x1);
	\draw ($ (l3) + (-0.01,0) $) -- (x3) -- (9.1,\Y);
	\draw[dashed] (9.1,\Y) -- (inf);
	
	\end{tikzpicture}
\caption{Example of \textsc{RegionTracker} serving a simple request at $s$.}
\end{figure}

At any point in time, we denote by $x_1\le x_2\le x_3$ the locations of the algorithm's taxis. The index $A\in\{1,2,3\}$ indicates the \emph{active taxi} that served the last request (or $A=1$ initially). We use variables $r_1\le \ell_2\le r_2\le \ell_3$ to represent the intervals $I_1=(-\infty,r_1]$, $I_2=[\ell_2,r_2]$ and $I_3=[\ell_3,\infty)$, and we will ensure at all times that $x_i\in I_i$ for $i=1,2,3$. For technical reasons, we also define $r_0=\ell_1=-\infty$ and $r_3=\ell_4=\infty$.

Before and after serving a request, it will always be the case that two of the four finite interval endpoints are equal to the location $x_A$ of the active taxi. We sometimes denote the other two interval endpoints by $e_1\le e_2$, and the two passive taxis by $L=\min\{1,2,3\}\setminus\{A\}$ and $R=\max\{1,2,3\}\setminus\{A\}$. Let $\emph{sort}$ be the operator that maps a sequence of numbers to the same sequence sorted in non-decreasing order.

\begin{observation}\label{obs:e1e2}
If $(r_1,\ell_2,r_2,\ell_3)=\text{sort}(e_1,e_2,x_A,x_A)$, then $x_L\in(-\infty,e_1]$ and $x_R\in[e_2,\infty)$.
\end{observation}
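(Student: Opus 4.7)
The plan is to lean on the standing invariant
\begin{align*}
x_1 \le r_1 \le \ell_2 \le x_2 \le r_2 \le \ell_3 \le x_3,
\end{align*}
which holds throughout the execution of \textsc{RegionTracker}. This combines the ordering of the four interval endpoints with the promise $x_i \in I_i$ stated in the preceding paragraph. It holds at initialization and is preserved by each update (a routine case-check through the clamp assignments on lines \ref{lin:UpdateFrontierStart}--\ref{lin:UpdateFrontierEnd}, the shrink and shift loops on lines \ref{lin:ShrinkStart}--\ref{lin:ShiftEnd}, and the re-sort on line \ref{lin:RelocEnd}); so I would begin by explicitly recording this chain.

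Next, writing $e_1 \le e_2$ and using the sort hypothesis, the two copies of $x_A$ must occupy two consecutive slots of $(r_1, \ell_2, r_2, \ell_3)$, which gives three cases: (i) $r_1 = \ell_2 = x_A$ with $(r_2, \ell_3) = (e_1, e_2)$; (ii) $\ell_2 = r_2 = x_A$ with $(r_1, \ell_3) = (e_1, e_2)$; or (iii) $r_2 = \ell_3 = x_A$ with $(r_1, \ell_2) = (e_1, e_2)$. In each case I would read the bounds on $x_L$ and $x_R$ directly off the invariant chain, using that $L = \min\{1,2,3\}\setminus\{A\}$ and $R = \max\{1,2,3\}\setminus\{A\}$.

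Case (ii) is the easiest: the invariant $\ell_2 \le x_2 \le r_2$ pins $x_2 = x_A$, hence $A = 2$, and then $x_L = x_1 \le r_1 = e_1$ and $x_R = x_3 \ge \ell_3 = e_2$ are both immediate. In case (i), for each value of $A$ the invariant chain yields $x_L \le r_2 = e_1$ (using $x_1 \le r_1 = x_A \le r_2$ and $x_2 \le r_2$; for $A = 3$ the equality $\ell_2 = x_A = x_3$ together with $x_2 \le x_3$ and $\ell_2 \le x_2$ collapses $x_2 = x_3$) and $x_R \ge \ell_3 = e_2$. Case (iii) is entirely symmetric to case (i). The only subtlety is these boundary collapses, in which several taxi positions coincide with interval endpoints; they yield the claimed inequalities with equality rather than violating them, so no additional argument is needed and I do not anticipate a deeper obstacle.
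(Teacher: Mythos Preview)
The paper states this as an observation without proof, so there is no argument to compare against; your write-up is already more detailed than what the paper provides. The overall strategy---record the invariant chain $x_1 \le r_1 \le \ell_2 \le x_2 \le r_2 \le \ell_3 \le x_3$ and then split on which consecutive pair of slots in $(r_1,\ell_2,r_2,\ell_3)$ the two copies of $x_A$ occupy---is correct and is the natural way to see it.

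One small slip worth tightening: in case (ii) you write ``$x_2 = x_A$, hence $A=2$'', but this inference is not valid. If $x_1 = x_2$ and the tiebreak $A \gets \min\{i \mid x_i = s\}$ has set $A=1$, then $x_2 = x_A$ yet $A=1$, so $(L,R)=(2,3)$ rather than $(1,3)$. The observation still holds in that degenerate situation (the chain collapses to $x_1 = r_1 = \ell_2 = x_2$, giving $x_L = x_2 = r_1 = e_1$), and your closing remark about boundary collapses is pointing in the right direction, but the assertion ``hence $A=2$'' should be replaced by the same per-$A$ check you carry out in case (i).
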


Given a taxi request $(s,t)$, \textsc{RegionTracker} moves a taxi to $s$ as follows. For the sake of this description, let us assume that $s<x_2$; the other case is symmetric. If $s\le x_1$, then we simply move the leftmost taxi to $s$. Otherwise, in most cases (see Table~\ref{tab:3taxiRates}) we move the two adjacent taxis continuously towards $s$ until one of them reaches $s$. If one of them has reached the frontier of its interval and the other one has not, then the one which is still in the interior of its interval moves by a factor $b+1$ or $c+1$ faster than the one that is already at the frontier, for constants $c>b>0$. However, there is one exception: If none of the three taxis has reached the interval frontier between itself and $s$, then all three taxis move towards the request at speeds $b+1$, $1$ and $b$. Simultaneously to moving the taxis, we will also update the interval frontiers so as to ensure that $x_i\in I_i$ continues to hold and the interiors of $I_1$, $I_2$ and $I_3$ are disjoint (lines~\ref{lin:UpdateFrontierStart}--\ref{lin:UpdateFrontierEnd} of Algorithm~\ref{alg:3taxi}). We use here and onwards the notation $x\land y=\min\{x,y\}$ and $x\lor y=\max\{x,y\}$. In other words, if a taxi reaches its interval frontier, then it pushes this frontier further as it is moving; if it reaches the interval frontier of an adjacent taxi, it pushes the frontier back. Once $s$ is reached, the active taxi index $A$ is updated.

Since pushing the frontiers and updating $A$ may have violated the property that the list $r_1,\ell_2,r_2,\ell_3$ contains two copies of $x_A$, we need to do some post-processing. In lines~\ref{lin:ShrinkStart}--\ref{lin:ShrinkEnd}, we shrink interval $I_A$ until one of the endpoints reaches $x_A$. Thereafter (lines~\ref{lin:ShiftStart}--\ref{lin:ShiftEnd}), we ``shift'' any remaining part of $I_A$ to the side. More precisely, if $\ell_A<x_A=r_A$ (which can only be the case for $A=1,2$), then we push the frontier $\ell_A$ towards $x_A$ (further shrinking $I_A$) while pulling $\ell_{A+1}$ away from $x_{A+1}$ towards $x_A$ (enlarging $I_{A+1}$). This is done until either $\ell_A$ or $\ell_{A+1}$ reaches $x_A$. We act similarly if instead we had $\ell_A=x_A<r_A$ after line~\ref{lin:ShrinkEnd}. After this, it is indeed true that (at least) two of the interval endpoints $r_1,\ell_2,r_2,\ell_3$ are equal to $x_A$.

In line~\ref{lin:e1e2}, we define $e_1\le e_2$ as the other two interval endpoints, as mentioned above. (In the pseudocode, we use the set difference notation to remove elements from a list.) To serve the relocation part of the request, we simply change the location of $x_A$, make sure that $x_1\le x_2\le x_3$ are again in the right order, and update $A$ accordingly. Finally, we update the interval endpoints to react to the relocation.

\newcounter{rowcount}
\newcommand\rownumber{\stepcounter{rowcount}(\alph{rowcount})}
\begin{table}[h]
    \centering
    \begin{tabular}{cl|c|c|c||c|c}
        &\text{Conditions} & $x_1'$ & $x_2'$ & $x_3'$ & $\Sigma'\le$ & if $a$ changed: $\Psi'\le$ \\[1pt] \hline
        \rownumber& $s<x_1$ & $-1$ & $0$ & $0$ & ``$0$'' & $-(\gamma-\psi)$\\
        \rownumber&$s>x_1, x_1<r_1, \ell_2<x_2, \ell_3<x_3$ & $b+1$ & $-1$ & $-b$ & $-b$ & $0$\\
        \rownumber&$s>x_1, x_1<r_1, \ell_2<x_2, \ell_3=x_3$ & $1$ & $-1$ & $0$ & $-1$ & $0$ \\
        \rownumber&$s>x_1, x_1=r_1, \ell_2=x_2$ & $1$ & $-1$ & $0$ & $1$ & $-\psi$ \\
        \rownumber&$s>x_1, x_1<r_1, \ell_2=x_2$ & $b+1$ & $-1$ & $0$ & $1$ & $-(\gamma-\psi)b$\\
        \rownumber&$s>x_1, x_1=r_1, \ell_2<x_2, A\ge 2$ 
        & $1$ & $-(b+1)$ & $0$ & $b+1$ & $-(\gamma-\psi)b$\\
        \rownumber&$s>x_1, x_1=r_1, \ell_2<x_2, A=1$ & $1$ & $-(c+1)$ & $0$ & $c+1$ & $2\psi-c(\gamma-\psi)$
    \end{tabular}
    \caption{Rates of movement if $s\in(-\infty,x_1)\cup(x_1,x_2)$, where $c>b>0$ are constants. Last two columns for analysis.}
    \label{tab:3taxiRates}
\end{table}

\subsubsection*{Analysis} We use the notation $\inte{x,y}=[x\land y,x\lor y]$ for the interval between $x$ and $y$. For the locations of the offline taxis, we write $y_1\le y_2\le y_3$, and $a\in\{1,2,3\}$ for the index of the active offline taxi.

To prove that \textsc{RegionTracker} is $O(1)$-competitive, we use a potential consisting of two parts. One of them is
\begin{align*}
    \Psi = \int_{-\infty}^\infty (w_1(z)+w_2(z)+w_3(z))dz
\end{align*}
where
\begin{align*}
w_i(z) = \begin{cases}
0, &\text{ if }z\notin\inte{x_i,y_i},\\
\gamma-\psi, &\text{ if }z\in\inte{x_i,y_i}\cap I_i,\\
\gamma+\psi, &\text{ if }z\in\inte{x_i,y_i}\setminus[r_{i-1},\ell_{i+1}],\\
\gamma, &\text{ otherwise.}
\end{cases}
\end{align*}
for some constants $\psi>0$ and $\gamma=\frac{2b+1}{2b}\psi$. We can think of $\Psi$ as a special weighted matching of the online and offline configurations: The $i$th online taxi is matched to the $i$th offline taxi. The interval between them is partitioned into (up to) three segments whose contribution to $\Psi$ is their length weighted by some factor. The segment that is in $I_i$ has weight $\gamma-\psi$, the (possible) segment between the frontiers of $I_i$ and the adjacent taxi's $I_j$ has weight $\gamma$ and a possibly remaining segment from the boundary of the adjacent taxi's $I_j$ to the offline taxi has weight $\gamma+\psi$.

The other part of the potential is
\begin{align*}
    \Sigma=\begin{cases}
        (r_1-x_1)\land (x_2-\ell_2), &\text{ if $\ell_3=x_3$,}\\
        (r_2-x_2)\land (x_3-\ell_3), &\text{ if $x_1=r_1$,}\\
        (r_1-x_1+r_2-x_2)\land (x_2-\ell_2+x_3-\ell_3) &\text{ otherwise.}
    \end{cases}
\end{align*}
Note that $\Sigma$ is well-defined, since if $x_1=r_1$ and $\ell_3=x_3$, then $\Sigma$ equates to $0$ by both the first or the second case of the definition. This part has a purpose somewhat similar to the ``sum of pairwise server distances'' part in the potential of the Double Coverage algorithm for the $k$-server problem \cite{ChrobakKPV91}. For the hard $k$-taxi problem, a plain pairwise server distances potential does not make sense since these distances can be changed arbitrarily by relocation requests; instead, $\Sigma$ is a variant of this that represents the distance between the two passive online taxis, truncated at the closest $e_i$:

\begin{claim}\label{cl:3taxiSigma}
If $(r_1,\ell_2,r_2,\ell_3) = \text{sort}(e_1,e_2,x_A,x_A)$, then $\Sigma=(e_1-x_L)\land(x_R-e_2)$.
\end{claim}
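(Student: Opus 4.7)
The plan is to perform a case analysis on the index $A$ of the active taxi. In each case, the hypothesis $(r_1,\ell_2,r_2,\ell_3)=\text{sort}(e_1,e_2,x_A,x_A)$ together with Observation~\ref{obs:e1e2} (giving $x_L\le e_1\le e_2\le x_R$) pins down the sorted sequence, which in turn determines which branch of the definition of $\Sigma$ is active.

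First, I will handle $A=1$. Here $L=2$, $R=3$, and since $e_1\ge x_L=x_2\ge x_1=x_A$, sorting gives $(r_1,\ell_2,r_2,\ell_3)=(x_A,x_A,e_1,e_2)$. In particular $r_1=x_1$, so the second branch of $\Sigma$'s definition applies and I will compute $\Sigma=(r_2-x_2)\land(x_3-\ell_3)=(e_1-x_L)\land(x_R-e_2)$ directly. The case $A=3$ is symmetric: sorting yields $(e_1,e_2,x_A,x_A)$, one has $\ell_3=x_3$, so the first branch gives $\Sigma=(r_1-x_1)\land(x_2-\ell_2)=(e_1-x_L)\land(x_R-e_2)$.

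The slightly more involved case is $A=2$, where $L=1$, $R=3$, and I will split into three subcases according to the position of $x_A=x_2$ relative to $e_1,e_2$. If $e_1\le x_A\le e_2$, sorting gives $(e_1,x_A,x_A,e_2)$, so generically neither $x_1=r_1$ nor $\ell_3=x_3$, triggering the third branch; the arithmetic $\Sigma=(e_1-x_1+x_A-x_A)\land(x_A-x_A+x_3-e_2)$ then yields the desired formula. If instead $x_A<e_1$, the sorted sequence is $(x_A,x_A,e_1,e_2)$ and the third branch again applies, giving $\Sigma=(x_A-x_1+e_1-x_A)\land(0+x_3-e_2)$; the subcase $x_A>e_2$ is symmetric.

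The only mild subtlety, which will be the main bookkeeping point rather than a true obstacle, is checking boundary situations where the hypothesis forces $x_1=r_1$ or $\ell_3=x_3$ even when $A=2$ (for example $x_1=e_1$ in the first subcase). In each such boundary situation the branches of $\Sigma$'s definition agree (as already noted where the definition is given), so the same calculation as above delivers $(e_1-x_L)\land(x_R-e_2)$. Collecting the three cases completes the proof.
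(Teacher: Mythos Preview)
Your proposal is correct and follows the same case analysis as the paper's proof: split on $A\in\{1,2,3\}$, and for $A=2$ further split on the position of $x_A$ relative to $e_1,e_2$.

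There is one small inaccuracy worth flagging. You justify the boundary subcases (where $x_1=r_1$ or $\ell_3=x_3$ even though $A=2$) by saying the branches of $\Sigma$'s definition agree ``as already noted where the definition is given''. That remark in the paper only asserts agreement of the \emph{first two} branches when \emph{both} conditions $x_1=r_1$ and $\ell_3=x_3$ hold; it says nothing about the third branch coinciding with either of the first two when only one condition holds. In fact, the second and third branches do \emph{not} agree in general when only $x_1=r_1$; they happen to agree in your subcases only because the hypothesis $(r_1,\ell_2,r_2,\ell_3)=\text{sort}(e_1,e_2,x_A,x_A)$ additionally forces $x_2=\ell_2$ or $x_2=r_2$ there. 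So your boundary calculations are fine, but the stated justification is not. The paper handles this more explicitly: for $A=2$ it first disposes of $x_1=e_1$ (and symmetrically $x_3=e_2$) by noting $\Sigma=0=(e_1-x_L)\land(x_R-e_2)$, and in the subcase $x_2<e_1$ it separately treats $x_1=x_2$ (second branch) and $x_1<x_2$ (third branch).
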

\begin{proof}
By Observation~\ref{obs:e1e2}, we have $x_L\le e_1\le e_2\le x_R$.

If $A=1$, then $(r_1,\ell_2,r_2,\ell_3)=(x_1,x_1,e_1,e_2)$. Therefore $\Sigma=(r_2-x_2)\land (x_3-\ell_3) = (e_1-x_L)\land (x_R-e_2)$. The case $A=3$ is similar.

For $A=2$, we consider several subcases. If $x_1=e_1$, then $r_1=x_1$ and $r_2=x_2$. Therefore $\Sigma=(r_2-x_2)\land (x_3-\ell_3)= 0 = (e_1-x_L)\land(x_R-e_2)$. The same argument handles the case $x_3=e_2$, so let us assume $x_1<e_1$ and $e_2<x_3$. If $e_1\le x_2\le e_2$, then $(r_1,\ell_2,r_2,\ell_3)=(e_1,x_2,x_2,e_2)$ and $\Sigma=(r_1-x_1+r_2-x_2)\land (x_2-\ell_2+x_3-\ell_3) = (e_1-x_L)\land (x_R-e_2)$. If $x_2<e_1$, then $(r_1,\ell_2,r_2,\ell_3)=(x_2,x_2,e_1,e_2)$. If $x_1=x_2$, then $\Sigma=(r_2-x_2)\land (x_3-\ell_3)=(e_1-x_L)\land (x_R-e_2)$. If $x_1<x_2$, then $\Sigma=(r_1-x_1+r_2-x_2)\land (x_2-\ell_2+x_3-\ell_3) = (e_1-x_L)\land (x_R-e_2)$. The case $x_2>e_2$ is symmetric to $x_2<e_1$.
\end{proof}

As the overall potential, we use $\Phi=\alpha\Sigma + \Psi$ for some constant $\alpha>0$.

\begin{claim}
$\Phi$ remains constant during the relocation in lines~\ref{lin:RelocStart}--\ref{lin:RelocEnd} of Algorithm~\ref{alg:3taxi}.
\end{claim}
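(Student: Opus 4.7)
I would split $\Phi = \alpha\Sigma + \Psi$ and prove invariance of each part separately.

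For $\Sigma$, I would apply Claim~\ref{cl:3taxiSigma} at two moments: immediately after line~\ref{lin:e1e2} (with $x_A=s$) and immediately after line~\ref{lin:RelocEnd} (with $x_A=t$). At both moments $(r_1,\ell_2,r_2,\ell_3)=\text{sort}(e_1,e_2,x_A,x_A)$ holds, so the claim gives $\Sigma=(e_1-x_L)\wedge(x_R-e_2)$ in both states. Between these moments only $x_A$ is updated; the values $e_1,e_2$ and the passive taxis' positions $x_L,x_R$ are untouched, so $\Sigma$ is unchanged.

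For $\Psi$ the key preliminary observation is that the adversary's active taxi $y_a$ also moves from $s$ to $t$ during the free relocation (it serves the relocation with the taxi it used for the preceding simple request), while all other online and offline positions stay fixed. I would then decompose $w_i(z)=\gamma-\psi\bigl(\mathbf{1}[z\in I_i]-\mathbf{1}[z\notin[r_{i-1},\ell_{i+1}]]\bigr)$, which yields $\Psi=\gamma M-\psi\Delta$, where $M=\sum_i|x_i-y_i|$ is the minimum matching cost and $\Delta$ is the correction term. Using $M=\int|X(z)-Y(z)|\,dz$ with $X(z),Y(z)$ counting online/offline taxis in $(-\infty,z]$, the coupled jump of $x_A$ and $y_a$ preserves $X(z)-Y(z)$ at every $z$, so $M$ is invariant.

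It remains to show $\Delta$ is invariant. I would view $\Psi$ as a function $\Psi(c)$ of the common value $c=x_A=y_a$, holding the other four taxi positions and $e_1,e_2$ as parameters and recomputing intervals as $\text{sort}(e_1,e_2,c,c)$. Letting $v_1\le v_2$ denote the two non-active offline positions, the sort orders of $\{x_L,x_R,c\}$, of $\{v_1,v_2,c\}$, and of $(e_1,e_2,c)$ change only when $c$ crosses an element of $\{x_L,x_R,v_1,v_2,e_1,e_2\}$. Within each regime delimited by consecutive critical values, both sort orders and the qualitative interval structure are fixed, and a direct computation shows that the $c$-dependent changes of $\inte{x_i,y_i}$ cancel exactly the $c$-dependent shifts of the weight zones for each pair, so $\Psi(c)$ is constant on the regime. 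At each critical value I would check continuity by comparing one-sided limits, which then gives $\Psi(s)=\Psi(t)$.

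The main obstacle is the case analysis required for $\Delta$: the regimes of $c$ combined with the possible orderings of $\{x_L,x_R,v_1,v_2,e_1,e_2\}$ produce on the order of a dozen sub-cases, each requiring a short but careful cancellation check. A more elegant matching-based reformulation of $\Delta$ that makes invariance manifest would streamline the argument, but absent such a reformulation I would fall back on the case analysis, which is tedious but entirely elementary.
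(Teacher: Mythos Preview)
Your treatment of $\Sigma$ is identical to the paper's. Your overall plan for $\Psi$ is correct, but the paper takes a cleaner route that avoids the heavy case analysis you anticipate.

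The paper does not split $\Psi$ into $\gamma M-\psi\Delta$. Instead it works pointwise: for almost every $z\in\mathbb R$ it shows that the integrand $w_1(z)+w_2(z)+w_3(z)$ is determined by the quantity $\delta_z=|\{i:x_i<z\}|-|\{i:y_i<z\}|$ together with the relocation-invariant data $x_L,x_R,e_1,e_2$ and the passive offline positions. Since $\delta_z$ is exactly your $X(z)-Y(z)$, it is manifestly unchanged when the active pair $(x_A,y_a)$ jumps together from $s$ to $t$; hence the integrand, and therefore $\Psi$, is invariant. The resulting case analysis is just over $\delta_z\in\{0,1,2\}$ (plus symmetry), with a handful of sub-cases for $\delta_z=1$ depending on whether $z$ lies left of $x_L$, in $(x_L,e_1]$, in $(e_1,e_2]$, or right of $e_2$.

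What is notable is that you already have this tool in hand: your invariance argument for $M$ is precisely the observation that $\int|\delta_z|\,dz$ is preserved. The paper simply pushes the same idea through the full weighted integrand rather than peeling off the $\gamma M$ term first. Your decomposition is valid and would eventually succeed, but it trades one clean pointwise computation for a global case analysis over the regimes of $c$ relative to $\{x_L,x_R,v_1,v_2,e_1,e_2\}$, which (as you note) runs to roughly a dozen sub-cases. Extending your $\delta_z$ argument from $M$ to all of $\Psi$ would collapse that work.
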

\begin{proof}
By Claim~\ref{cl:3taxiSigma}, $\Sigma$ depends only on $x_L, e_1, e_2,x_R$, which do not change under relocation.

To show that also $\Psi$ remains unchanged, we show that the value of $w_1(z)+w_2(z)+w_3(z)$ is independent of the location $y_a=x_A$ of the active taxi pair for almost all $z$. To do so, we determine the value of $w_1(z)+w_2(z)+w_3(z)$ for $z\notin\{x_1,x_2,x_3\}$. Let $\delta_z = |\{i\colon x_i < z\}| - |\{i\colon y_i < z\}|$ be the number of taxis that the online algorithm has to the left of $z$ more than the offline algorithm. Then $\delta_z\in\{-2,-1,0,1,2\}$, and $\delta_z$ is invariant under relocation of the active taxi pair.

If $\delta_z=0$, then $w_1(z)+w_2(z)+w_3(z)=0$ since $z\notin \inte{x_i,y_i}$ for all $i$.

Otherwise, let $m=\max\{i\colon x_i < z\}$. If $\delta_z=2$, then $z\in\inte{x_i,y_i}$ if and only if $i\in\{m-1,m\}$. Hence, $w_1(z)+w_2(z)+w_3(z)=w_{m-1}(z)+w_m(z)$. Since $r_{m-2}\le x_{m-1}\le \ell_m\le x_m<z$, we have $w_{m-1}(z)=\gamma+\psi$. Moreover, if $A>m$ then $r_m=x_A>z$ and otherwise $r_m=\infty$. In either case, $z\in I_m$ and hence $w_m(z)=\gamma-\psi$. Thus, $w_1(z)+w_2(z)+w_3(z)=2\gamma$ independent of the location of the active taxis.

If $\delta_z=1$, then $w_1(z)+w_2(z)+w_3(z)=w_m(z)$. We consider several sub-cases. If $x_R<z$, then $w_m(z)=\gamma-\psi$ as in the previous case. Otherwise, $x_L<z<x_R$. If $z\le e_1$, then either $x_A\in(z, e_1]$ and $r_m=x_A$ or $x_A\notin(z,e_1]$ and $r_m=e_1$. In both cases, $z\le r_m$ and therefore $w_m(z)=\gamma-\psi$.

If $e_2< z$, then either $x_A\in[e_2,z)$ and $\ell_{m+1}=x_A<z$ or $x_A\notin[e_2,z)$ and $\ell_{m+1}=e_2< z$. In both cases, $w_m(z)=\gamma+\psi$.

If $z\in (e_1,e_2]$, then either $x_A\in[e_1,z)$ and $r_m=x_A<z$ or $x_A\notin [e_1,z)$ and $r_m=e_1<z$. In both cases, $z\notin I_m$. Moreover, either $x_A\in(z,e_2]$ and $\ell_{m+1}=x_A>z$ or $x_A\notin (z,e_2]$ and $\ell_{m+1}=e_2\ge z$. In both cases, $z\in[x_m,\ell_{m+1}]\subseteq [r_{m-1},\ell_{m+1}]$. Thus, $w_m(z)=\gamma$ in this case, independent of the location of the active taxis.

The cases $\delta_z\in\{-2,-1\}$ are symmetric to $\delta_z\in\{1,2\}$.
\end{proof}

We need to show that when serving simple requests (lines~\ref{lin:SimpleStart}--\ref{lin:SimpleEnd}), the cost of \textsc{RegionTracker} plus the change of $\Phi$ is bounded by a constant times the offline cost. We first observe that $\Phi$ is non-increasing during the shrink and shift steps of the algorithm.

\begin{claim}\label{cl:Shrink}
During the shrink step (lines~\ref{lin:ShrinkStart}--\ref{lin:ShrinkEnd}), $\Phi$ does not increase.
\end{claim}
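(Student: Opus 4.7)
During the shrink step no taxi moves and no request is served, so \textsc{RegionTracker} pays zero and it suffices to prove $\dot\Phi=\alpha\dot\Sigma+\dot\Psi\le 0$. Only $\ell_A$ (rising at rate $1$) and $r_A$ (falling at rate $1$) change, so I would analyse $\dot\Sigma$ and $\dot\Psi$ separately. For $\dot\Sigma\le 0$ a direct case check suffices: in each of the three branches of its definition, the only terms that mention $\ell_A$ or $r_A$ are $x_A-\ell_A$ and $r_A-x_A$, both of which are non-increasing; since the two sums inside the minimum are non-increasing, so is their minimum. The degenerate cases $A\in\{1,3\}$, where one endpoint is infinite, reduce to the same check with a single finite sweep.

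The core of the argument is $\dot\Psi\le 0$. As $\ell_A$ rises through a point $z=\ell_A$ lying in $\inte{x_A,y_A}$, that point leaves $I_A$ for the gap $[r_{A-1},\ell_{A+1}]\setminus I_A$ and its $w_A$-weight jumps from $\gamma-\psi$ to $\gamma$, contributing $+\psi$ per unit time; if instead $z\in\inte{x_{A-1},y_{A-1}}$, it enters the neighborhood $[r_{A-2},\ell_A]$ of taxi $A-1$ and its $w_{A-1}$-weight drops from $\gamma+\psi$ to $\gamma$, contributing $-\psi$. The right sweep of $r_A$ behaves analogously with $w_A$ and $w_{A+1}$. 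Since $\ell_A<r_A$ throughout the loop, the conditions $y_A\le \ell_A$ and $y_A\ge r_A$ are incompatible, so at most one of the two sweeps can produce a $+\psi$ at a given instant.

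The crux is that every $+\psi$ is matched by a compensating $-\psi$. Assume the left sweep contributes $+\psi$, i.e.\ $y_A\le \ell_A<x_A$. Because the offline algorithm has already served the request $s=x_A$ by the time the shrink begins, some offline taxi occupies $x_A$; but the sorted order $y_j\le y_A<x_A$ for $j\le A$ forces this taxi to be $y_i$ with $i\ge A+1$, so $y_{A+1}\le y_i=x_A$. Combined with $r_A>x_A\ge y_{A+1}$ and $r_A\le \ell_{A+1}\le x_{A+1}$, this gives $r_A\in[y_{A+1},x_{A+1}]=\inte{x_{A+1},y_{A+1}}$, so the right sweep delivers a matching $-\psi$ through $w_{A+1}$. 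A symmetric argument using $y_{A-1}\ge x_A$ covers a $+\psi$ coming from the right sweep. Hence $\dot\Psi\le 0$, and combined with $\dot\Sigma\le 0$ this yields $\dot\Phi\le 0$.

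The hardest part is precisely this compensation step: it requires identifying the offline taxi that serves $s$, using the sorting of both configurations, and exploiting the fact that this taxi must sit on the opposite side of the active taxi from the $+\psi$-producing sweep. Without the assumption that offline has a taxi at $s$, one can genuinely construct configurations where $\dot\Psi>0$ during the shrink, so this use of the decoupled amortised framework is essential to the proof.
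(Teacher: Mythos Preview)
Your proof is correct and follows essentially the same approach as the paper. The paper's proof is much terser: it simply cases on whether the active offline index $a$ is $<$, $=$, or $>$ the active online index $A$, asserting that if $a<A$ then the possible rate-$\psi$ increase of $\int w_A$ is offset by a rate-$\psi$ decrease of $\int w_{A-1}$ (and symmetrically for $a>A$). You instead case on which sweep produces the $+\psi$ contribution and then deduce the index relation from the fact that some offline taxi sits at $x_A$; but this is the same compensation, just unpacked explicitly (left-sweep $+\psi$ corresponds to $a>A$ and is cancelled by the right-sweep $-\psi$ through $w_{A+1}$, and vice versa). One small point you left implicit: in the degenerate cases $A\in\{1,3\}$ the ``other side'' weight $w_{A\pm 1}$ you invoke does not exist, but in those cases your own index argument already forces a contradiction (e.g.\ $A=3$ with a left $+\psi$ would require an offline taxi with index $>3$), so no compensation is needed there.
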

\begin{proof}
It is easy to see that $\Sigma$ does not increase.

Regarding $\Psi$, note the offline algorithm must have a taxi $y_a$ at $x_A$. If $A=a$, then clearly $\Psi$ can only decrease. If $a<A$, then $\int w_A(z)dz$ may increase at rate at most $\psi$, but at the same time $\int w_{A-1}(z)dz$ decreases at rate $\psi$. Similarly for $A>a$.
\end{proof}

\begin{claim}
During the shift step (lines~\ref{lin:ShiftStart}--\ref{lin:ShiftEnd}), $\Phi$ does not increase.
\end{claim}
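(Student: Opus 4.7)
The shift step consists of two while loops (lines~\ref{lin:ShiftStart}--\ref{lin:ShiftEnd}): the first increases $\ell_A$ and decreases $\ell_{A+1}$ at a common rate $v$, and the second does the mirror operation on $r_A$ and $r_{A-1}$. By the left-right symmetry of \textsc{RegionTracker} it suffices to analyse the first loop. Throughout the loop, $x_1,x_2,x_3$ and every other interval endpoint stay fixed, so only the two moving endpoints can change $\Phi$.

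The plan is to handle $\Sigma$ and $\Psi$ separately. For $\Sigma$, since the shrink step leaves $r_A=x_A$, the loop only affects the summands $x_A-\ell_A$ and $x_{A+1}-\ell_{A+1}$, whose changes have equal magnitude and opposite sign. A quick check of each of the three cases in the definition of $\Sigma$ then confirms it stays constant, so the $\alpha\Sigma$ part contributes nothing.

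The main work is bounding $d\Psi/dt$. I would rewrite
\[
\int w_i(z)\,dz \;=\; \gamma\,|\inte{x_i,y_i}| \;-\; \psi\,|\inte{x_i,y_i}\cap I_i| \;+\; \psi\,|\inte{x_i,y_i}\setminus[r_{i-1},\ell_{i+1}]|
\]
and differentiate: exactly four indicator contributions survive, from $I_A$ shrinking and $[r_{A-2},\ell_A]$ growing at $\ell_A$, and from $[r_{A-1},\ell_{A+1}]$ shrinking and $I_{A+1}$ growing at $\ell_{A+1}$. The result is
\[
\frac{1}{\psi v}\,\frac{d\Psi}{dt} \;=\; [\ell_A\!\in\!\inte{x_A,y_A}] \;+\; [\ell_{A+1}\!\in\!\inte{x_A,y_A}] \;-\; [\ell_A\!\in\!\inte{x_{A-1},y_{A-1}}] \;-\; [\ell_{A+1}\!\in\!\inte{x_{A+1},y_{A+1}}].
\]
To show this is $\le 0$ I would combine three facts: (i) some offline taxi satisfies $y_a=x_A$ (it served the most recent simple request); (ii) $x_{A-1}\le\ell_A<x_A<\ell_{A+1}\le x_{A+1}$, where the strict inequalities come from the loop condition and the outer ones from the $I_i$ having disjoint interiors and each containing its $x_i$; and (iii) $y_1\le y_2\le y_3$. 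A case split on $a$ then finishes: if $a=A$ then $\inte{x_A,y_A}=\{x_A\}$ kills both positive indicators; if $a<A$ then $y_{A-1}\ge y_a=x_A$ forces $\ell_A\in[x_{A-1},y_{A-1}]=\inte{x_{A-1},y_{A-1}}$, absorbing the only positive indicator that can survive (the one at $\ell_{A+1}$, since $y_A\ge x_A>\ell_A$ kills the one at $\ell_A$); if $a>A$, the symmetric argument uses $y_{A+1}\le x_A$ to make the last negative indicator equal $1$.

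The main obstacle is the careful attribution in the $\Psi$ computation: each of the four endpoint-motion effects has to land on the correct piece of the integrand, and any miscount breaks the inequality. The subsequent case analysis is mechanical once $y_a=x_A$ is combined with monotonicity. Boundary cases with $A=1$ or $A=3$, where $\ell_1=-\infty$ or $\ell_4=\infty$ is not a real variable, degenerate to a single moving endpoint; the same argument goes through, using $y_1\le y_a=x_1$ or $y_3\ge y_a=x_3$ to directly kill the surviving positive indicator.
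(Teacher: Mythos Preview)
Your proof is correct and follows essentially the same approach as the paper, which simply writes ``Similar to the proof of Claim~\ref{cl:Shrink}.'' That shrink-step proof argues exactly what you argue: $\Sigma$ is easily seen not to increase, and for $\Psi$ one uses $y_a=x_A$ together with the case split $a=A$, $a<A$, $a>A$ to show that any increase in $\int w_A$ is offset by a matching decrease in $\int w_{A\pm 1}$. Your indicator-function bookkeeping is a more explicit rendering of this same idea, and your verification that $\Sigma$ stays constant (using $r_A=x_A$ from the shrink step) is a slight sharpening of the paper's ``does not increase.''
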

\begin{proof}
Similar to the proof of Claim~\ref{cl:Shrink}.
\end{proof}

When the offline algorithm moves a taxi, $\Phi$ can only increase by at most $\gamma+\psi=O(1)$ times the distance moved by the offline algorithm. Moreover, if the offline algorithm serves the new (simple) request by moving the active taxi from $y_a$ that also served the last request, then also the cost of \textsc{RegionTracker} for this request is at most a constant times the the offline cost: This is because \textsc{RegionTracker} also has a taxi starting at $y_a=x_A$, and clearly the cost of \textsc{RegionTracker} to serve a request is at most a constant (depending on $b$ and $c$) times the distance from $x_A$ to $s$. So if the offline algorithm moves the same active taxi twice in a row, then the increase in potential plus the online cost is at most a constant times the offline cost for this request. Thus, it only remains to show now that if the offline algorithm has already moved a taxi to the new request, but this was \emph{not} the previously active offline taxi, then the cost of \textsc{RegionTracker} is cancelled by a decrease in potential. This is established in the following last Claim of this section.

\begin{claim}
If $x_A=y_a$ and $s=y_i$ for some $i\ne a$ before \textsc{RegionTracker} serves a simple request at $s$, then the movement cost of \textsc{RegionTracker} to serve the request is at most the amount by which $\Phi$ decreases at the same time.
\end{claim}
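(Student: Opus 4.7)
The plan is to verify the desired inequality by a case analysis over the seven instantaneous motion regimes (a)--(g) listed in Table~\ref{tab:3taxiRates} that drive the while loop on lines~\ref{lin:WhileStart}--\ref{lin:WhileEnd}; the symmetric case $s>x_2$ is handled identically, and the shrink and shift steps afterwards contribute nothing to the online cost while only decreasing $\Phi$ by Claim~\ref{cl:Shrink} and its obvious shift analogue. In each regime the online cost rate equals $|x_1'|+|x_2'|+|x_3'|$ as read off from the table. Since the offline configuration stays fixed during the online response (in particular $y_a=x_A$ at the starting instant and $y_i=s$ are both stationary throughout the motion), it suffices to show
\[
|x_1'|+|x_2'|+|x_3'|+\alpha\Sigma'+\Psi' \le 0
\]
pointwise in time for every regime, and then to choose the constants $b,c,\alpha,\psi$ so that all seven resulting numerical inequalities hold simultaneously.

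The $\Sigma'$ entries are obtained by differentiating the piecewise definition of $\Sigma$ together with the frontier update rules on lines~\ref{lin:UpdateFrontierStart}--\ref{lin:UpdateFrontierEnd}; the three cases of the definition of $\Sigma$ align exactly with the geometric sub-cases of Table~\ref{tab:3taxiRates}. For instance, in row (b) the two taxis $x_1$ and $x_2$ are strictly inside their regions and no frontier is pushed, so $\Sigma'\le\max(r_1'-x_1'+r_2'-x_2',\,x_2'-\ell_2'+x_3'-\ell_3')=\max(-(b+1)+1,\,-1-b)=-b$. The $\Psi'$ entries come from a zone-accounting argument: for each matched pair $(x_j,y_j)$ the interval $\inte{x_j,y_j}$ is split into sub-intervals inside $I_j$ (weight $\gamma-\psi$), between the frontier of $I_j$ and that of the adjacent $I_{j\pm 1}$ (weight $\gamma$), or beyond that adjacent frontier (weight $\gamma+\psi$). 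Because $y_a$ and $y_i$ do not move, the derivative of each $\int w_j$ is a signed combination of the velocities of $x_j$, $r_{j-1}$, and $\ell_{j+1}$ weighted by those three zone weights. The ``$(\gamma-\psi)b$'' and ``$c(\gamma-\psi)$'' entries in the last column arise when an online taxi traverses a stretch that lies inside its own region $I_j$, whereas the ``$\psi$'' entries arise whenever a moving frontier swaps one zone weight for an adjacent one.

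Substituting $\gamma-\psi=\psi/(2b)$, the seven rows translate into the constraints $\psi\ge 2b$ (a), $\alpha\ge 2+2/b$ (b), $\alpha\ge 2$ (c), $\psi\ge\alpha+2$ (d), $\psi\ge 2(b+2+\alpha)$ (e), $\psi\ge 2(b+2+\alpha(b+1))$ (f), and $(c/(2b)-2)\psi\ge c+2+\alpha(c+1)$ (g). The main obstacle is row (g): here the online cost rate $c+2$, the upward drift $\alpha(c+1)$ of $\Sigma$, and the positive $+2\psi$ term in $\Psi'$ must all be absorbed by the single negative term $c\psi/(2b)$ coming from pulling $x_2$ through $I_2$ at speed $c+1$. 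This forces $c$ strictly greater than $4b$, after which $\psi$ can be made large enough relative to $b,\alpha,c$ to close (g) as well as the easier rows (d)--(f). A concrete workable choice, fixing $b$ small, then $c\gg 4b$, then $\alpha$ slightly above $2+2/b$, and finally $\psi$ large enough to meet the maximum of (a), (d), (e), (f), (g), satisfies all seven inequalities simultaneously and completes the proof.
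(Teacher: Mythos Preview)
Your approach is the same as the paper's: case-split over the seven regimes in Table~\ref{tab:3taxiRates}, verify $\cost'+\alpha\Sigma'+\Psi'\le 0$ pointwise, and then solve for the constants. Your translation of the table entries into the seven numerical constraints and the verification that they are simultaneously satisfiable (with $c>4b$ and $\psi$ large) is correct.

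There are, however, two genuine gaps in the execution.

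\emph{Row (a) and $\Sigma$.} You treat $\Sigma'\le 0$ in row (a) as if it followed from differentiating the definition of $\Sigma$; it does not. When $s<x_1$ and $x_1$ decreases, the frontier update $r_1\gets(x_1\lor r_1)\land x_2$ leaves $r_1$ unchanged, so $r_1-x_1$ grows and $\Sigma$ can strictly increase during the while loop. The paper's argument is that after the loop $A=1$, so the subsequent shrink step drives $r_1$ back down to the new $x_1$, cancelling the earlier rise; only the \emph{combined} effect of movement plus shrink on $\Sigma$ is non-positive. (This is why the table writes ``$0$'' in quotation marks.) Your sentence ``the $\Sigma'$ entries are obtained by differentiating the piecewise definition of $\Sigma$'' is therefore incorrect for row (a), and you need the shrink-cancellation argument explicitly.

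\emph{The $\Psi'$ bounds.} Your zone-accounting paragraph describes the mechanism but does not derive the bounds. The actual entries in the last column require a sub-case analysis on \emph{which} offline taxi is at $s$ and where the remaining $y_j$ lie relative to the interval endpoints; this is where the hypothesis $y_a=\hat x_A$ is used. For example, to get $\Psi'\le 0$ in row (b) the paper splits into $x_1<y_1$ (then $x_1$ approaches $y_1$ while $x_2,x_3$ recede inside their intervals) versus $y_1<x_1$ (then one deduces $y_2=s$ and $y_3=y_a=\ell_3$, and the signs flip). Likewise the $+2\psi$ term in row (g) arises only in the sub-case $s=y_3$ with $y_2\le r_1=\ell_2$, where $x_1$'s motion pushes both $r_1$ and $\ell_2$ rightward, each swapping one zone weight of $w_2$ upward by $\psi$. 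Without pinning down these positions, the zone-accounting sketch does not establish the table's last column.
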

\begin{proof}
We show for all cases (a)--(g) of Table~\ref{tab:3taxiRates} that $\cost'+\Phi'\le 0$ almost always, where $\cost'=|x_1'|+|x_2'|+|x_3'|$ is the instantaneous movement cost of \textsc{RegionTracker} and $\Phi'$ the rate of change of $\Phi$. Technically, the values $x_i'$ and $\Phi'$ are only well-defined when $x_i$ and $\Phi$ are differentiable as a function of time, which they are not e.g. when the condition in Table~\ref{tab:3taxiRates} changes. However, they are differentiable almost everywhere and it suffices to show it for these times.

In case (a), we have $\cost' = 1$. Moreover, $x_1$ moves towards $y_1$ and therefore $\Psi' = -(\gamma-\psi)$. Even though $\Sigma$ may increase when $x_1$ decreases, in the subsequent shrink step $r_1$ will be reduced to the new value of $x_1$, which cancels any previous increase. So overall, $\Sigma$ does not increase. The claim follows for $\gamma-\psi$ large enough. 

In all other cases, we have $x_1<s<x_2$. Denote by $\hat{x_i}$, $\hat{\ell_i}$ and $\hat{r_i}$ the values that $x_i$, $\ell_i$ and $r_i$ had at the beginning of the while-loop. Then $y_a=\hat{x_A}$, and $(\hat{r_1},\hat{\ell_2},\hat{r_2},\hat{\ell_3})=\text{sort}(e_1,e_2,y_{a},y_a)$. Since taxis only move towards $s$, the current interval endpoints are $r_1=(\hat{r_1}\lor x_1)\land x_2$, $\ell_2=r_1\lor(\hat{\ell_2}\land x_2)$, $r_2=\hat{r_2}$ and $\ell_3=\hat{\ell_3}$.

Observe that if one of the inequalities $x_1\le r_1$, $\ell_2\le x_2$ or $\ell_3\le x_3$ becomes tight, then it remains tight throughout the while-loop. In particular, the case in the definition of $\Sigma$ changes at most once during each run of the while-loop, and $\Sigma$ can only decrease if this happens.

We will show for the cases (b) and (c) that $\Sigma$ decreases at some constant rate and $\Psi$ does not increase. Choosing $\alpha$ large enough, this will be enough to handle these cases. For the remaining cases, observe that $\Sigma$ increases at an at most constant rate (for fixed $b$ and $c$), which is immediate from the definition of $\Sigma$ and the fact that the $x_i$ change at an at most constant rate. Thus, to handle cases (d)--(g), it suffices to show that $\Psi$ decreases at an at least constant rate. Choosing $\gamma$ and $\psi$ large enough, the decrease of $\Psi$ cancels the increase of $\Sigma$ and the cost of the algorithm.

\textbf{Case (b):} We must have $\hat{r_1}=r_1$ and $\hat{\ell_2}= \ell_2$. Moreover, it must be that $y_a=\ell_3=r_2$ since otherwise the active online taxi would have moved away from $s$. The interval endpoints $\ell_i$ and $r_i$ remain constant during this case, and therefore $\Sigma' \le (-(b+1)+1)\lor (-1-b) = -b$.

For the change in $\Psi$, let us consider first the case $x_1<y_1$. Then $x_1$ moves towards $y_1$, and for each $z$ that $x_1$ moves past, $w_1(z)$ changes from $\gamma-\psi$ to $0$. So the movement of $x_1$ decreases $\Psi$ at rate $(b+1)(\gamma-\psi)$. For $i=2,3$, the movement of $x_i$ is in the worst case away from $y_i$, but it remains in the interior of $I_i$. So for any $z$ passed by $x_i$, $w_i(z)$ changes from $0$ to $\gamma-\psi$ in the worst case. So the movements of $x_2$ and $x_3$ increase $\Psi$ at rates at most $(\gamma-\psi)$ and $b(\gamma-\psi)$, respectively. Overall, $\Psi' \le 0$.

The case $y_1=x_1$ can be ignored because this will be the case only for a time interval of length $0$.

If $y_1<x_1$, then $y_2=s$ and $y_3=y_a=\ell_3$. In this case, the movement of $x_1$ increases $\Psi$ at rate $(b+1)(\gamma-\psi)$, but $x_2$ and $x_3$ move towards $y_2$ and $y_3$, respectively, decreasing $\Psi$ at rates $b(\gamma-\psi)$ and $\gamma-\psi$, respectively. Again, $\Psi'\le 0$.

\textbf{Case (c):} As in case (b), we have $y_a= \ell_3$. Thus, $y_a= x_3$.

Again, the interval endpoints remain constant during case (c). Therefore, $\Sigma'= -1$.

If $y_1=s$, then $x_1$ moves towards $y_1$, decreasing $\Psi$ at rate $\gamma-\psi$, while the movement of $x_2$ increases $\Psi$ at most at rate $\gamma-\psi$. Otherwise, $y_2=s$, the movement of $x_2$ decreases $\Psi$ at rate $\gamma-\psi$ and the movement of $x_1$ increases $\Psi$ at most at rate $\gamma-\psi$. In both cases, $\Psi'\le 0$.

As mentioned before, we show in the remaining cases only that $\Psi$ decreases at a constant rate.

\textbf{Case (d):} We have $\hat{r_1}<s<\hat{\ell_2}<y_a\le y_3$, so $s=y_1$ or $s=y_2$. If $s=y_1$, then $x_1$ moves towards $y_1$, contributing a decrease at rate $\gamma$ to $\Psi$. Even if $x_2$ moves away from $y_2$, it contributes an increase at a rate of at most $\gamma-\psi$ to $\Psi$. So in total, $\Psi'\le -\psi$. The case $s=y_2$ is similar.

\textbf{Case (e):} If $s= y_1$, then the movement of $x_1$ contributes a decrease at rate $(b+1)(\gamma-\psi)$ and the movement of $x_2$ contributes an increase at rate at most $\gamma-\psi$. Together, $\Psi'\le -b(\gamma-\psi)$. If $y_1<s$, then $y_2\le s$ and the movement of $x_1$ contributes an increase at rate at most $(b+1)(\gamma-\psi)$ while the movement of $x_2$ contributes a decrease at rate at least $\gamma$. Together, $\Psi'\le (b+1)(\gamma-\psi)-\gamma=-b(\gamma-\psi)$, since $\gamma=(2b+1)(\gamma-\psi)$.

\textbf{Case (f):} The calculations are essentially the same as in case (e).

\textbf{Case (g):} Since $A=1$, $y_a=\hat{r_1}<s$, so $s=y_2$ or $s=y_3$. For $s=y_2$ we get $\Psi'\le -c(\gamma-\psi)$ similar to cases (e) and (f). However, for $s=y_3$ it could be that $y_2\le \ell_2=r_1$, so that $x_1$'s movement is pushing $\ell_2$ towards $x_2$, leading to an additional contribution of $+2\psi$ to the change of $\Psi$. But we still have $\Psi'\le 2\psi-c(\gamma-\psi)$, which is negative for $c$ large enough.
\end{proof}

We conclude that \textsc{RegionTracker} achieves a constant competitive ratio, proving Theorem~\ref{thm:3line}.
\section{The easy \texorpdfstring{$k$}{k}-taxi problem}\label{sec:Easy}
We now turn to the easy $k$-taxi problem, and prove that it is equivalent to the $k$-server problem.

\vShort{
\begin{proof}[Proof idea of Theorem~\ref{thm:easy}]
The reduction from the easy $k$-taxi problem to the $k$-server problem is done by a transformation of a $k$-server algorithm to a $k$-taxi algorithm. The simulation is done by replacing each relocation request $(s,t)$ by a sequence of many $k$-server requests along a shortest path from $s$ to $t$. Doing it carefully, this essentially forces the same server that arrives at $s$ to move all the way to $t$. 
\end{proof}}
\vFull{
\begin{proof}[Proof of Theorem~\ref{thm:easy}]
	Clearly, since the $k$-taxi problem is a generalization of the $k$-server problem, its competitive ratio is at least that of the $k$-server problem. Thus, it suffices to show that given a $\rho$-competitive algorithm $A$ for the $k$-server problem, we can construct a $(\rho+\frac{1}{N})$-competitive algorithm $A_N$ for the easy $k$-taxi problem, for any $N\in\mathbb N$. The following proof is for deterministic algorithms. The only change that would need to be made for randomized algorithms is to replace $\cost_A$ and $\cost_{A_N}$ by their expectation.
	
	The idea of algorithm $A_N$ is to simulate the behavior of $A$ on the request sequence obtained by replacing a $k$-taxi request $(s,t)$ by many $k$-server requests along a shortest path from $s$ to $t$. In general, the underlying metric space $(M,d)$ may not contain any points on a shortest path from $s$ to $t$; we can easily fix this by embedding $M$ into a larger metric space $\tilde{M}$ that contains some additional virtual points. More precisely, $\tilde M$ is the metric space obtained by from $M$ by adding, for each $x,y\in M$, a line segment $L_{xy}$ with Euclidean metric and length $d(x,y)$ to $M$ by gluing its endpoints to $x$ and $y$ respectively. We transform a $k$-taxi request sequence $\sigma_{\textit{taxi}}$ on $M$ into a $k$-server request sequence $\sigma_{\textit{server}}$ on $\tilde{M}$ by replacing a $k$-taxi request $(s,t)$ by a subsequence $r_{0},\dots,r_{2kN}$ of $k$-server requests placed along $L_{st}$, with $r_0=s$, $r_{2kN}=t$ and distance $\frac{d(s,t)}{2kN}$ between two successive requests.
	
	Clearly,
	\begin{align*}
	\OPT(\sigma_{\textit{server}})\le \OPT(\sigma_{\textit{taxi}})
	\end{align*}
	because an optimal schedule for $\sigma_{\textit{taxi}}$ can be turned into a valid schedule for $\sigma_{\textit{server}}$ of the same cost by using the server that would serve a taxi request $(s,t)$ to serve all the associated $k$-server requests $r_0,\dots,r_{2kN}$. Therefore, since $A$ is $\rho$-competitive on $\tilde M$,
	\begin{align}
	\cost_A(\sigma_{\textit{server}}) &\le \rho \OPT(\sigma_{\textit{server}})+c \nonumber \\
	&\le \rho \OPT(\sigma_{\textit{taxi}})+c\label{eq:easyA}
	\end{align}
	for some constant $c$.
	
	The idea of algorithm $A_N$ is to transform $A$'s schedule for $\sigma_{\textit{server}}$ into a valid schedule for $\sigma_{\textit{taxi}}$ while incurring an additional cost of at most $\OPT(\sigma_{\textit{taxi}})/N$. To define $A_N$, we will pretend that taxis of $A_N$ can be located at virtual points in $\tilde M\setminus M$ even though this is not possible in the original metric space $M$. However, this will only ever happen when $A_N$ makes a move that does not serve a request, so $A_N$ does not actually have to carry out such a move and can keep the taxi in its old position until it is used to serve a request. Due to the triangle inequality, this will not increase the overall cost.
	
	We can make the following two assumptions about $A$ when it serves the subsequence $r_{0},\dots,r_{2kN}$ of equidistant requests on $L_{st}$ associated to the taxi request $(s,t)$: First, $A$ is lazy, so to serve $r_i$ it moves one server to $r_i$ and moves no other server. Second, for $i\ge2$, $A$ never serves $r_i$ with a server located at $r_j$ for some $j\le i-2$; this is because $A$ could instead move the last used server from $r_{i-1}$ to $r_i$ and (non-lazily) move the server from $r_j$ to $r_{i-1}$ to end up in the same configuration for the same cost, but then $A$ may as well delay the non-lazy move until later when/if this server is actually used to serve a request. These two assumptions mean that the requests $r_0,\dots,r_{2kN}$ can be partitioned into at most $k$ blocks of adjacent requests such that all requests within the same block are served by the same server and requests in different blocks are served by different servers. Formally, if $\ell\le k$ is the number of servers used to serve $r_0,\dots,r_{2kN}$, then there are indices $i_0=-1<i_1<\dots<i_\ell=2kN$ such that $A$ uses the $j$th of these servers to serve all the requests $r_{i_{j-1}+1},r_{i_{j-1}+2},\dots,r_{i_j}$. To turn this into a valid way to serve the taxi request $(s,t)$, we have to ensure that the same server/taxi that serves $r_0=s$ will also end up at $r_{2kN}=t$. For this, we will let the same server serve all the requests $r_0,\dots,r_{2kN}$, which can be done at a small additional cost: Namely, at the transition between blocks where $A$ uses a new server to serve $r_{i_j+1}$ instead of reusing the old server from $r_{i_j}$, algorithm $A_N$ will carry out the same server movement as $A$, followed by swapping the two servers at $r_{i_j}$ and $r_{i_j+1}$. It remains to analyze the cost of $A_N$.
	
	Since the distance between the two adjacent requests involved in a swap is $\frac{d(s,t)}{2kN}$, swapping the servers yields an additional cost of $\frac{d(s,t)}{kN}$. Therefore, total cost of all $\ell-1<k$ swaps associated with the request $(s,t)$ is at most $\frac{d(s,t)}{N}$. Over the entire request sequence, the total cost of swaps is at most a $\frac{1}{N}$ fraction of the sum of the distances of all start-destination pairs in $\sigma_{\textit{taxi}}$. Since the optimal algorithm must pay at least all of these distances, we have
	\begin{align*}
		\cost_{A_N}(\sigma_{\textit{taxi}}) &\le \cost_A(\sigma_{\textit{server}})+\frac{1}{N}\OPT(\sigma_{\textit{taxi}})\\
			&\le \left(\rho+\frac{1}{N}\right)\OPT(\sigma_{\textit{taxi}})+c,
	\end{align*}
	where the last inequality follows from \eqref{eq:easyA}.
\end{proof}}
\section{Conclusion and open problems}

The most important open problem is whether there exists an algorithm for the hard $k$-taxi problem on general metric spaces with competitive ratio based only on $k$, i.e., avoiding the dependency on $n$ in Corollary~\ref{cor:general}. We know that the Work Function Algorithm, which achieves the best known upper bound of $2k-1$ for the $k$-server problem, has unbounded competitive ratio for the hard $k$-taxi problem, even for $k=2$. However, the generalized Work Function Algorithm, a less greedy variant, may be competitive.  This algorithm is $O(k2^k)$-competitive for $k$-LGT \cite{Burley96}, but we do not see any direct way to adapt the proof to yield a similar competitive ratio for the hard $k$-taxi problem. In any case, the connection between the $k$-taxi and $k$-LGT problems is intriguing. Another way to obtain an $f(k)$-competitive algorithm for general metrics may be via dynamically updating the HST embedding, similarly to~\cite{Lee18}.

We believe that our algorithm for three taxis on the line can be the foundation to solve the problem more generally, i.e., for general $k$ and/or more general metrics such as trees or arbitrary metrics. One interesting metric space --- due to its obvious application to the $k$-taxi problem --- is the $2$-dimensional $\ell_1$-norm (also known as taxicab metric and Manhattan distance). The lower bounds of $2^k-1$ hold even for the line and the $\ell_1$-norm: This is because the \emph{binary} $\alpha$-HSTs from Theorems~\ref{thm:LbBinHST} and \ref{thm:LbMemoryless} can be embedded into the line, with a distortion tending to $1$ as $\alpha\to\infty$.

For HSTs, the main open question is whether with memory and against oblivious adversaries one can break the exponential barrier. We conjecture that the competitive ratio of $2^k-1$ on HSTs can also be achieved by a deterministic algorithm, namely the Double Coverage algorithm~\cite{ChrobakL91}. For $k=2$ this can be shown using the same potential as for \textsc{Flow} (and the fact that root-leaf-paths have the same length), however it is easy to see that this potential fails for $k>2$. For weighted star metrics one can show that Double Coverage achieves the optimal competitive ratio for the hard $k$-taxi problem, and this is $2k-1$.\footnote{For the upper bound, one can use a weighted matching as potential, where distances between online taxis and the root are scaled by a factor $2k-1$.} If our conjecture holds, then this would mean that the deterministic competitive ratio is identical to the randomized memoryless competitive ratio against oblivious adversaries on HSTs. Notice that the same is known to be true for the $k$-server problem at least on some metric spaces, where tight bounds of $k$ are known for both deterministic as well memoryless randomized algorithms (cf.~\cite{Koutsoupias09}; see also~\cite{RaghavanS94}). It would be interesting to prove this as a generic result for a broad class of online problems.

\bibliography{bibliography}{}

\begin{thebibliography}{10}

\bibitem{BansalBMN11}
Nikhil Bansal, Niv Buchbinder, Aleksander Madry, and Joseph Naor.
\newblock A polylogarithmic-competitive algorithm for the k-server problem.
\newblock In {\em {IEEE} 52nd Annual Symposium on Foundations of Computer
  Science, {FOCS} 2011}, pages 267--276, 2011.
\newblock \href {http://dx.doi.org/10.1109/FOCS.2011.63}
  {\path{doi:10.1109/FOCS.2011.63}}.

\bibitem{BansalEJK17}
Nikhil Bansal, Marek Eli{\'a}{\v s}, {\L}ukasz Je{\.z}, and Grigorios
  Koumoutsos.
\newblock The $(h,k)$-server problem on bounded depth trees.
\newblock In {\em Proceedings of the {Twenty}-{Eighth} {Annual} {ACM}-{SIAM}
  {Symposium} on {Discrete} {Algorithms}, 2017}, pages 1022--1037. SIAM, 2017.

\bibitem{BansalEJKP15}
Nikhil Bansal, Marek Eli{\'a}{\v s}, {\L}ukasz Je{\.z}, Grigorios Koumoutsos,
  and Kirk Pruhs.
\newblock Tight bounds for double coverage against weak adversaries.
\newblock In {\em International {Workshop} on {Approximation} and {Online}
  {Algorithms}}, pages 47--58. Springer, 2015.

\bibitem{BansalEK17}
Nikhil Bansal, Marek Eli{\'{a}}s, and Grigorios Koumoutsos.
\newblock Weighted k-server bounds via combinatorial dichotomies.
\newblock In {\em Proceedings of the 58th Annual IEEE Symposium on Foundations
  of Computer Science}, FOCS '17, pages 493--504, 2017.

\bibitem{Bartal96}
Yair Bartal.
\newblock Probabilistic approximation of metric spaces and its algorithmic
  applications.
\newblock In {\em In 37th Annual Symposium on Foundations of Computer Science},
  pages 184--193, 1996.

\bibitem{Ben-DavidBKTW94}
Shai Ben{-}David, Allan Borodin, Richard~M. Karp, G{\'{a}}bor Tardos, and Avi
  Wigderson.
\newblock On the power of randomization in on-line algorithms.
\newblock {\em Algorithmica}, 11(1):2--14, 1994.
\newblock \href {http://dx.doi.org/10.1007/BF01294260}
  {\path{doi:10.1007/BF01294260}}.

\bibitem{BubeckCLLM18}
S{\'{e}}bastien Bubeck, Michael~B. Cohen, Yin~Tat Lee, James~R. Lee, and
  Aleksander Madry.
\newblock k-server via multiscale entropic regularization.
\newblock In {\em Proceedings of the 50th Annual {ACM} {SIGACT} Symposium on
  Theory of Computing, {STOC} 2018}, pages 3--16, 2018.
\newblock \href {http://dx.doi.org/10.1145/3188745.3188798}
  {\path{doi:10.1145/3188745.3188798}}.

\bibitem{Burley96}
William~R. Burley.
\newblock Traversing layered graphs using the work function algorithm.
\newblock {\em J. Algorithms}, 20(3):479--511, 1996.

\bibitem{ChrobakKPV91}
Marek Chrobak, Howard Karloff, Tom Payne, and Sundar Vishwanathan.
\newblock New results on server problems.
\newblock {\em SIAM Journal on Discrete Mathematics}, 4(2):172--181, 1991.

\bibitem{ChrobakL91}
Marek Chrobak and Lawrence~L. Larmore.
\newblock An optimal on-line algorithm for k servers on trees.
\newblock {\em SIAM Journal on Computing}, 20(1):144--148, 1991.

\bibitem{CoesterKL17}
Christian Coester, Elias Koutsoupias, and Philip Lazos.
\newblock The infinite server problem.
\newblock In {\em 44th International Colloquium on Automata, Languages, and
  Programming, {ICALP} 2017}, pages 14:1--14:14, 2017.
\newblock \href {http://dx.doi.org/10.4230/LIPIcs.ICALP.2017.14}
  {\path{doi:10.4230/LIPIcs.ICALP.2017.14}}.

\bibitem{CoppersmithDRS93}
Don Coppersmith, Peter Doyle, Prabhakar Raghavan, and Marc Snir.
\newblock Random walks on weighted graphs and applications to on-line
  algorithms.
\newblock {\em J. ACM}, 40(3):421--453, 1993.
\newblock \href {http://dx.doi.org/10.1145/174130.174131}
  {\path{doi:10.1145/174130.174131}}.

\bibitem{DehghaniEHLS17}
Sina Dehghani, Soheil Ehsani, MohammadTaghi Hajiaghayi, Vahid Liaghat, and
  Saeed Seddighin.
\newblock {Stochastic k-Server: How Should Uber Work?}
\newblock In {\em 44th International Colloquium on Automata, Languages, and
  Programming (ICALP 2017)}, volume~80 of {\em Leibniz International
  Proceedings in Informatics (LIPIcs)}, pages 126:1--126:14, 2017.

\bibitem{FakcharoenpholRT03}
Jittat Fakcharoenphol, Satish Rao, and Kunal Talwar.
\newblock A tight bound on approximating arbitrary metrics by tree metrics.
\newblock In {\em Proceedings of the Thirty-fifth Annual ACM Symposium on
  Theory of Computing}, STOC '03, pages 448--455, 2003.
\newblock \href {http://dx.doi.org/10.1145/780542.780608}
  {\path{doi:10.1145/780542.780608}}.

\bibitem{FiatFKRRV98}
Amos Fiat, Dean~P. Foster, Howard~J. Karloff, Yuval Rabani, Yiftach Ravid, and
  Sundar Vishwanathan.
\newblock Competitive algorithms for layered graph traversal.
\newblock {\em {SIAM} J. Comput.}, 28(2):447--462, 1998.

\bibitem{FiatRR90}
Amos Fiat, Yuval Rabani, and Yiftach Ravid.
\newblock Competitive k-server algorithms (extended abstract).
\newblock In {\em 31st Annual Symposium on Foundations of Computer Science},
  pages 454--463, 1990.
\newblock \href {http://dx.doi.org/10.1109/FSCS.1990.89566}
  {\path{doi:10.1109/FSCS.1990.89566}}.

\bibitem{Kosoresow96}
Andrew~P. Kosoresow.
\newblock {\em Design and analysis of online algorithms for mobile server
  applications}.
\newblock PhD thesis, Stanford University, 1996.

\bibitem{Koutsoupias09}
Elias Koutsoupias.
\newblock The k-server problem.
\newblock {\em Computer Science Review}, 3(2):105--118, 2009.

\bibitem{KoutsoupiasP95}
Elias Koutsoupias and Christos~H. Papadimitriou.
\newblock On the k-server conjecture.
\newblock {\em Journal of the ACM (JACM)}, 42(5):971--983, 1995.

\bibitem{Lee18}
James~R. Lee.
\newblock Fusible {HSTs} and the randomized k-server conjecture.
\newblock In {\em Proceedings of the 59th Annual IEEE Symposium on Foundations
  of Computer Science}, FOCS '18, pages 438--449, 2018.

\bibitem{ManasseMS88}
Mark Manasse, Lyle McGeoch, and Daniel Sleator.
\newblock Competitive algorithms for on-line problems.
\newblock In {\em Proceedings of the twentieth annual {ACM} {Symposium on
  Theory of Computing}}, pages 322--333. ACM, 1988.

\bibitem{PapadimitriouY91}
Christos~H. Papadimitriou and Mihalis Yannakakis.
\newblock Shortest paths without a map.
\newblock {\em Theoretical Computer Science}, 84(1):127--150, 1991.
\newblock \href {http://dx.doi.org/10.1016/0304-3975(91)90263-2}
  {\path{doi:10.1016/0304-3975(91)90263-2}}.

\bibitem{RaghavanS94}
Prabhakar Raghavan and Marc Snir.
\newblock Memory versus randomization in on-line algorithms.
\newblock {\em {IBM} Journal of Research and Development}, 38(6):683--708,
  1994.
\newblock \href {http://dx.doi.org/10.1147/rd.386.0683}
  {\path{doi:10.1147/rd.386.0683}}.

\bibitem{Ramesh95}
H.~Ramesh.
\newblock On traversing layered graphs on-line.
\newblock {\em J. Algorithms}, 18(3):480--512, 1995.
\newblock \href {http://dx.doi.org/10.1006/jagm.1995.1019}
  {\path{doi:10.1006/jagm.1995.1019}}.

\end{thebibliography}
\bibliographystyle{plainurl}

\end{document}